\def\R{\mathbb R}
\def\d{\mathrm{d}}
\def\eps{\varepsilon}
\def\hdown{\mathrel{\raisebox{-\depth}{\rotatebox[origin=c]{180}{$ h $}}}}
\renewcommand{\vec}[1]{\mbox{\boldmath$#1$}}
\def\vecnabla{\vec{\nabla}}
\newtheorem{theorem}{Theorem}
\newtheorem{lemma}{Lemma}
\newtheorem{corollary}{Corollary}
\newtheorem{hypothesis}{Hypothesis}
\newtheorem{remark}{Remark}
\newtheorem{definition}{Definition}
\newtheorem{property}{Property}
\begin{document}
\title{Globally optimal stretching foliations  of dynamical systems reveal the organizing skeleton of intensive instabilities}

\author{Sanjeeva Balasuriya}
\affil{School of Mathematical Sciences, University of Adelaide, Adelaide SA 5005, Australia}

\author{Erik M.~Bollt}
\affil{Department of Electrical and Computer Engineering and $C^3S^2$ the Clarkson Center for Complex Systems Science, Clarkson University, Potsdam, New York 13699}

%\date{\today}

\maketitle

\begin{abstract}
Understanding instabilities in dynamical systems drives to the heart of modern chaos theory, whether forecasting or attempting to control future outcomes.  Instabilities in the sense of
locally maximal stretching in maps is well understood, and is connected to the concepts of Lyapunov exponents/vectors, Oseledec spaces and the Cauchy--Green tensor.  In this paper, we extend the concept to global optimization of stretching, as this forms a skeleton organizing the general instabilities. The `map' is general but incorporates the inevitability
of finite-time as in any realistic application: it can be defined via a finite sequence of discrete maps, or a finite-time flow associated with a continuous dynamical
system.  Limiting attention to two-dimensions, we formulate the global optimization problem as one over a restricted
class of foliations, and establish the foliations which both maximize and minimize global stretching.  A classification of nondegenerate singularities of the foliations is obtained.  Numerical issues in
computing optimal foliations are examined, in particular insights into special curves along which foliations appear to
veer and/or do not cross, and foliation behavior near singularities.  Illustrations and validations of the results to
the H\'{e}non map, the double-gyre flow and the standard map are provided.
  \end{abstract}

{{\bf Keywords}: \it {Lyapunov vector , finite-time flow, punctured foliation}} 

{ {\bf Mathematics Subject Classification}: \it{		37B55, 37C60, 53C12 }}

%%Research highlights

%\captionsetup[subfigure]{justification=left,singlelinecheck=false}

%\setlength{\topfraction}{0.9}

%\renewcommand{\topfraction}{1}
%\renewcommand{\floatpagefraction}{0.6}

%%%%%%%%%%%%%%%%

%%Graphical abstract
\newpage
\section{Graphical Abstract}
\noindent Sanjeeva Balasuriya, Erik Bollt
\medskip

\vspace*{0.4cm}
\includegraphics[scale=0.38]{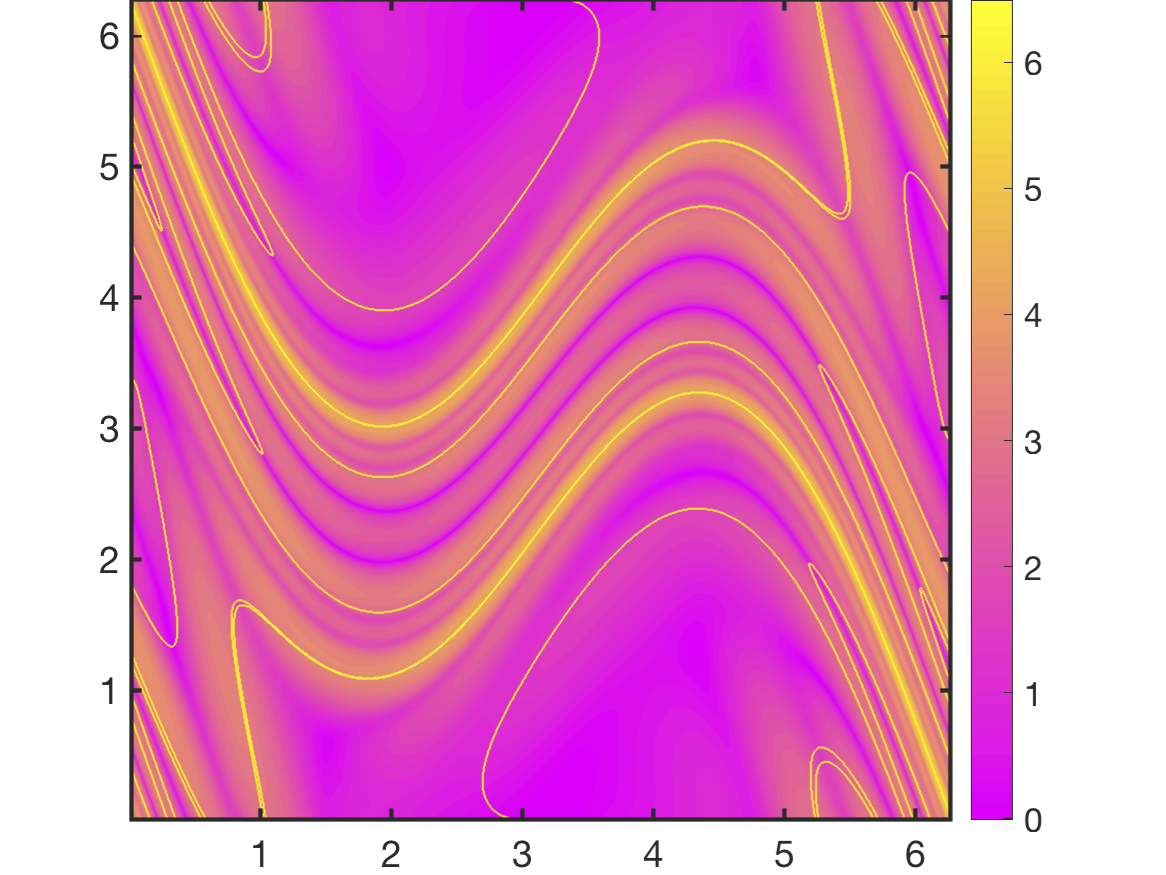} \hspace*{-1cm}
\includegraphics[scale=0.38]{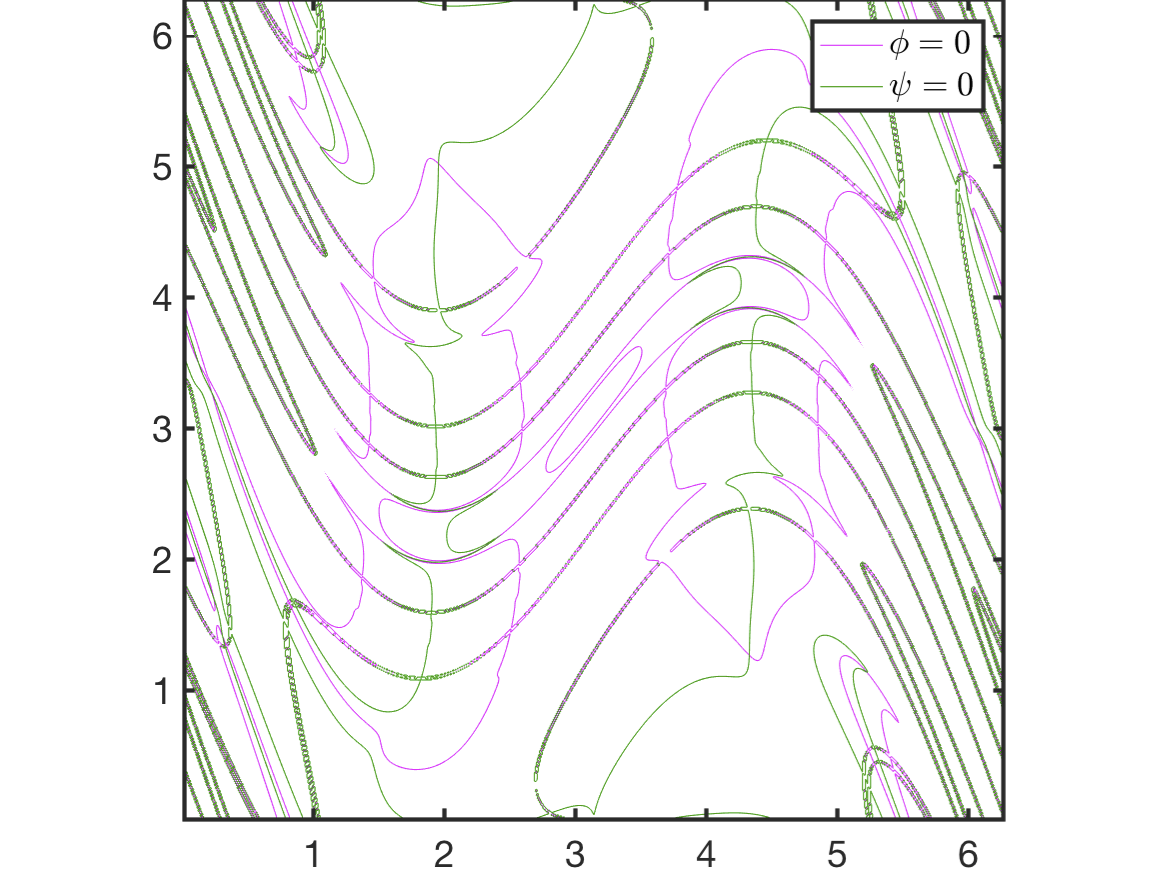} 

\vspace*{0.4cm}
\includegraphics[scale=0.39]{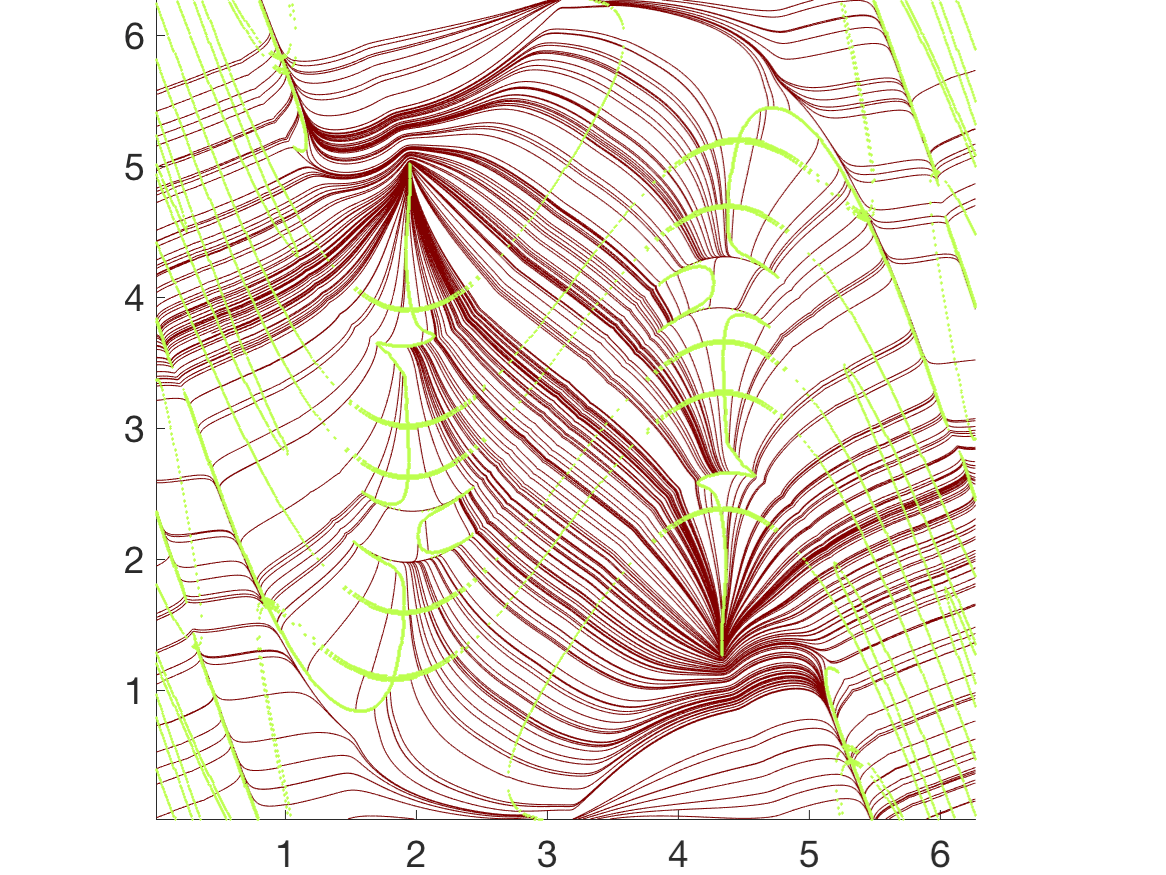} \hspace*{-1cm}
\includegraphics[scale=0.39]{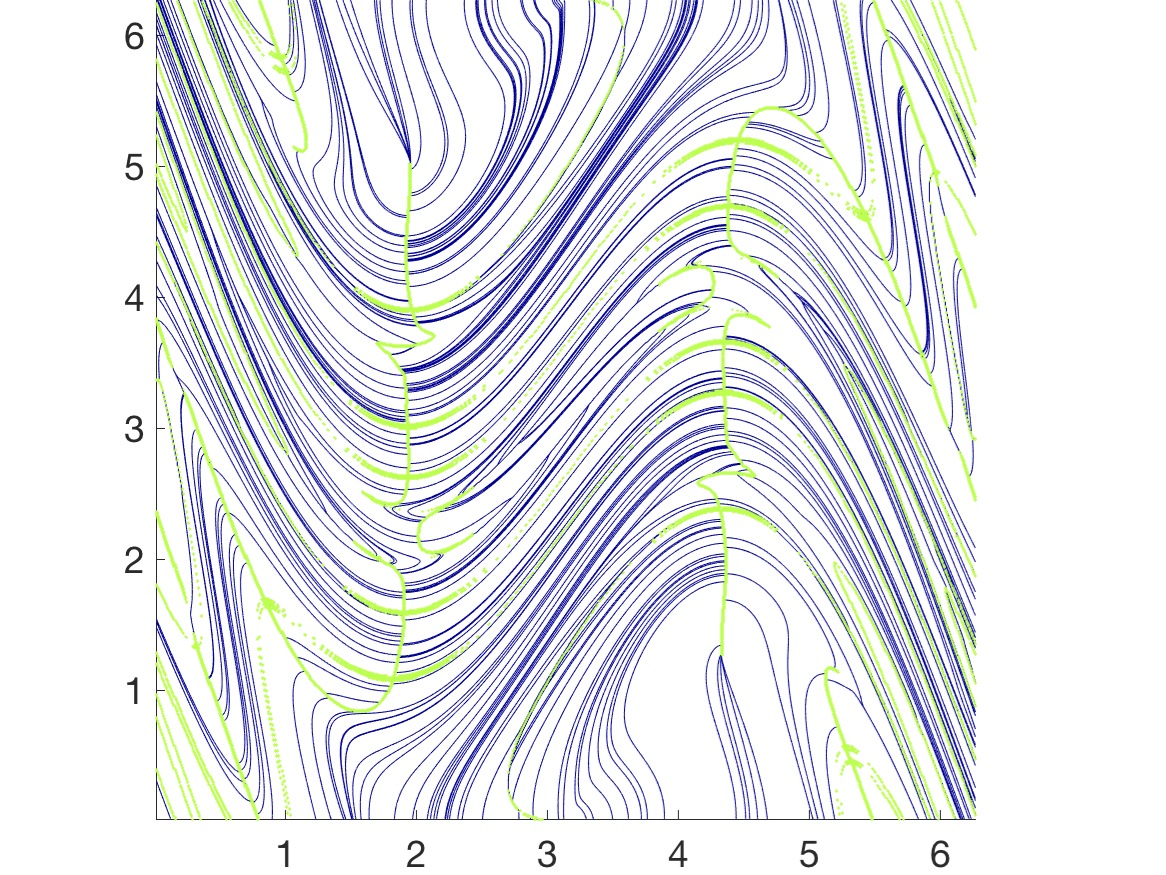}
%\includegraphics{grabs}
%\end{graphicalabstract}

\paragraph{Highlights}
\begin{enumerate}
\item Understanding the organizing skeleton of instability for orbits must be premised on  analysis of  globally optimal stretching.
\item Provides the theory to obtain foliation for globally optimizing stretching for any two-dimensional map (analytically
specified, derived from a finite-time flow or a sequence of maps, and/or given via data);
\item Classifies singularities and provides insight and solutions to spurious artefacts emerging when attempting
to numerically determine such a foliation;
\item Establishes connections with a range of well-established methods: locally optimizing stretching,
Cauchy--Green eigenvalues and singularities,
Lyapunov exponents, Lyapunov vectors, Oseledec spaces, and variational Lagrangian coherent structures.
\end{enumerate}

%\begin{keyword}
%%% keywords here, in the form: keyword \sep keyword
%Lyapunov vector \sep finite-time flow \sep punctured foliation
%%% PACS codes here, in the form: \PACS code \sep code
%
%%% MSC codes here, in the form: \MSC code \sep code
%%% or \MSC[2008] code \sep code (2000 is the default)
%\MSC[2010] 37B55 \sep 37C60 \sep 53C12
%\end{keyword}
%
%\end{frontmatter}

\newpage
\section{Introduction}
\label{sec:introduction}

A central topic of dynamical systems theory involves analysis of instabilities, since this is the central ideas behind the possibility of forecast time horizon, or even of ease of control of future outcomes.  The preponderance of work has involved analysis of local instability, whether by the Hartman-Grobman theorem and center manifold theorem \cite{guckenheimerholmes} for periodic orbits and similarly for invariant sets \cite{sackersell}.  For general orbits, local instability is characterized by Oseledec spaces \cite{oseledec} which
are identified via Lyapunov exponents \cite{shadden} and Lyapunov vectors \cite{wolfesamelson,ramasubramaniansriram}.  Via these techniques, 
{\em locally} optimizing stretching due to the operation of a map from subsets of $ \mathbb{R}^n $ to subsets
of $ \mathbb{R}^n $ is well-understood.  Computing the map's derivative matrix at each point is allows for 
computation of Oseledecs/Lyapunov information:  its singular values 
and corresponding singular vectors are respectively associated with stretching rates and relevant 
directions in the domain, and its (scaled) operator norm is the classical Lyapunov exponent of the orbit beginning at 
that point.

In this paper, we assert that understanding the global dynamics---how a system organizes orbits---is related to a global view of instabilities.  The related organizing skeleton of orbits must therefore be premised on analysis of  {\em globally optimal stretching}.  Here, orbits will be in relation to two-dimensional maps which can be derived from
various sources: a finite sequence of discrete maps, or a flow occurring over a finite time period.  The latter situation is
particularly relevant when seeking regions in unsteady flows which remain `coherent' over a given time period \cite{glcs}.  
In all these cases, we emphasize that we are {\em not} seeking to understand stretching in the infinite-time limit---which is
the focus in many classical approaches \cite{oseledec,sackersell}---but rather stretching associated with a one-step
map derived from any of these approaches.  From the applications perspective, the one-step map would be parametrized
by the discrete or continuous time over which the map operates, and this number would of necessity be finite in any
computational implementation.

When additionally seeking {\em global optimization}, the first issue is defining what this means with respect to a bounded open
domain on which the map operates.   In Section~\ref{sec:optimizing}, we pose this question as an
optimization over {\em foliations}, but need to restrict these foliations in a certain way because they would generically
have singularities.  We are able to characterize
the restricted foliations of optimal stretching (minimal or maximal)  in a straightforward geometric way, while establishing connections to
well-known local stretching optimizing entities.  We provide a complete classification of the nondegenerate singularities
using elementary arguments in Section~\ref{sec:singularity}, thereby easily identifying $ 1 $- and $ 3 $-pronged
singularities as the primary scenarios.  We argue in Section~\ref{sec:discontinuity} the inevitability of a `branch
cut' phenomenon if attempting to compute these restricted 
foliations using a vector field; this will generically possess discontinuities across one-dimensional curves which we can
characterize.  Other computational ramifications are addressed in Section~\ref{sec:computing_foliations}, which includes
issues of curves stopping abruptly when coming in horizontally or vertically, and veering along spurious curves.  We are able to give explicit insights into
the emergence of these issues as a result of standard numerical implementations, and we suggest an alternative integral-curve formulation which avoids these
difficulties.  In Section~\ref{sec:numerical}, we demonstrate computations of globally optimal restricted foliations for several well-known
examples:
the H\'{e}non map \cite{henon}, the Chirikov (standard) map \cite{chirikov}, and the double-gyre flow \cite{shadden},
each implemented over a finite time.
The aforementioned numerical issues are highlighted in these examples.

%%%%%%%%%%%%%%%%
\section{Globally optimizing stretching}
\label{sec:optimizing}

Let $ \Omega $ be a bounded two-dimensional subset of $ \R^2 $ consisting of a finite union of connected open sets, each
of whose closure has at most a finite number of boundary components.   So $ \Omega $ may, for example, consist of disconnected open sets 
and/or entities which are topologically equivalent to the interior of an annulus.   We will use
$ (x,y) $ to denote points in $ \Omega $.  Let 
$ \vec{F} $ be a map on $ \Omega $ to $ \R^2 $ which is given componentwise by
\begin{equation}
\vec{F} \left(x,y \right) = \left( \begin{array}{c} u(x,y) \\ v(x,y) \end{array} \right) \, .
\label{eq:map}
\end{equation}

\begin{hypothesis}[Smoothness of $ \vec{F} $]
\label{hypothesis:smooth}
Let the map $ \vec{F} \in {\mathrm{C}}^2(\Omega) $.
\end{hypothesis}

Physically, we note that $ \vec{F} $ can be generated in various ways.  It can be simply one iteration of a given map, multiple (finitely-many) iterations of a map, or even the application of a finite {\em sequence} of maps.  It can also be the flow-map generated from a nonautonomous flow in two-dimensions over a 
finite time.  In this sense, $ \vec{F} $ encapsulates the fact that {\em finiteness} is inevitable in any numerical, experimental or observational situation, 
while allowing for both discrete and continuous time, as well as nonautonomy.  The time over which the system operates 
can be thought of as a parameter which is encoded within $ \vec{F} $, and its effect can be investigated if needed by
varying this parameter.
 
The relative stretching of a tiny line (of length $ \delta > 0 $) placed at a point $ (x,y) $ in $ \Omega $, with an orientation given by
$ \theta \in [-\pi/2,\pi/2) $ due to the action of $ \vec{F} $ is
\[
\Lambda(x,y,\theta) = \lim_{\delta \rightarrow 0} \frac{\left\| \vec{F}\left( x + \delta \cos \theta, y + \delta \sin \theta \right)
- \vec{F}(x,y) \right\|}{\delta} \, .
\]
This is the magnitude of $ \vec{F} $'s directional derivative in the $ \theta $ direction. It is clear that 
\begin{equation}
\hspace*{-0.7cm} \Lambda(x,y,\theta) := \left\| \vecnabla \vec{F}(x,y) \left( \begin{array}{c} \cos \theta \\ \sin \theta \end{array} \right) \right\| 
= \left\| \left( \! \! \begin{array}{cc} u_x(x,y) & u_y(x,y) \\ v_x(x,y) & v_y(x,y) \end{array} \! \! \right) \,  \left( \! \begin{array}{c} \cos \theta \\ \sin \theta \end{array} \! \right) \right\| \, .
\label{eq:stretching}
\end{equation}
We refer to $ \Lambda(x,y,\theta) $ in (\ref{eq:stretching}) as the {\em local stretching} associated with a point
$ (x,y) \in \Omega $; note that this also depends on a choice of angle $ \theta $ in which an infinitesimal line is 
to be positioned.  
If we take the supremum over all $ \theta \in [-\pi/2,\pi/2) $ of the right-hand side of (\ref{eq:stretching}), we would 
get the operator
(matrix) norm $ \left\| \vecnabla \vec{F} \right\| $, computable for example via {\em Cauchy--Green tensor}
\begin{equation}
C(x,y) := \left[ \vecnabla \vec{F} (x,y) \right]^\top \vecnabla \vec{F}(x,y) \, .
\label{eq:cauchygreen}
\end{equation}
Thus, our development has close relationships to well-established
methods related to the Cauchy--Green tensor, finite-time Lyapunov exponents, and methods for determining Lagrangian
coherent structures, which we describe in more detail in \ref{sec:localstretching}.  However, at this stage
our local stretching definition in (\ref{eq:stretching}) is $ \theta $-dependent.

\begin{definition}[Isotropic and remaining sets]
\label{def:isotropic}
The {\em isotropic set} $ I \subset \Omega $ is defined by
\begin{equation}
I := \left\{ (x,y) \in \Omega \, \, : \, \, \frac{\partial \Lambda(x,y,\theta)}{\partial \theta} = 0 \, \right\} \, ,
\label{eq:puncture}
\end{equation}
and the {\em remaining set} is
\begin{equation}
\Omega_0 := \Omega \setminus I \, .
\label{eq:remaining}
\end{equation}
\end{definition}
 
 The isotropic set $ I $ consists of points at which the local stretching does not depend on directionality of a local
 line segment.  Given the smoothness we have assumed in $ \vec{F} $, $ I $ must be a `nice' closed set; it cannot, for example, be
 fractal.  In general, $ I $ may be empty, equal to $ \Omega $, or consist of a mixture of finitely many isolated
 points and closed regions of $ \Omega $. 
  
We are seeking a partition of $ \Omega $ into a family of nonintersecting curves, such that global stretching is optimized
in a way to be made specific.  
Since the local stretching at points in $ I $ is impervious to the directionality of lines passing through them, these
families of curves only need be defined on $ \Omega_0 = \Omega \setminus I $, with the understanding that this has nonempty interior.  In more formal language, we need to 
think of singular codimension-$ 1 $ foliations on $ \Omega $, whose singularities are restricted to $ I $.  We codify this
in terms of the required geometric properties of the family of curves:

\begin{definition}[Restricted foliation]
\label{definition:foliation}
A {\em restricted foliation}, $ f $,  on $ \Omega $ consists of a family of curves defined in the remaining set 
$ \Omega_0 $ such that
\begin{itemize}
\item[(a)] The curves of $ f $ (`the leaves of the foliation') are disjoint;
\item[(b)] The union of all these curves covers $ \Omega_0 $;
\item[(c)] The tangent vector varies in a $ {\mathrm{C}}^1 $-smooth fashion along each curve.
\end{itemize}
\end{definition}

Our definition is consistent with the local properties expected from a formal definition of foliations on manifolds
\cite{lawson}, but bears in mind that $ \Omega_0 $ is not a manifold because of the omission of the closed set
$ I $ from $ \Omega $.  
We remark that if $ I $ consists of a finite number of points, our restricted foliation definition is equivalent to that of a 
`punctured foliation' \cite{mosher} on $ \Omega $, where the punctures are at the points in $ I $.  This turns out to be a generic
expectation for $ I $, and we will examine this (both theoretically and numerically) in more detail later.

The properties of Definition~\ref{definition:foliation} ensure that every restricted foliation $ f $ is associated with a unique $ {\mathrm{C}}^1 $-smooth {\em angle field} on the remaining set $ \Omega_0 $  in the following sense.  Given a
point $ (x,y) \in \Omega_0 $, there exists a unique curve from $ f $ which passes through it.  
The tangent line drawn at this point makes an 
angle $ \theta_f $ with the positive $ x $-axis.  This angle can always be chosen uniquely modulo $ \pi $, from the set 
$ [-\pi/2,\pi/2) $: vertical lines have $ \theta_f = - \pi/2 $, while horizontal lines have $ \theta = 0 $.  Thus,
every foliation induces a unique angle field $ \theta_f: \Omega_0 \rightarrow [-\pi/2,\pi/2) $ (modulo $ \pi $).  The angle
field must be $ {\mathrm{C}}^1 $-smooth to complement the continuous variation in the tangent spaces of
$ f $'s leaves.  Conversely, suppose a $ {\mathrm{C}}^1 $-smooth angle field $ \theta_f: \Omega_0 \rightarrow [-\pi/2,\pi/2) $ (modulo $ \pi $)  is given.
Given an arbitrary point $ (x_\alpha,y_\alpha) \in \Omega_0 $,  the existence of solutions to the differential equation
\[
\left( \sin \theta_f(x,y) \right) \d x - \left( \cos \theta_f(x,y) \right) \d y = 0 \, 
\]
passing through the point $ (x_\alpha,y_\alpha) $ ensures that there is an integral curve of the form $ g_\alpha(x,y) = 0 $, in which
$ g_\alpha $ is $ {\mathrm{C}}^1 $-smooth in both arguments.  This is possible for each and every $ (x_\alpha,y_\alpha) \in  \Omega_0 $, and uniqueness ensures
that the curves $ g_\alpha(x,y) = 0 $ do not intersect one another.  Moreover, $ \Omega_0 $ is spanned by $ \bigcup_\alpha \left\{ (x,y) \, : \, g_\alpha(x,y) = 0 \right\} $ because $ \Omega_0 = \bigcup_\alpha \left\{ (x_\alpha,y_\alpha) \right\} $, ensuring that there is a curve passing through every point $ (x_\alpha,y_\alpha) $.  Hence, this process generates a unique restricted foliation $ f $ on $ \Omega_0 $.

We are now in a position to define the global stretching which we seek to optimize.

\begin{definition}[Global stretching]
{\em 
Given any restricted foliation $ f $, we define the {\em global stretching} on $ \Omega $ as the local stretching
integrated over $ \Omega $, i.e., 
\begin{equation}
\Sigma_f := \int \! \! \! \! \int_{\Omega_0} \Lambda \left( x, y, \theta_f(x,y) \right) \, \d x \, \d y 
+ \int \! \! \! \! \int_I \Lambda \left( x, y, \centerdot \right) \, \d x \, \d y \, , 
\label{eq:globalstretching}
\end{equation}
in which $ \theta_f $ is the angle field induced by a choice of restricted foliation $ f $.
}
\end{definition}
Notice that the integral over the full domain $ \Omega $ has been split into one
over $ \Omega_0 $ (on which $ f $ and thus $ \theta_f $ is well-defined) and over $ I $ (over which the directionality
has no influence on $ \Lambda $, and has thus been omitted).  Thus, any understanding of foliation leaves on
$ I $ is irrelevant to the global stretching, motivating our definition of restricted foliation defined only on $ \Omega_0 $.

As central premise of this work, 
we seek restricted foliations $ f $ which optimize (maximize, as well as minimize) $ \Sigma_f $.  Partitions of 
$ \Omega_0 $ which are extremal in this way represent the greatest instability or most stability associated with the dynamical system, and so orbits associated with these are distinguished for their corresponding difficulties in forecasting, or alternatively, relative coherence.
Before we state the main theorems, some notation is needed.  On $ \Omega $, we define the 
$ {\mathrm{C}}^1 $-smooth functions
\begin{equation}
\phi(x,y) = \frac{u_x(x,y)^2 + v_x(x,y)^2 - u_y(x,y)^2 - v_y(x,y)^2}{2}
\label{eq:phi}
\end{equation}
and
\begin{equation}
\psi(x,y) = u_x(x,y) u_y(x,y) + v_x(x,y) v_y(x,y)
\label{eq:psi}
\end{equation}
in terms of the partial derivatives $ u_x $, $ u_y $, $ v_x $ and $ v_y $ of the mapping $ \vec{F} $.  
First, we show the connection between zero level sets of $ \phi $ and $\psi $ and the isotropic set $ I $.

\begin{lemma}[Isotropic set]
\label{lemma:puncture}
The isotropic set $ I $ defined in (\ref{eq:puncture}) can be equivalently characterized by
\begin{equation}
I := \left\{ (x,y) \in \Omega \, : \, \phi(x,y) = 0 \, \, {\mathrm{and}} \, \, \psi(x,y) = 0 \right\} \, ,
\label{eq:B0}
\end{equation}
\end{lemma}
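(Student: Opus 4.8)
The plan is to collapse the whole statement into one elementary identity for $\Lambda^2$ followed by a short case analysis. First I would write the Cauchy--Green tensor (\ref{eq:cauchygreen}) explicitly as
\[
C(x,y) = \left( \begin{array}{cc} a & \psi \\ \psi & c \end{array} \right), \qquad a := u_x^2 + v_x^2, \quad c := u_y^2 + v_y^2,
\]
with $\psi$ as in (\ref{eq:psi}), so that from (\ref{eq:stretching}),
\[
\Lambda(x,y,\theta)^2 = \left( \cos\theta \; \sin\theta \right) C(x,y) \left( \begin{array}{c} \cos\theta \\ \sin\theta \end{array} \right) = a\cos^2\theta + 2\psi\cos\theta\sin\theta + c\sin^2\theta .
\]
Applying the double-angle identities $\cos^2\theta = (1+\cos 2\theta)/2$, $\sin^2\theta = (1-\cos 2\theta)/2$, $2\cos\theta\sin\theta = \sin 2\theta$, and recognizing $(a-c)/2 = \phi$ from (\ref{eq:phi}), this becomes
\[
\Lambda(x,y,\theta)^2 = \frac{a+c}{2} + \phi(x,y)\cos 2\theta + \psi(x,y)\sin 2\theta = \frac{a+c}{2} + \sqrt{\phi^2+\psi^2}\;\cos\!\left( 2\theta - \delta(x,y) \right)
\]
for an appropriate phase $\delta$. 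The content of this step is that all of the $\theta$-dependence of $\Lambda^2$ sits in a single sinusoid whose amplitude is exactly $\sqrt{\phi^2+\psi^2}$.

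Next I would establish the equivalence of (\ref{eq:puncture}) and (\ref{eq:B0}), reading (\ref{eq:puncture}) as the statement that $\Lambda(x,y,\cdot)$ is constant on $[-\pi/2,\pi/2)$ (as the accompanying text indicates), and splitting on the rank of $\vecnabla\vec F(x,y)$. If $\vecnabla\vec F$ is invertible then $\Lambda>0$ and is $\mathrm{C}^1$ in $\theta$, so $\partial_\theta\Lambda = \partial_\theta(\Lambda^2)/(2\Lambda)$ and $\Lambda$ is $\theta$-independent iff $\partial_\theta(\Lambda^2) = -2\phi\sin 2\theta + 2\psi\cos 2\theta$ vanishes identically; evaluating at $\theta = 0$ and $\theta = \pi/4$ shows this forces $\psi = 0$ and $\phi = 0$, and the converse is immediate. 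If $\vecnabla\vec F = 0$ then $a = c = \phi = \psi = 0$ and $\Lambda\equiv 0$, so $(x,y)$ lies in both sets. The remaining case is $\mathrm{rank}\,\vecnabla\vec F = 1$: then $C$ is symmetric positive semidefinite with trace $a+c = \lambda_1 > 0$ and determinant $ac - \psi^2 = 0$, so $\phi^2+\psi^2 = ((a-c)/2)^2 + \psi^2 = ((a+c)/2)^2 - (ac-\psi^2) = (\lambda_1/2)^2 > 0$, excluding $(x,y)$ from the right-hand set; and since here $\Lambda^2 = (\lambda_1/2)\,(1+\cos(2\theta-\delta))$ genuinely varies with $\theta$, $\Lambda$ is not constant and $(x,y)\notin I$ either. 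This exhausts the cases and proves the stated equality, and the $\mathrm{C}^1$-smoothness of $\phi,\psi$ (hence that $I$ is a genuine closed set, as asserted after Definition~\ref{def:isotropic}) is immediate from Hypothesis~\ref{hypothesis:smooth}.

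I expect the only real obstacle to be the last case above: $\theta\mapsto\Lambda$ fails to be differentiable precisely where $\Lambda$ vanishes, so one cannot simply trade "$\partial_\theta\Lambda \equiv 0$" for "$\partial_\theta(\Lambda^2)\equiv 0$" without first ruling out that a rank-deficient $\vecnabla\vec F$ could slip a point into $I$; the trace/determinant computation is exactly what closes this gap. Everything else is the routine linear algebra and trigonometry of the first paragraph.
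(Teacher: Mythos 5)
Your proposal is correct and follows essentially the same route as the paper: both rest on the identity $\Lambda^2 = \tfrac{1}{2}\left(\left|\vecnabla u\right|^2+\left|\vecnabla v\right|^2\right) + \phi\cos 2\theta + \psi\sin 2\theta$, from which $\theta$-independence is equivalent to $\phi=\psi=0$ by linear independence of $\{1,\cos 2\theta,\sin 2\theta\}$. The only difference is that your rank case analysis explicitly justifies passing between $\partial_\theta\Lambda$ and $\partial_\theta(\Lambda^2)$ where $\Lambda$ may vanish, a point the paper's proof passes over silently; this is a welcome tightening but not a different argument.
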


\begin{proof}
See \ref{sec:puncture}.
\end{proof}

We reiterate from this recharacterization of $ I $ that generically, it will consist of finitely many points (at which the curves $ \phi(x,y) = 0 $ intersect the
curves $ \psi(x,y) = 0 $), but may contain curve segments (if the two curves are tangential in a region), or 
areas (if both $ \phi $ and $ \psi $ are zero in two-dimensional regions).   Even for the generic case (finitely
many isolated points), we will see that $ I $ will strongly influence the nature of the optimal foliations in $ \Omega_0 $.  

Next,
we define the angle field $ \theta^+: \Omega_0 \rightarrow [-\pi/2,\pi/2) $ by
\begin{equation}
\theta^+(x,y) := \frac{1}{2} \, \tilde{\tan}^{-1} \left( \psi(x,y), \phi(x,y) \right) \qquad  ({\mathrm{mod}} \, \pi) \, , 
\label{eq:thetaplus}
\end{equation}
in terms of the four-quadrant inverse tangent function $ \tilde{\tan}^{-1} (\tilde{y},\tilde{x}) $ (sometimes called atan2 in computer science applications, which assigns the angle in 
$ [-\pi,\pi) $ associated with the quadrant in $ \left( \tilde{x}, \tilde{y} \right) $-space when computing $ \tan^{-1}(\tilde{y}/\tilde{x}) $). 
We also define the angle field $ \theta^-: \Omega_0 \rightarrow [-\pi/2,\pi/2) $ by
\begin{equation}
\theta^-(x,y) = \frac{\pi}{2} + \frac{1}{2} \, \tilde{\tan}^{-1} \left( \psi(x,y), \phi(x,y) \right) \qquad  ({\mathrm{mod}} \, \pi) \, ,
\label{eq:thetaminus}
\end{equation}
and observe that
\begin{equation}
\theta^+(x,y) - \theta^-(x,y)  = - \frac{\pi}{2} \qquad ({\mathrm{mod}} \, \pi) \, .
\label{eq:thetaorthogonal}
\end{equation}

\begin{lemma}[Equivalent characterizations of angle fields, $ \theta^\pm $]
\label{lemma:theta}
On $ \Omega_0 $, $ \theta^\pm \in [-\pi/2,\pi/2) $ are representable as
\begin{equation}
\theta^+(x,y)  := \tan^{-1}  \frac{ - \phi(x,y) + \sqrt{ \phi(x,y)^2 + \psi(x,y)^2}}{\psi(x,y)} \qquad  ({\mathrm{mod}} \,  \pi) \label{eq:thetaplus2}
\end{equation}
and
\begin{equation}
\theta^-(x,y) := \tan^{-1}  \frac{ - \phi(x,y) - \sqrt{ \phi(x,y)^2 + \psi(x,y)^2}}{\psi(x,y)} \qquad  ({\mathrm{mod}}  \, \pi)  \, . 
\label{eq:thetaminus2}
\end{equation}
\end{lemma}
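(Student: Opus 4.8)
The plan is to reduce both identities to the two scalar relations
\[
\cos\bigl(2\theta^+(x,y)\bigr)=\frac{\phi(x,y)}{\sqrt{\phi(x,y)^2+\psi(x,y)^2}},\qquad
\sin\bigl(2\theta^+(x,y)\bigr)=\frac{\psi(x,y)}{\sqrt{\phi(x,y)^2+\psi(x,y)^2}},
\]
which hold on $\Omega_0$ because, by Lemma~\ref{lemma:puncture}, the quantity $r:=\sqrt{\phi^2+\psi^2}$ is strictly positive there, and because the four-quadrant inverse tangent returns, by definition, the unique angle in $[-\pi,\pi)$ having the prescribed cosine and sine. The factor $\tfrac12$ in (\ref{eq:thetaplus}) then places $\theta^+$ in $[-\pi/2,\pi/2)$ automatically, so the reduction modulo $\pi$ in (\ref{eq:thetaplus}) is vacuous, while $\theta^-$ is fixed from $\theta^+$ by (\ref{eq:thetaorthogonal}); once the two scalar relations are established, the lemma becomes a half-angle computation.

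Concretely, on the set where $\psi\neq0$ the identity $\tan(\beta/2)=(1-\cos\beta)/\sin\beta$ with $\beta=2\theta^+$ gives
\[
\tan\theta^+=\frac{1-\phi/r}{\psi/r}=\frac{r-\phi}{\psi}=\frac{-\phi+\sqrt{\phi^2+\psi^2}}{\psi},
\]
which is (\ref{eq:thetaplus2}); since $\theta^+\in(-\pi/2,\pi/2)$ and $\tan$ is injective there, applying $\tan^{-1}$ modulo $\pi$ recovers $\theta^+$ exactly. For $\theta^-$, either use $\cot(\beta/2)=(1+\cos\beta)/\sin\beta$ to obtain $\cot\theta^+=(r+\phi)/\psi$ together with $\tan\theta^-=\tan(\theta^++\pi/2)=-\cot\theta^+$, or apply the half-angle identity directly at $\beta=2\theta^-=2\theta^++\pi$; either route yields
\[
\tan\theta^-=-\frac{r+\phi}{\psi}=\frac{-\phi-\sqrt{\phi^2+\psi^2}}{\psi},
\]
which is (\ref{eq:thetaminus2}). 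Passage between the two rationalized forms $(r\mp\phi)/\psi$ and $\psi/(r\pm\phi)$ is just multiplication of numerator and denominator by $r\pm\phi$, using $r^2-\phi^2=\psi^2$. As an alternative (and as a way to also pin down the branch of $2\theta^+$, not merely its value modulo $\pi$), one can run the argument in reverse: substitute $t=(r-\phi)/\psi$ into $\cos 2\theta^+=(1-t^2)/(1+t^2)$ and $\sin 2\theta^+=2t/(1+t^2)$ and simplify with $r^2=\phi^2+\psi^2$ to recover $\phi/r$ and $\psi/r$.

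The one delicate point—and the place I expect the only real friction—is the locus $\psi=0$, where the denominators in (\ref{eq:thetaplus2})--(\ref{eq:thetaminus2}) vanish. On $\Omega_0$, Lemma~\ref{lemma:puncture} forces $\phi\neq0$ there. If $\phi<0$ (the vertical-leaf case) one has $\tilde{\tan}^{-1}(0,\phi)=\pm\pi$, hence $\theta^+=-\pi/2$, while $(-\phi+\sqrt{\phi^2})/\psi=-2\phi/\psi\to+\infty$ and $\tan^{-1}(+\infty)\equiv-\pi/2$ modulo $\pi$, so the identity holds with the standard convention for vertical leaves. If $\phi>0$ (the horizontal-leaf case $\theta^+=0$), the stated expression is the indeterminate $0/0$ and must be read as the continuous extension of $(r-\phi)/\psi=\psi/(r+\phi)$, whose value at $\psi=0$, $\phi>0$ is $0$, again matching; this is precisely where the two equivalent rationalized forms trade off being well-defined. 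So the proof is essentially the half-angle algebra above, the genuine work being this $\psi=0$ bookkeeping and the verification that the $({\mathrm{mod}}\,\pi)$ branch selections are consistent with the ranges of $\theta^\pm$.
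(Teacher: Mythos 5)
Your proof is correct, and it takes a cleaner route through the central computation than the paper does. The paper starts from $\tan 2\theta^+ = \psi/\phi$ via the double-angle formula, solves the resulting quadratic for $\tan\theta^+$ to get $(-\phi \pm \sqrt{\phi^2+\psi^2})/\psi$, and then must run a four-case analysis over the quadrants of $(\phi,\psi)$-space (plus the axes) to argue that the positive root is always the right one. You instead retain the full information $\cos 2\theta^+ = \phi/r$ and $\sin 2\theta^+ = \psi/r$ (which is exactly what the four-quadrant inverse tangent encodes, and is the same pair of relations the paper later writes down as its Eq.~(\ref{eq:sincostheta}) in the proof of Theorem~\ref{theorem:max}) and apply the half-angle identity $\tan(\beta/2)=(1-\cos\beta)/\sin\beta$. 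Because this identity is single-valued wherever $\sin\beta\neq 0$, the sign ambiguity never arises and the case analysis disappears; the only residual bookkeeping is the $\psi=0$ locus, which you handle consistently with the paper's Remark~\ref{remark:removable}. Your derivation of (\ref{eq:thetaminus2}) via $\tan\theta^-=-\cot\theta^+$ together with $(r-\phi)(r+\phi)=\psi^2$ is equivalent to the paper's verification that $(\tan\theta^+)(\tan\theta^-)=-1$. In short: same ingredients, but your choice of half-angle identity over root-selection-from-a-quadratic makes the argument shorter and less error-prone, at no loss of rigor.
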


\begin{proof}
See \ref{sec:theta}.
\end{proof}

\begin{remark}[Removable singularities at $ \psi = 0 $ and $ \phi \ne 0 $]
\label{remark:removable}
{\em 
While it appears that points where $ \psi = 0 $ but $ \phi \ne 0 $ are not in the domain of $ \theta^+ $ as written
in (\ref{eq:thetaplus2}) and (\ref{eq:thetaminus2}),
these turn out to be removable singularities, and thus can be thought of in the sense of keeping $ \phi $ constant
and taking the limit $ \psi \rightarrow 0 $.  More specifically, this implies that
\begin{equation}
\theta^+(x,y) \Big|_{\psi=0} = \left\{ 
\begin{array}{ll}
-\pi/2 & ~~~~{\mathrm{if}} \, \phi < 0 \\
0  & ~~~~{\mathrm{if}} \, \phi > 0
\end{array} \right. \, .
\label{eq:thetaplus_zero}
\end{equation}
With this understanding of dealing with the removable singularities, we will simply view (\ref{eq:thetaplus2}) as being
defined on $ \Omega_0 $.  Similarly, 
\begin{equation}
\theta^-(x,y) \Big|_{\psi=0} = \left\{ 
\begin{array}{ll}
0 & ~~~~{\mathrm{if}} \, \phi < 0 \\
-\pi/2  & ~~~~{\mathrm{if}} \, \phi > 0
\end{array} \right. \, .
\label{eq:thetaminus_zero}
\end{equation}
}
\end{remark}

\begin{remark}[Smoothness of $ \theta^\pm $ in $ \Omega_0 $]
\label{remark:thetasmooth}
{\em Subject to the removable singularity understanding of Remark~\ref{remark:removable}, $ \theta^+ $ and $ \theta^- $ are $ {\mathrm{C}}^1 $-smooth
in $ \Omega_0 $, and thereby respectively induce well-defined foliations $ f^+ $ and $ f^- $ on $ \Omega_0 $.
}
\end{remark}

While theoretically, the alternative expressions in (\ref{eq:thetaplus2})-({\ref{eq:thetaminus2})  for $ \theta^\pm $ are  equivalent
to the definitions in (\ref{eq:thetaplus})-(\ref{eq:thetaminus}), practically in fact, which of these is chosen {\em will} cause differences when
performing {\em numerical} optimal foliation computations.  We will highlight similarities
and differences between their usage in Section~\ref{sec:computing_foliations}, and demonstrate
these issues numerically in Section~\ref{sec:numerical}.  

We can now state our
first main result:
\begin{theorem}[Stretching Optimizing Restricted Foliation - Maximum ($ \mbox{SORF}_{max}$)]
\label{theorem:max}
The restricted foliation $ f^+ $ which maximizes the global stretching (\ref{eq:globalstretching}) is
that associated with the angle field $ \theta^+ $.  The corresponding maximum of the global stretching 
 (\ref{eq:globalstretching}) is 
\begin{equation}
\Sigma^+= \int \! \! \! \! \int_{\Omega} \left[ \frac{\left| \vecnabla u \right|^2 +
\left| \vecnabla v \right|^2}{2} + \sqrt{\phi^2 + \psi^2} \right]^{1/2} \, \d x \, \d y \, .
\label{eq:maxstretching}
\end{equation}
\end{theorem}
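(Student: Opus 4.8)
The plan is to observe that the optimization decouples pointwise. The second integral in (\ref{eq:globalstretching}), over $I$, is independent of the foliation, so maximizing $\Sigma_f$ over restricted foliations $f$ amounts to maximizing $\int\!\!\int_{\Omega_0}\Lambda\big(x,y,\theta_f(x,y)\big)\,\d x\,\d y$. By the correspondence established above between restricted foliations and $\mathrm{C}^1$-smooth angle fields on $\Omega_0$, every admissible $\theta_f$ satisfies $\Lambda(x,y,\theta_f(x,y))\le \sup_{\theta}\Lambda(x,y,\theta)$ pointwise, so the integral is bounded above by $\int\!\!\int_{\Omega_0}\big(\sup_\theta \Lambda(x,y,\theta)\big)\,\d x\,\d y$; this bound is \emph{attained} precisely when the pointwise maximizer of $\theta\mapsto\Lambda(x,y,\theta)$ itself forms an admissible angle field.

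Next I would carry out the elementary computation of $\Lambda(x,y,\theta)^2$ from (\ref{eq:stretching}). Expanding the squared norm and applying double-angle identities gives
\[
\Lambda(x,y,\theta)^2 \;=\; \frac{|\vecnabla u|^2 + |\vecnabla v|^2}{2} \;+\; \phi(x,y)\cos 2\theta \;+\; \psi(x,y)\sin 2\theta ,
\]
with $\phi,\psi$ as in (\ref{eq:phi})--(\ref{eq:psi}). Rewriting the last two terms as $\sqrt{\phi^2+\psi^2}\,\cos\!\big(2\theta-\tilde{\tan}^{-1}(\psi,\phi)\big)$ shows, since $\Lambda^2$ depends on $\theta$ only through $2\theta$ (consistent with its $\pi$-periodicity in $\theta$), that over $\theta\in[-\pi/2,\pi/2)$ this quantity is maximized exactly when $2\theta\equiv\tilde{\tan}^{-1}(\psi,\phi)$, i.e.\ $\theta\equiv\tfrac12\tilde{\tan}^{-1}(\psi,\phi)\ (\mathrm{mod}\,\pi)$ --- which is precisely the angle field $\theta^+$ of (\ref{eq:thetaplus}) --- with maximal value $\tfrac12(|\vecnabla u|^2+|\vecnabla v|^2)+\sqrt{\phi^2+\psi^2}$. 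The same computation records $\min_\theta\Lambda^2=\tfrac12(|\vecnabla u|^2+|\vecnabla v|^2)-\sqrt{\phi^2+\psi^2}\ge 0$, so the bracketed quantity in (\ref{eq:maxstretching}) is nonnegative and its square root is well-defined.

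It then remains to confirm attainment and evaluate the integral. By Remark~\ref{remark:thetasmooth}, $\theta^+$ is $\mathrm{C}^1$-smooth on $\Omega_0$, hence induces a genuine restricted foliation $f^+$ realizing the pointwise maximum at every point; therefore $f^+$ maximizes $\Sigma_f$, and since $\phi^2+\psi^2>0$ throughout $\Omega_0$ the maximizer is unique there, so $f^+$ is the unique maximizing restricted foliation. Substituting $\theta_f=\theta^+$ into (\ref{eq:globalstretching}) gives $\int\!\!\int_{\Omega_0}\big[\tfrac12(|\vecnabla u|^2+|\vecnabla v|^2)+\sqrt{\phi^2+\psi^2}\big]^{1/2}\,\d x\,\d y$ for the first term; for the second, Lemma~\ref{lemma:puncture} gives $\phi=\psi=0$ on $I$, whence $\Lambda(x,y,\centerdot)=\big[\tfrac12(|\vecnabla u|^2+|\vecnabla v|^2)\big]^{1/2}$, which coincides with the $\Omega_0$-integrand evaluated on $I$. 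The two contributions therefore reassemble into the single integral over $\Omega$ asserted in (\ref{eq:maxstretching}).

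I expect the only non-routine point to be the reduction in the first paragraph: one must ensure it is legitimate to pass to the pointwise supremum inside the integral, and that this supremum is genuinely attained by an admissible foliation rather than merely approached. The first is immediate since $\Lambda\ge 0$, $\Omega$ is bounded, and $\vec{F}\in\mathrm{C}^2$, so all integrands in sight are bounded and measurable; the second is supplied entirely by the $\mathrm{C}^1$-smoothness of $\theta^+$ on $\Omega_0$ (Remark~\ref{remark:thetasmooth}), which is where the earlier treatment of removable singularities earns its keep. All remaining steps amount to the explicit $2\times2$ trigonometric computation above.
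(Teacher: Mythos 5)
Your proposal is correct and follows essentially the same route as the paper's proof: the identity $\Lambda^2=\tfrac12(|\vecnabla u|^2+|\vecnabla v|^2)+\phi\cos 2\theta+\psi\sin 2\theta$, the rewriting of the oscillatory part as $\sqrt{\phi^2+\psi^2}$ times a cosine of $2(\theta-\theta^+)$, the resulting pointwise upper bound, and attainment via the admissibility of the angle field $\theta^+$. The only place the paper is more explicit is the uniqueness step, where it notes that a competing foliation differing from $\theta^+$ at a single point must, by smoothness, differ on an open set of positive measure and hence yield a strictly smaller integral --- your appeal to uniqueness of the pointwise maximizer is the germ of the same argument.
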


\begin{proof}
See \ref{sec:proof}.
\end{proof}

\begin{remark}[Lyapunov exponent field]
\label{remark:lyapunov}
{\em The integand of (\ref{eq:maxstretching}) is the $ \Lambda $ field associated with maximizing stretching,
and is given by
\begin{equation}
\Lambda^+(x,y) = \left[ \frac{\left| \vecnabla u \right|^2 +
\left| \vecnabla v \right|^2}{2} + \sqrt{\phi^2 + \psi^2} \right]^{1/2}  = \left\| \vecnabla \vec{F}(x,y) \right\| \, .
\label{eq:ftle}
\end{equation}
This is (a scaled version of) the standard {\em Lyapunov exponent field}.  We avoid a time-scaling here since, for example, $ \vec{F} $ may be derived
from a sequence of application of various forms of maps (indeed, any sequential combination of discrete maps and
continuous flows).  Neither will we take a logarithm, since we do not necessarily want to think of the
stretching field as an exponent because the finite `amount of time' associated with $ \vec{F} $ depends on
its discrete/continuous nature, which is flexible in our implementation.
}
\end{remark}

\begin{remark}[Stretching on the isotropic set $ I $]
{\em
The value of the global stretching restricted to $ I $ (i.e., the second integral in
(\ref{eq:globalstretching})) is, from (\ref{eq:maxstretching}), 
\begin{eqnarray}
\int \! \! \! \! \int_{I} \left[ \frac{\left| \vecnabla u \right|^2 +
\left| \vecnabla v \right|^2}{2} \right]^{1/2} \! \!  \d x \, \d y  \! \! & = & \! \frac{1}{2} \int \! \! \! \! \int_{I}  \left\| \vecnabla \vec{F}
\right\|_{\mathrm{Frob}} \, \d x \, \d y  \nonumber \\
&  = & \frac{1}{2} \int \! \! \! \! \int_{I} \left\{ {\mathrm{Tr}} \left[ 
\vecnabla \vec{F} \left( \vecnabla \vec{F} \right)^\top \right]  \right\}^{1/2} \! \!  \d x \, \d y \nonumber \\, 
&  = & \frac{1}{2} \int \! \! \! \! \int_{I} \left\{ {\mathrm{Tr}} \left[ 
C(x,y)  \right]  \right\}^{1/2}   \d x \, \d y \, , 
\label{eq:frobenius}
\end{eqnarray}
expressed in terms of the Frobenius norm $ \left\| \centerdot \right\|_{\mathrm{Frob}} $ or trace $ {\mathrm{Tr}} \left[ 
\centerdot \right] $ of the Cauchy--Green tensor 
(\ref{eq:cauchygreen}). 
}
\end{remark}

Similar to the maximizing result, we also have the minimal foliation:

\begin{theorem}[Stretching Optimizing Restricted Foliation - Minimum ($ \mbox{SORF}_{min}$)]
\label{theorem:min}
The restricted foliation $ f^- $ which minimizes the global stretching (\ref{eq:globalstretching}) is
that associated with the angle field $ \theta^- $.  The corresponding minimum of the global stretching 
 (\ref{eq:globalstretching}) is 
\begin{equation}
\Sigma^- = \int \! \! \! \! \int_{\Omega} \left[ \frac{\left| \vecnabla u \right|^2 +
\left| \vecnabla v \right|^2}{2} - \sqrt{\phi^2 + \psi^2} \right]^{1/2} \, \d x \, \d y \, .
\label{eq:minstretching}
\end{equation}
\end{theorem}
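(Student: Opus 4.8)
The plan is to mirror the proof of Theorem~\ref{theorem:max}, exploiting the fact that $\Sigma_f$ in (\ref{eq:globalstretching}) is the integral of a pointwise quantity, so it can be optimized pointwise in $(x,y)$ provided the resulting angle field is an admissible restricted foliation. First I would rewrite the squared local stretching via the Cauchy--Green tensor (\ref{eq:cauchygreen}): writing $\Lambda(x,y,\theta)^2 = (\cos\theta,\sin\theta)\,C(x,y)\,(\cos\theta,\sin\theta)^\top$ and expanding with double-angle identities, one gets
\[
\Lambda(x,y,\theta)^2 = \frac{|\vecnabla u|^2 + |\vecnabla v|^2}{2} + \phi(x,y)\cos 2\theta + \psi(x,y)\sin 2\theta ,
\]
using the definitions (\ref{eq:phi})--(\ref{eq:psi}) of $\phi$ and $\psi$. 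On $\Omega_0$ at least one of $\phi,\psi$ is nonzero (Lemma~\ref{lemma:puncture}), so the right-hand side equals $\tfrac12\bigl(|\vecnabla u|^2+|\vecnabla v|^2\bigr) + \sqrt{\phi^2+\psi^2}\,\cos\!\bigl(2\theta - \tilde{\tan}^{-1}(\psi,\phi)\bigr)$, a sinusoid in $2\theta$ with strictly positive amplitude $\sqrt{\phi^2+\psi^2}$.

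Next I would carry out the pointwise minimization. Over $\theta\in[-\pi/2,\pi/2)$ (modulo $\pi$) the unique minimizer is obtained by forcing $\cos\!\bigl(2\theta - \tilde{\tan}^{-1}(\psi,\phi)\bigr) = -1$, which gives exactly $\theta = \theta^-(x,y)$ of (\ref{eq:thetaminus}), with minimal value
\[
\Lambda^-(x,y)^2 = \frac{|\vecnabla u|^2 + |\vecnabla v|^2}{2} - \sqrt{\phi^2+\psi^2} ,
\]
which is nonnegative since it is the smaller eigenvalue of the positive-semidefinite tensor $C(x,y)$. Then I would assemble the global statement: for an arbitrary restricted foliation $f$ with induced angle field $\theta_f$, pointwise minimality gives $\Lambda(x,y,\theta_f(x,y)) \ge \Lambda^-(x,y)$ for every $(x,y)\in\Omega_0$, while on $I$ the integrand is $\theta$-independent and (since $\phi=\psi=0$ there) also equals $\Lambda^-$; recombining the two integrals as in (\ref{eq:globalstretching}) yields $\Sigma_f \ge \int\!\!\int_{\Omega} \Lambda^-\, \d x\, \d y$, which is the right-hand side $\Sigma^-$ of (\ref{eq:minstretching}). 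Equality forces $\theta_f = \theta^-$ on $\Omega_0$; since both are continuous on the open set $\Omega_0$, this determines $f$ uniquely, so $f^-$ is the unique minimizing restricted foliation.

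Finally I would close the loop by noting that $\theta^-$ does induce a genuine restricted foliation $f^-$ in the sense of Definition~\ref{definition:foliation}: this is precisely Remark~\ref{remark:thetasmooth}. The main obstacle is not the pointwise calculus, which is routine, but this last admissibility point — one must know that $\theta^-$ is $\mathrm{C}^1$ on all of $\Omega_0$, including across the locus $\{\psi=0,\ \phi\ne 0\}$, where the closed form (\ref{eq:thetaminus2}) is only defined through the removable-singularity convention of Remark~\ref{remark:removable}. Since that smoothness has already been established, the proof reduces to the optimization above, running exactly parallel to the proof of Theorem~\ref{theorem:max} with $-\sqrt{\phi^2+\psi^2}$ in place of $+\sqrt{\phi^2+\psi^2}$.
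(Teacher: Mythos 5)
Your proposal is correct and follows essentially the same route as the paper's proof: the same rewriting of $\Lambda^2$ as a sinusoid in $2\theta$ with amplitude $\sqrt{\phi^2+\psi^2}$, the same pointwise minimization forcing the cosine term to $-1$ (hence $\theta_f=\theta^-$), the same integration of the pointwise bound over $\Omega_0$ and $I$, and the same continuity-based uniqueness argument. The paper simply presents the maximization case in full and notes the minimization is analogous, which is exactly the parallel you describe.
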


\begin{proof}
See \ref{sec:proof}.
\end{proof}

\begin{corollary}[$\mbox{SORF}_{max}$ and $ \mbox{SORF}_{min}$ are orthogonal]
\label{corollary:orthogonal}
If any curve from $ \mbox{SORF}_{max}$ intersects a curve from $ \mbox{SORF}_{min}$ in $ \Omega_0 $, then it does so orthogonally.
\end{corollary}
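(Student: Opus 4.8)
The plan is to reduce the statement to the algebraic identity (\ref{eq:thetaorthogonal}) already recorded for the two angle fields, and then translate ``angles differ by $\pi/2$ modulo $\pi$'' into ``tangent lines are orthogonal.'' First I would fix a point $(x_0,y_0)\in\Omega_0$ that lies on a leaf $\gamma^+$ of $\mbox{SORF}_{max}=f^+$ and on a leaf $\gamma^-$ of $\mbox{SORF}_{min}=f^-$. By Theorem~\ref{theorem:max} the leaf $\gamma^+$ is an integral curve of the angle field $\theta^+$, so its tangent line at $(x_0,y_0)$ makes angle $\theta^+(x_0,y_0)$ with the positive $x$-axis; likewise, by Theorem~\ref{theorem:min}, the tangent line of $\gamma^-$ at $(x_0,y_0)$ makes angle $\theta^-(x_0,y_0)$. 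Both curves are smooth there (Remark~\ref{remark:thetasmooth}), so each has a genuine, well-defined tangent line at the intersection point.

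Next I would invoke (\ref{eq:thetaorthogonal}), namely $\theta^+(x_0,y_0)-\theta^-(x_0,y_0)=-\pi/2 \pmod\pi$. Since two lines through a common point are orthogonal precisely when the angles they make with the $x$-axis differ by $\pi/2$ modulo $\pi$, the tangent lines of $\gamma^+$ and $\gamma^-$ at $(x_0,y_0)$ are orthogonal, which is exactly the assertion that the curves meet orthogonally. If (\ref{eq:thetaorthogonal}) is taken as already established in the excerpt, this is essentially the whole argument; if one wants a self-contained derivation, I would re-derive it directly from the representations (\ref{eq:thetaplus2})--(\ref{eq:thetaminus2}): the two quantities $(-\phi\pm\sqrt{\phi^2+\psi^2})/\psi$ are the two roots of $\psi t^2 + 2\phi t - \psi = 0$, whose product is $-1$, so $\tan\theta^+\tan\theta^-=-1$ wherever $\psi\neq0$, giving orthogonality at such points.

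The one place requiring care is the set $\psi=0$, $\phi\neq0$, where (\ref{eq:thetaplus2})--(\ref{eq:thetaminus2}) are only defined through the removable-singularity convention of Remark~\ref{remark:removable}. There I would simply read off (\ref{eq:thetaplus_zero}) and (\ref{eq:thetaminus_zero}): one of $\theta^\pm$ equals $0$ and the other equals $-\pi/2$, so again the difference is $-\pi/2\pmod\pi$ and the tangent lines (one horizontal, one vertical) are orthogonal. Points of $\Omega_0$ with $\psi=\phi=0$ do not occur, since by Lemma~\ref{lemma:puncture} they constitute the isotropic set $I$, which is excluded from $\Omega_0$. I expect this bookkeeping at $\psi=0$ to be the only genuine obstacle, and it is minor; everything else is a direct consequence of (\ref{eq:thetaorthogonal}) together with the identification of $f^\pm$ as the integral foliations of $\theta^\pm$.
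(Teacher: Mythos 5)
Your argument is correct and is essentially the paper's own proof: the leaves of $f^{\pm}$ are tangent to the angle fields $\theta^{\pm}$, and (\ref{eq:thetaorthogonal}) gives the $\pi/2$ offset modulo $\pi$, hence orthogonality at any intersection point. The extra bookkeeping you supply (the product-of-roots identity $\tan\theta^+\tan\theta^-=-1$ and the $\psi=0$ case) is a harmless elaboration of what the paper already records in Lemma~\ref{lemma:theta} and Remark~\ref{remark:removable}.
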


\begin{proof}
The $ \mbox{SORF}_{max}$ and $ \mbox{SORF}_{min}$ curves are respectively tangential to
the angle fields $ \theta^+ $ and $ \theta^- $, which are known to be orthogonal by (\ref{eq:thetaorthogonal}).
\end{proof}

There is clearly a strong interaction between {\em local} properties and quantities related to {\em global} 
stretching optimization.  We summarize some properties below.  We do not discuss them in detail, but provide
additional explanations in \ref{sec:localstretching}.

\begin{remark}[Maximal or minimal {\em local} stretching]
\label{remark:local}
{\em 
\mbox{}
\begin{itemize}
\setlength{\itemsep}{-1pt}
\item[(a)] Given a point $ (x,y) \in \Omega_0 $, if we pose the question of determining the orientation of
an infinitesimal line positioned here in order to experience the maximum stretching, then this is at an angle $ \theta^+ $.  
\item[(b)] The local maximal stretching associated with choosing the angle of orientation $ \theta^+ $ is exactly the operator norm
of the gradient of the map $ \vec{F} $, which is expressible in terms of the Cauchy--Green tensor (\ref{eq:cauchygreen}).
\item[(c)] The above quantity is associated with the {\em Lyapunov exponent field}, given in (\ref{eq:ftle}), which 
is defined on all of $ \Omega $ despite having the above interpretation only on $ \Omega_0 $.
\item[(d)] In $ \Omega_0 $, the $ \mbox{SORF}_{max}$ leaves (curves) lie along streamlines of the eigenvector field of the Cauchy--Green
tensor corresponding to the larger eigenvalue.  This eigenvector field can also be thought of as the
{\em Lyapunov} or {\em Oseledec vector field} associated with $ \vec{F} $.
\item[(e)] If the question is instead to find the orientation of an infinitesimal line positioned at $ (x,y) $ in order to experience
the minimum stretching, then the angle of this line is $ \theta^- $.  Compare this statement to observation (a) together with Corollary \ref{corollary:orthogonal}.
\item[(f)] In $ \Omega_0 $, Eq.~(\ref{eq:remaining}), the $ \mbox{SORF}_{min}$ leaves lie along streamlines of the eigenvector field of the Cauchy--Green
tensor corresponding to the smaller eigenvalue.
\item[(g)]  The set $ I $ corresponds to points in $ \Omega $ at which the two eigenvalues of the Cauchy--Green
tensor coincide.  
\end{itemize}
}
\end{remark}

%\begin{remark}
%{\em 
%The second integral in (\ref{eq:globalstretching}) (over $ I $) is irrelevant to the global optimization problem solved via Theorems~\ref{theorem:max} and \ref{theorem:min}.  Thus, if we  pose  our optimization problem as instead seeking foliations $ \tilde{f}^\pm $ on $ \Omega $ which maximize or  minimize the global stretching, since the choice of curves within $ I $ has no impact on the global stretching, we can choose any foliation $ \tilde{f}^\pm $ such that $ \tilde{f}^\pm \Big|_{\Omega_0} = f^\pm $, that is, which extends $ f^\pm $ from $ \Omega_0 $ to $ \Omega $ in any way which ensures that the extension is a foliation on $ \Omega $.
%}
% NOT CLEAR THAT AN EXTENSION AS A GENUINE FOLIATION IS POSSIBLE IN THE PUNCTURED CASE
%\end{remark}

%%%%%%%%%%%
\section{Behavior near singularities}
\label{sec:singularity}

The previous section's optimization ignored the isotropic set $ I $, since the local stretching within $ I $ was
independent of direction.    In this section, we analyze the topological structure of our optimal 
foliations near generic points in $ I $, which can be thought of as {\em singularities} with respect to optimal foliations. 
By Lemma~\ref{lemma:puncture}, these are points where both $ \phi $ and $ \psi $ are zero.

\begin{definition}[Nondegenerate singularity]
\label{definition:nondegenerate}
If a point $ \vec{p} \in I $ is such that 
\begin{equation}
\Big[ \vecnabla \phi \times \vecnabla \psi \Big]_{\vec{p}} \ne \vec{0} \quad {\mathrm{or~equivalently}} \quad
\mathrm{det} \, \frac{\partial(\phi,\psi)}{\partial(x,y)} \Big|_{\vec{p}} \ne 0 \, , 
\label{eq:nondegenerate}
\end{equation}
then $ \vec{p} $ is a {\em nondegenerate singularity}. 
\end{definition}

\begin{figure}
{\includegraphics[width=0.47\textwidth,height=0.17\textheight]{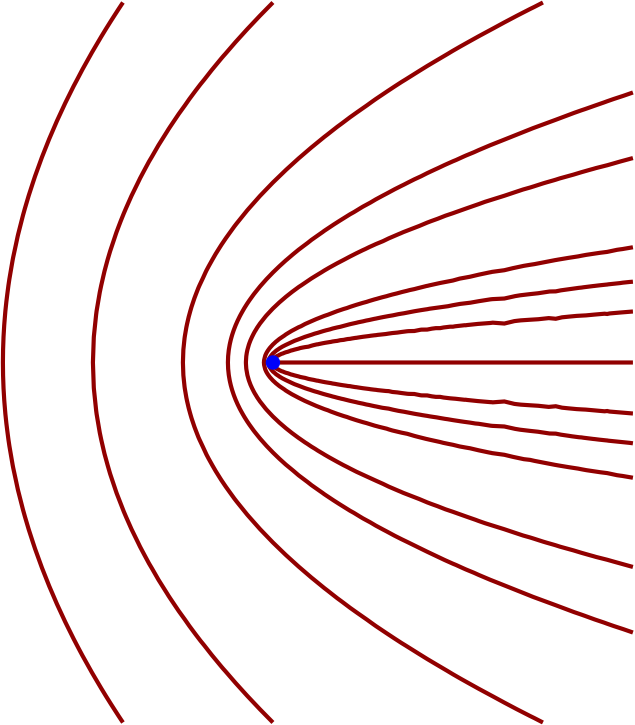}} 
%[]
{\includegraphics[width=0.47\textwidth,height=0.17\textheight]{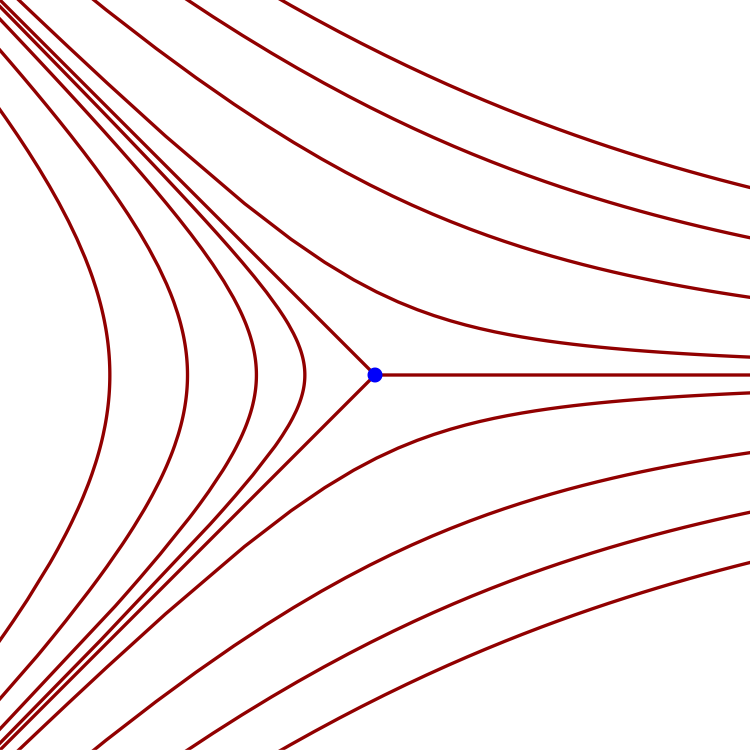}} 
\caption{Topological classification of nondegenerate singularities with respect to $\mbox{SORF}_{max}$ or -min (a) a $ 1 $-pronged (intruding)
point, and (b) a $ 3 $-pronged (separating) point.  See Property \ref{property:singularities}.  Compare to Fig.~\ref{fig:fourquadrant}. }
\label{fig:classify}
\end{figure}

Since by Hypothesis~\ref{hypothesis:smooth} both  $ \phi $ and $ \psi $ are $ {\mathrm{C}}^1 $-smooth in $ \Omega $, their gradients are well-defined on $ \Omega $. 
Nondegeneracy precludes either $ \phi $ or 
$ \psi $ possessing critical points at $ \vec{p} $; thus, we cannot get self-intersections of either $ \phi = 0 $ or $ \psi = 0 $
contours at $ \vec{p} $, have local extrema of $ \phi $ or $ \psi $ at $ \vec{p} $, or have a situation where $ \phi $ or $ \psi $ is constant in an open
neighborhood around $ \vec{p} $.  Nondegeneracy also precludes
$ \phi = 0 $ and $ \psi = 0 $ contours intersecting tangentially at $ \vec{p} $ (although we will be able to make some
remarks about this situation later).  Thus, at nondegenerate points $ \vec{p} $, the curves $ \phi = 0 $ and $ \psi = 0 $ 
intersect transversely.  
We explain in \ref{sec:fourquadrant} how we obtain the following {\em complete} classification for nondegenerate singularities, as illustrated in Fig.~\ref{fig:classify}:

\begin{property}[$ 1 $- and $ 3 $-pronged singularities]
\label{property:singularities}
Let $ \vec{p} \in I $ be a nondegenerate singularity, and let $ \vec{\hat{k}} $ be the unit-vector in the $ + z $-direction
(i.e., `pointing out of the page' for a standard right-handed Cartesian system).  Then,
\begin{itemize}
\item If $ \vec{p} $ is {\em right-handed}, i.e., if 
\begin{equation}
\Big[ \vecnabla \phi \times \vecnabla \psi \Big]_{\vec{p}} \cdot \vec{\hat{k}}  = \mathrm{det} \, \frac{\partial(\phi,\psi)}{\partial(x,y)} \Big|_{\vec{p}} > 0 \, , 
\label{eq:righthanded}
\end{equation}
then $ \vec{p} $ is a {\em 1-pronged singularity} (an `intruding point'), with nearby foliation of both $ f^+ $ and
$ f^- $ topologically equivalent to Fig.~\ref{fig:classify}(a);
and
\item If $ \vec{p} $ is {\em left-handed}, i.e., if 
\begin{equation}
\Big[ \vecnabla \phi \times \vecnabla \psi \Big]_{\vec{p}} \cdot \vec{\hat{k}} = \mathrm{det} \, \frac{\partial(\phi,\psi)}{\partial(x,y)} \Big|_{\vec{p}} < 0 \, , 
\label{eq:lefthanded}
\end{equation}
then $ \vec{p} $ is a {\em 3-pronged singularity} (a `separating point'), with nearby foliation of both $ f^+ $ and 
$ f^- $ topologically equivalent to Fig.~\ref{fig:classify}(b).
\end{itemize}
The intrusions/separations occur in 
opposite directions for the two orthogonal foliations $ f^\pm $.
\end{property}

We use the `$ 1 $-pronged' and `$3 $-pronged' terminology from the theory of singularities of measured foliations \cite{thurston,hubbardmasur}.  We also note that in the case of all singularities being nondegenerate, the
curves on $ \Omega_0 $ may be thought of as a punctured foliation \cite[e.g.]{mosher} on $ \Omega $.  These
two singularities also correspond to the index of the foliation being $ + 1/2 $ and $ - 1/2 $ respectively
(for e.g., see Fig.~1 in \cite{rykken}).  These two topologically distinct singularities serve as the organizing skeleton around which the rest of the SORF smoothly vary.
These topologies have been observed numerically \cite{tchon,farazmand} but
apparently not classified before.  

We have claimed in Property~\ref{property:singularities} that the topology of $ f^- $ is similar to that of $ f^+ $ as illustrated in Fig.~\ref{fig:classify}. To
see why this is so, imagine reflecting these curves about the
vertical line going through $ \vec{p} $.  This generates an orthogonal set of curves, which are the
complementary (orthogonal) foliation.  Thus, $ f^+ $ and $ f^- $ have the {\em same} topology near $ \vec{p} $.  

At the next-order of degeneracy, we will have $ \phi = 0 $ and $\psi = 0 $ contours continuing to be curves,
but now intersecting at $ \vec{p} $ {\em nontangentially}.  In that case, it turns out that Fig.~\ref{fig:degenerate} gives the possible topologies for
$ \mbox{SORF}_{max}$,
which are explained in detail in \ref{sec:fourquadrant}.  
If $ \vec{p} $ is
not an isolated point in $ I $, then many other possibilities exist.
The $ \mbox{SORF}_{min}$ in the mildly degenerate situations in Fig.~\ref{fig:degenerate} represent curves which are orthogonal to
the pictured ones, by Corollary \ref{corollary:orthogonal}.  Their topology will be identical. 

\begin{figure}
\centering
{\includegraphics[width=0.3\textwidth,height=0.21\textheight]{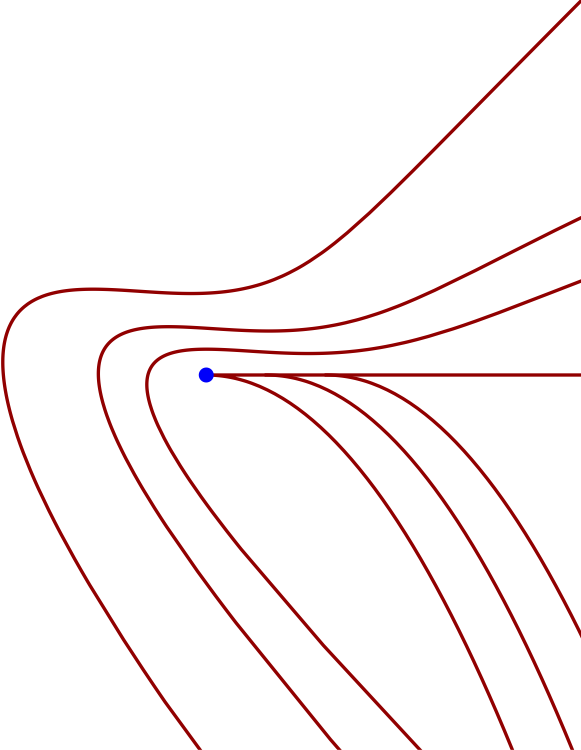}} 
{\includegraphics[width=0.3\textwidth,height=0.21\textheight]{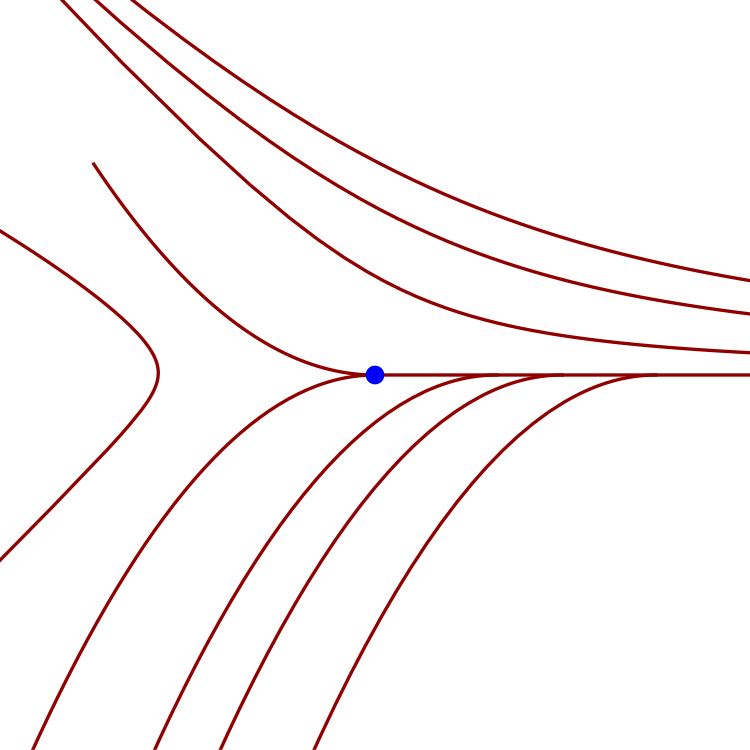}} 
{\includegraphics[width=0.3\textwidth,height=0.21\textheight]{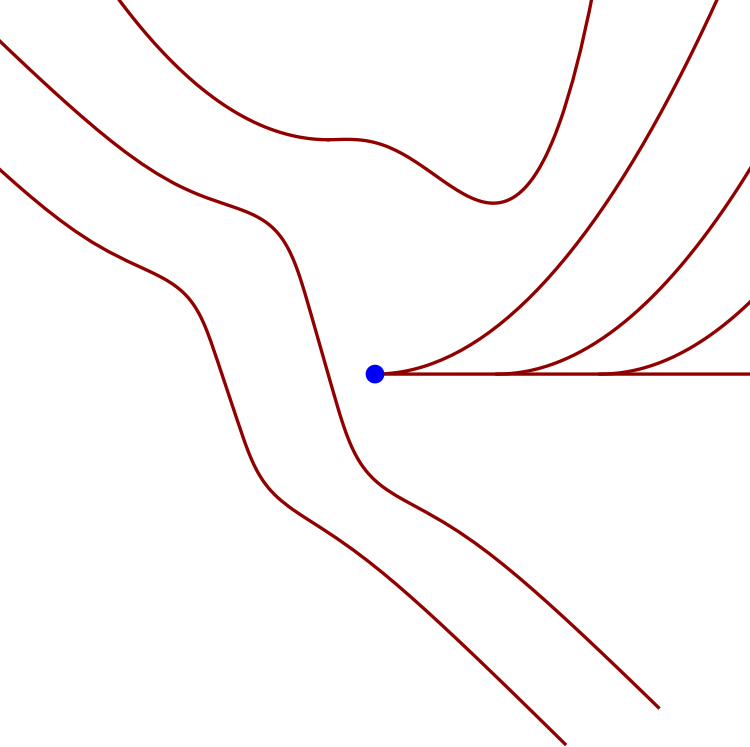}} 
\caption{Some possible topologies for $ \mbox{SORF}_{max}$ near $ \vec{p} $ when transversality is relaxed (see \ref{sec:fourquadrant} for explanations of these structures).}
\label{fig:degenerate}
\end{figure}

%%%%%%%%%%%
\section{Discontinuity in Lyapunov vectors}
\label{sec:discontinuity}

We have determined slope fields $ \theta^+ $ and $\theta^- $ corresponding to maximizing and minimizing
the {\em global} stretching.  By Remark~\ref{remark:local}, maximizing the {\em local} stretching at a point in $ \Omega_0 $ also results 
in an angle corresponding to $ \theta^+ $.  Such local stretching is well-studied;
it is related to the Lyapunov exponent, and the directions are associated with {\em Lyapunov vectors} \cite{wolfesamelson}
or {\em Oseledec spaces} \cite{oseledec}.  Additionally, the direction associated with $ \theta^+ $ can be characterized in 
terms of the eigenvector associated with the larger Cauchy--Green eigenvalue.  See \ref{sec:localstretching}
for a more extensive discussion of these connections.

Here, we analyze the {\em vector fields} associated with $ \theta^\pm $ in some detail, using the behavior in the
$ (\phi,\psi) $-plane introduced in the previous section.  The main observation is that, generically, it is {\em not}
possible to express a $ {\mathrm{C}}^0 $-vector field on the closure of $ \Omega_0  $ from the $ \theta^\pm $ angle fields.  This has implications in
numerically computing curves in the optimal foliations, where we give insight into spurious effects that arise.

The  $ \theta^+ $ field in $ \Omega_0 $ is given by (\ref{eq:thetaplus}).   To determine a curve from the $ \mbox{SORF}_{max}$,
we need to pick an initial point in $ \Omega_0 $, and evolve it according to `the' vector
field generated from $ \theta^+ $.   A simple possibility would be to take the (unit) vector field
\begin{equation}
\vec{w}^+(x,y) := \left( \begin{array}{c} 
\cos \left[ \theta^+(x,y) \right] \\ \sin \left[ \theta^+(x,y) \right] \end{array} \right) \, , 
\label{eq:vector_plus}
\end{equation}
in which $ \theta^+ $ is computed from (\ref{eq:thetaplus}).   In evolving trajectories associated with this 
vector field---i.e., in determining streamlines of (\ref{eq:thetaplus})---one can of course multiply $ \vec{w}^+ $
by a scalar function $ m(x,y) $, which simply changes the parametrization along the trajectory/streamline.
As verified in \ref{sec:localstretching}, (\ref{eq:vector_plus}) is indeed the eigenvector
associated with the larger eigenvalue of the Cauchy--Green tensor at $ (x,y) $, with the understanding that it
can be multiplied by a nonzero scalar. The fact that the
eigenvector at each point is unique, modulo a constant multiple, is of course directly related to these observations.  

Exactly the same arguments hold when attempting to compute the $ \mbox{SORF}_{min}$: from the angle field $ \theta^- $
we can construct the vector field
\begin{equation}
\vec{w}^-(x,y) := \left( \begin{array}{c} 
\cos \left[ \theta^-(x,y) \right] \\ \sin \left[ \theta^-(x,y) \right] \end{array} \right) \, , 
\label{eq:vector_minus}
\end{equation}
where $ \theta^- $ is defined from (\ref{eq:thetaminus}). 

\begin{property}[Generating foliation curves using vector fields]
\label{property:curvegenerate}
If generating a $ \mbox{SORF}_{max}$  or $ \mbox{SORF}_{min}$ curve in $ \Omega_0 $, we can in general find solutions to
\begin{equation}
\renewcommand{\arraystretch}{1.5}
\hspace*{-1cm} \frac{d}{d s} \left( \begin{array}{c} x \\ y \end{array} \right) =  \vec{w} \left( x(s) ,y(s) \right) \quad ; \quad \left( \begin{array}{c} x(0) \\ y(0) \end{array} \right) = \left( \begin{array}{c}
x_0 \\ y_0 \end{array} \right) \, ,
\label{eq:curvegenerate}
\end{equation}
where $ s $ is the parameter along the curve and $ (x_0,y_0) \in \Omega_0 $, and we can choose a {\em Lyapunov vector field} in the form
\begin{equation}
\vec{w}(x,y) = m(x,y) \, \vec{w}^\pm(x,y)
\label{eq:w}
\end{equation}
for a suitable scalar function $ m $.
\end{property}

 If we use $ m \equiv 1 $ on $ \Omega_0 $, the parametrization $ s $ along the trajectory is 
exactly the arclength.
However, more general scalar  functions $ m $ can be used in (\ref{eq:curvegenerate}), reflecting the
fact that the vector fields which generate the foliations are actually {\em direction fields}, and thus
can be multiplied at each point by a scalar.  The only restrictions are (i) $ m $ can never be zero, because if
it is, we introduce a spurious fixed point in the system (\ref{eq:curvegenerate}) which `stops' the curve, and (ii) $ m $ is sufficiently
smooth to ensure that the equation (\ref{eq:curvegenerate}) has unique $ {\mathrm{C}}^1 $-smooth
solutions.     From the perspective of a SORF curve, making a choice of the function
$ m $ simply adjusts the parametrization
along the curve.  Notice that if we flip the sign of $ m $ we would be going along the curve
in the opposite direction. 

\begin{figure}
\centering
\includegraphics[scale=1]{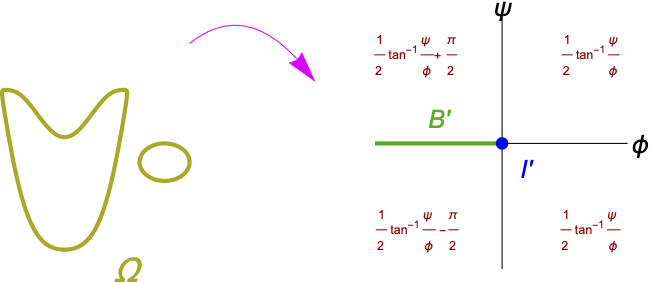}
\caption{The map from $ \Omega $ to $ (\phi,\psi) $-space, illustrating the sets $ I' $ and $ B' $ to
which the sets $ I $ and $ B $ map.  In red, we have stated the
value of the field $ \theta^+ $ in (\ref{eq:thetaplus}) in each quadrant.}
\label{fig:phipsimap}
\end{figure}

To understand the generation of curves from (\ref{eq:w}), it helps to think of the mapping from $ \Omega $ to  $ (\phi,\psi) $-space,
illustrated in Fig.~\ref{fig:phipsimap}.   We have already characterized an important subset of $ \Omega $ in relation to this
mapping: the isotropic set $ I $ is the kernel of this mapping (by Lemma~\ref{lemma:puncture}).  Its image is denoted
by $ I' $, the origin in $ (\phi,\psi) $-space.    

Another important set that we require is

\begin{definition}[Branch cut]
\label{definition:branchcut}
The {\em branch cut} $ B $  is the set of points $ (x,y) \in \Omega $ such that
\begin{equation}
B := \left\{ (x,y) \in \Omega \, : \, \phi(x,y) < 0 ~~{\mathrm{and}}~~ \psi(x,y) = 0 \, \right\} \, .
\label{eq:branchcut}
\end{equation}
\end{definition}

The image $ B' $ of the branch cut is also shown in Fig.~\ref{fig:phipsimap} as the negative $ \phi $-axis.  In each of the four quadrants of Fig.~\ref{fig:phipsimap}, we have carefully
stated the value of the $ \theta^+ $ field in terms of the {\em standard} inverse tangent function.
We focus here near a nondegenerate
singularity $ \vec{p} $, where the $ \phi = 0 $ and
$ \psi = 0 $ contours must cross $ \vec{p} $ transversely, given that the Jacobian determinant of $ (\phi,\psi) $ with
respect to $ (x,y) $ is nonzero. The axis-crossings in Fig.~\ref{fig:phipsimap} will have the same topology as
these contours if the determinant is positive (the map is orientation-preserving). 

The relevant set $ B $ in 
$ \Omega_0 $, near $ \vec{p} $, must therefore have the structure as seen in Fig.~\ref{fig:branchcut}(a).   Consider a  small circle
around $ \vec{p} $ as drawn in Fig.~\ref{fig:branchcut}(a), and indicated via arrows the directions of the vector field
$ \vec{w}^+ $ along it.  The reasons for these directions stems directly from Fig.~\ref{fig:phipsimap};
we need to take the cosine (for the $ x $-component) and the sine (for the $ y $-component) of the angle field
defined therein.  While $ \vec{w}^+ $ must vary smoothly along the circle, it exhibits a discontinuity across the
branch cut $ B $, because the angle has rotated around from $ - \pi/2 $ to $ +\pi/2 $.  Clearly, the same behavior
occurs for left-handed $ \vec{p} $: in this case we need to consider Fig.~\ref{fig:phipsimap} with the 
$ \psi $-axis flipped (this orientation-reversing case is indeed pictured in Fig.~\ref{fig:fourquadrant}(b)). Once again, it is the $ \phi_- $
axis to which the branch cut $ B \in \Omega_0 $ gets mapped.  The intuition of Fig.~\ref{fig:branchcut} gives
us a {\em theoretical} issue related to using a vector field to find curves:

\begin{figure}
\centering
{\includegraphics[width=0.4\textwidth,height=0.17\textheight]{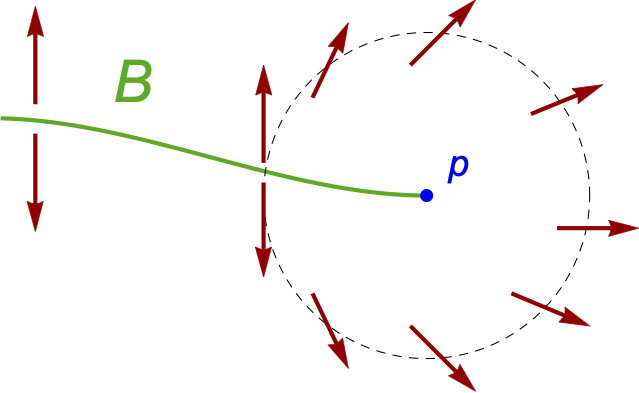}} \hfill 
{\includegraphics[width=0.4\textwidth,height=0.17\textheight]{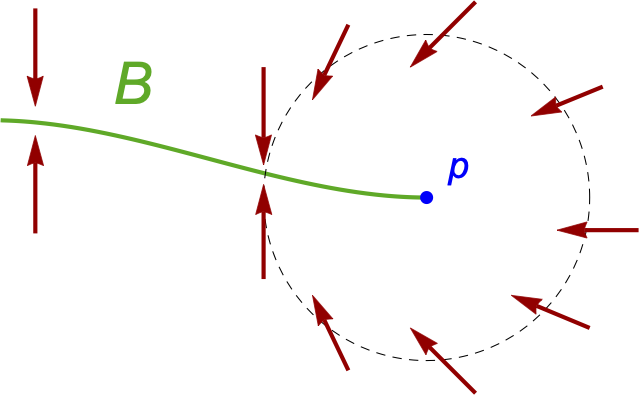}} 
\caption{Vector field of (\ref{eq:curvegenerate}) using $ \vec{w}^+ $, 
near a nondegenerate singularity $ \vec{p} $, with the
branch cut $ B $ shown in green: (a) right-handed $ \vec{p} $ and (b) left-handed $ \vec{p} $.}
\label{fig:branchcut}
\end{figure}

\begin{theorem}[Impossibility of continuous Lyapunov vector field]
\label{theorem:nonsmooth}
If there exists at least one nondegenerate singularity $ \vec{p} \in \Omega $, then no
nontrivial scalar
function $ m $ in (\ref{eq:curvegenerate}) exists such that the right-hand side (i.e., vector field associated with
the angle field $ \theta^+ $) is a $ {\mathrm{C}}^0 $-smooth nonzero vector
field in $ \Omega_0 $.  The same conclusion holds for vector fields generated from $ \theta^- $.
\end{theorem}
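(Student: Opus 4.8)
The plan is to argue by contradiction using a winding/index argument localized around the nondegenerate singularity $\vec{p}$. Suppose such a scalar function $m$ exists, so that $\vec{w} = m\,\vec{w}^+$ is a nonzero $\mathrm{C}^0$ vector field on $\Omega_0$. Since $m$ is nonzero, it has a fixed sign near $\vec{p}$ (on each connected component of a small punctured disc), so up to replacing $\vec{w}$ by $-\vec{w}$ on such a component, the \emph{direction} of $\vec{w}$ agrees with that of $\vec{w}^+$ there. Hence it suffices to derive a contradiction from the assumption that the unit field $\vec{w}^+$ of (\ref{eq:vector_plus}) extends continuously over a small punctured circle $\gamma$ around $\vec{p}$ after at most a global sign choice. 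First I would set up: take a small loop $\gamma$ around $\vec{p}$ contained in $\Omega_0$ (possible since $\vec{p}$ is isolated in $I$ by nondegeneracy and Lemma~\ref{lemma:puncture}), and track the angle $\theta^+$ as given in (\ref{eq:thetaplus}) as one traverses $\gamma$ once.

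The key computation is that, by nondegeneracy, the map $(x,y)\mapsto(\phi(x,y),\psi(x,y))$ is a local diffeomorphism near $\vec{p}$ (its Jacobian determinant is nonzero by (\ref{eq:nondegenerate})), so the image of $\gamma$ in $(\phi,\psi)$-space is a small loop winding exactly once around the origin $I'$ (with orientation determined by the sign of the determinant, i.e.\ right- vs.\ left-handed $\vec{p}$). Along that image loop, the argument $\tilde{\tan}^{-1}(\psi,\phi)$ increases (or decreases) by $2\pi$. By the defining formula (\ref{eq:thetaplus}), $\theta^+$ therefore changes by $\pm\pi$ as one goes once around $\gamma$. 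This is the crux: a $\pm\pi$ change in the \emph{angle modulo $\pi$} is consistent with the angle field being well-defined (which it is, per Remark~\ref{remark:thetasmooth}), but the corresponding \emph{unit vector} $\vec{w}^+ = (\cos\theta^+,\sin\theta^+)$ undergoes a net rotation of $\pm\pi$, i.e.\ returns to its negative. So $\vec{w}^+$ cannot be continuous around $\gamma$: its winding number relative to $\vec{p}$ is the half-integer $\pm1/2$, which is incompatible with being a single-valued continuous nonzero vector field (whose winding must be an integer). This is precisely the discontinuity across the branch cut $B$ illustrated in Fig.~\ref{fig:branchcut}, where $\theta^+$ jumps from $-\pi/2$ to $+\pi/2$; I would phrase the argument so it is manifestly the same phenomenon, invoking the quadrant-by-quadrant values of $\theta^+$ in Fig.~\ref{fig:phipsimap}.

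To finish, note the sign factor $m$ cannot repair this: multiplying by $m$ leaves the winding number unchanged modulo the ambiguity of a global sign (which changes winding by $0$, not by $1/2$), so no choice of nonvanishing $m$ produces an integer winding, contradicting continuity of the right-hand side of (\ref{eq:curvegenerate}) on $\Omega_0$. The identical argument applies verbatim to $\theta^-$, since by (\ref{eq:thetaorthogonal}) $\theta^- = \theta^+ + \pi/2$ (mod $\pi$), so $\theta^-$ has the same total variation $\pm\pi$ around $\gamma$ and hence $\vec{w}^-$ has winding $\pm1/2$ about $\vec{p}$ as well; alternatively one appeals to Corollary~\ref{corollary:orthogonal} to rotate the whole picture by $\pi/2$.

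I expect the main obstacle to be stating the winding/index argument rigorously with the ``angle mod $\pi$'' bookkeeping: one must be careful that $\theta^+$ is $\mathrm{C}^1$ as an $\mathbb{R}/\pi\mathbb{Z}$-valued field (granted by Remark~\ref{remark:thetasmooth}, handling the removable singularities of Remark~\ref{remark:removable}) yet its lift to $\mathbb{R}/2\pi\mathbb{Z}$ — which is what $\vec{w}^+$ sees — is obstructed, and to make precise that the obstruction is a genuine $\mathbb{Z}/2$ invariant (the half-integer index) not removable by reparametrization $m$ or by passing to a connected sub-arc. The transversality/local-diffeomorphism step and the $2\pi$-variation of $\tilde{\tan}^{-1}$ around the origin are routine once set up; the care needed is purely in the topological framing.
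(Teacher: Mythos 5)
Your proposal is correct and is essentially the paper's own argument: both proofs encircle the nondegenerate singularity $ \vec{p} $ with a small loop, use the nondegeneracy of $ (\phi,\psi) $ to conclude that the direction field $ \theta^\pm $ rotates by a net $ \pm\pi $ (equivalently, that $ \vec{w}^\pm $ flips sign across the branch cut $ B $), and then observe that no nonvanishing scalar $ m $ can repair this. The only cosmetic difference is the final step: you invoke invariance of the (half-integer) winding number under multiplication by a nonvanishing scalar, while the paper argues that $ m $ would have to change sign across $ B $ yet remain continuous on the rest of the loop, forcing a zero by the intermediate value theorem --- two formulations of the same obstruction.
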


\begin{proof}
See \ref{sec:nonsmooth}.
\end{proof}

%%%%%%%%%%%%%%%%%%
\section{Computational issues of finding foliations}
\label{sec:computing_foliations}

In the previous section, we have outlined a {\em theoretical} concern in defining a vector field for computing optimal foliations. 
We show here related {\em numerical} issues which emerge when attempting to compute foliating curves.  

First, we remark that using a vector field to generate curves of streamlines of eigenvector fields of a tensor---which
as seen here are equivalent to $ \mbox{SORF}_{max}$ and $ \mbox{SORF}_{min} $ curves---is standard practice.  
Numerical issues in doing so have been observed previously, and {\em ad hoc} remedies proposed: 
\begin{itemize}
\item In generating trajectories following `smooth' fields from grid-based data, one suggested approach is to keep checking the direction of the vector field within each cell a trajectory ventures into, and
then flip the vector field at the bounding gridpoints to all be in the same direction before interpolating \cite{farazmand}.
\item In dealing with points at which the eigenvector field is not defined, an approach is to mollify the field by multiplying with a sufficiently smooth field which is zero
at such points (e.g., the square of the difference in the two eigenvalues \cite{tchon}). 
\end{itemize}
Our Theorem~\ref{theorem:nonsmooth} gives explicit insights into the nature of both
these issues.   Both {\em ad hoc} numerical methods relate to choosing the function $ m $
(respectively as $ \pm 1 $, or a smooth scalar field which is zero at singularities).  In either case, {\em actual} behavior near the singularities gets blurred by this process.

The branch cut near singularities also leads to more subtle---and apparently hitherto unidentified in the literature
of following streamlines of tensor fields---issues when performing numerical computations.  In \ref{sec:branchcut}, we explain why the following occur.

\begin{property}[Numerical computation of optimal foliations using vector fields]
\label{property:branchcut}
{\em 
Suppose we numerically compute a $ \mbox{SORF}_{max}$ (resp.\ $ \mbox{SORF}_{min}$) curve by using (\ref{eq:curvegenerate}) with $ m = 1 $
and the vector field $ \vec{w}^+ $ (resp.\ $ \vec{w}^- $), 
by allowing the parameter $ s $ to evolve in both directions.  Then
\begin{itemize}
\item[(a)] $ \mbox{SORF}_{max}$ curves will not cross a one-dimensional part of $ B $ vertically, and may also veer along $ B $ even though $ B $ may not be 
a genuine $ \mbox{SORF}_{max}$ curve;
\item[(b)] $ \mbox{SORF}_{min}$ curves will not cross a one-dimensional part of $ B $ horizontally, and may also veer along $ B $ even though $ B $ may not be 
a genuine $ \mbox{SORF}_{min}$ curve.
\end{itemize}
}
\end{property}

These problems are akin to branch splitting issues arising when applying curve continuation methods
in instances such as bifurcations \cite{auto}.
Is it possible to choose a function $ m $ which is not identically $ 1 $ to remove these difficulties?  The proof of Theorem~\ref{theorem:nonsmooth} tells us that the answer is no.  Either the branch cut gets moved to a different
curve connected to $ \vec{p} $ across which there is a similar discontinuity, or it gets converted to a curve which has spurious fixed points (i.e., a center manifold curve) because $ m = 0 $ on it.  In either case, the numerical evaluation
will give problems.  

Thus, there are several numerical issues in computing foliations using the vector fields $ \vec{w}^\pm $.  
 Lemma~\ref{lemma:theta} suggests a straightfoward alternative method for numerically computing such curves in generic
 situations, while systematically avoiding {\em all} these issues.  Let
 \begin{eqnarray*}
\Phi_- & := & \left\{ (x,y): \phi(x,y) < 0~~{\mathrm{and}}~~\psi(x,y) = 0 \right\} \quad {\mathrm{and}} \\
\Phi_+ & := & \left\{ (x,y): \phi(x,y) > 0~~{\mathrm{and}}~~\psi(x,y) = 0 \right\} \, ;
\end{eqnarray*} 
these are points mapping to the `negative $ \phi $-axis'
and the `positive $ \phi $-axis' (see Figs.~\ref{fig:phipsimap} and
\ref{fig:fourquadrant}), and we also note that $ \Phi_- = B $.  In seeking the maximizing foliation, 
 we define on $ \Omega_0 \setminus \Phi_- $,
\begin{equation}
h^+(x,y) =  \left\{ \begin{array}{ll} 
\frac{- \phi(x,y) + \sqrt{\phi^2(x,y) + \psi^2(x,y)}}{\psi(x,y)} & ~~ {\mathrm{if}} ~~ \psi(x,y) \ne 0 \\
0 & ~~{\mathrm{if}} ~~ \psi(x,y) = 0 ~{\mathrm{and}}~ \phi(x,y) > 0 
\end{array} \right. \, .
\label{eq:fplus}
\end{equation}
This is essentially the function $ \tan \theta^+ $ as defined in (\ref{eq:thetaplus2}), and is $ {\mathrm{C}}^1 $ in
$ \Omega_0 \setminus  \Phi_- $ because of Remark~\ref{remark:removable}.
The reason for not defining $ h^+ $ on $ \Phi_- $ is because 
the relevant tangent line becomes vertical.  Hence we define its reciprocal, $ {\mathrm{C}}^1 $ on $ \Omega_0 \setminus \Phi_+ $,  by 
\begin{equation}
\hdown^+ (x,y):=  \left\{ \begin{array}{ll}
\frac{\phi(x,y) + \sqrt{\phi^2(x,y) + \psi^2(x,y)}}{\psi(x,y)} 
& ~~ {\mathrm{if}} ~~ \psi(x,y) \ne 0 \\
0 & ~~{\mathrm{if}} ~~ \psi(x,y) = 0 ~{\mathrm{and}}~ \phi(x,y) < 0 
\end{array} \right. \, .
\label{eq:gplus}
\end{equation}
The minimizing foliation is associated with the angle field $ \theta^- $.  Thus we define on 
$ \Omega_0 \setminus \Phi_+ $, 
\begin{equation}
h^-(x,y) :=  \left\{ \begin{array}{ll}
\frac{- \phi(x,y) - \sqrt{\phi^2(x,y) + \psi^2(x,y)}}{\psi(x,y)} 
& ~~ {\mathrm{if}} ~~ \psi(x,y) \ne 0 \\
0 & ~~{\mathrm{if}} ~~ \psi(x,y) = 0 ~{\mathrm{and}}~ \phi(x,y) < 0 
\end{array} \right. \, , 
\label{eq:fminus}
\end{equation}
which gives the slope field associated with $ \theta^- $, and on $ \Omega_0 \setminus \Phi_- $ its reciprocal
\begin{equation}
\hdown^-(x,y) := \left\{ \begin{array}{ll} \frac{\phi(x,y) - \sqrt{\phi^2(x,y) + \psi^2(x,y)}}{\psi(x,y)} 
& ~~ {\mathrm{if}} ~~ \psi(x,y) \ne 0 \\
0 & ~~{\mathrm{if}} ~~ \psi(x,y) = 0 ~{\mathrm{and}}~ \phi(x,y) > 0 
\end{array} \right. \, .
\label{eq:gminus}
\end{equation}

\begin{property}[Foliations as integral curves]
\label{property:integral}
{\em Within $ \Omega_0 $, a $ \mbox{SORF}_{max}$ curve can be determined by taking an initial point $ (x_0,y_0) $ and then
numerically following 
\begin{equation}
\frac{d y}{d x} = h^+(x,y) ~~~{\mathrm{if}}~~ \left| h^+(x,y) \right| \le 1 ~~~~{\mathrm{and}}~~~~
\frac{d x}{d y} = \, \hdown^+(x,y)~~~{\mathrm{if~else}} \, ,
\label{eq:cogs}
\end{equation}
where we keep switching between the equations depending on the size of $ \left| h^+ \right| $.
This generates a sequence $ (x_i,y_i) $ to numerically approximate an integral curve.  
Similarly, a $ \mbox{SORF}_{min}$ curve can be determined in $ \Omega_0 $ as integral curves of
\begin{equation}
\frac{d y}{d x} = h^-(x,y) ~~~{\mathrm{if}}~~ \left| h^-(x,y) \right| \le 1 ~~~~{\mathrm{and}}~~~~
\frac{d x}{d y} = \, \hdown^-(x,y)~~~{\mathrm{if~else}} \, .
\label{eq:cols}
\end{equation}
}
\end{property}

Property~\ref{property:integral} is an attractive alternative which avoids issues related to the branch cut and vector field discontinuities.  Moreover, it is {\em directly expressed} in terms of the functions $ \phi $ and $ \psi $ via the
straightforward definitions of $ h^\pm $ and $ \hdown^\pm $.  The switching between the $ d y/dx $ and $ dx/dy $ forms
avoids the infinite slopes which may result if only one of these forms is used.  Thus, we can follow a particular curve
as it meanders around $ \Omega_0 $, having vertical and horizontal tangents, and also crossing branch cuts,
with no problem.

%%%%%%%%%%%%%%%%%
\section{Numerical examples of optimal foliations}
\label{sec:numerical}

We will demonstrate applications of the theory to several maps $ \vec{F} $, generated from several applications
of discrete maps, and from sampling flows driven by unsteady velocities.   The examples include situations which are highly disordered
(e.g., maps known to be chaotic under repeated iterations, flows known to possess chaos over infinite times).
Moreover, the maps $ \vec{F} $ need not be area-preserving.

In order to retain sufficient resolution to view relevant features in the many subfigures that we present in this
Section, we will dispense
with axes labels when these are self-evident: $ x $ will be the horizontal axis and $ y $ the vertical as per standard
convention.

%%%%%%%%%%%%%%%%%%
\subsection{H\'enon map}
\label{sec:henon}

\begin{figure}
\centering
 {\includegraphics[width=0.47\textwidth,height=0.21\textheight]{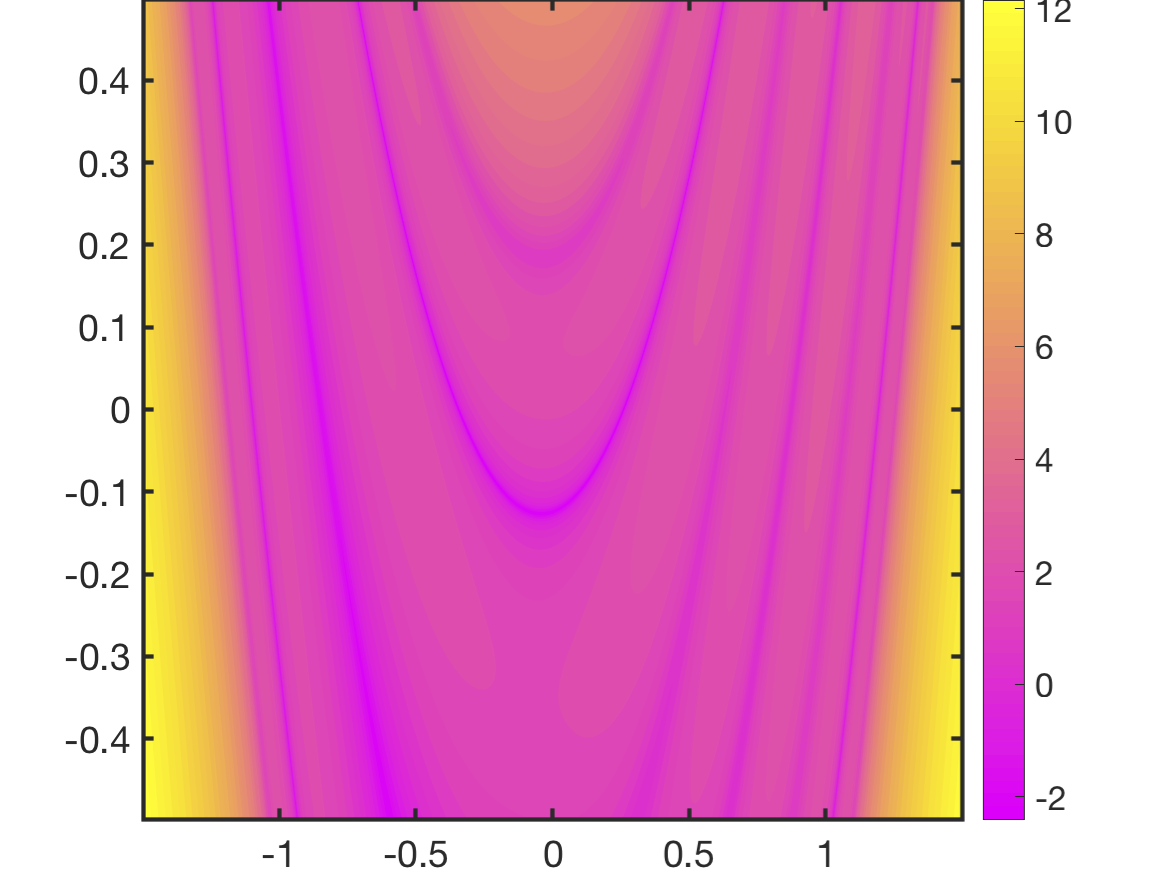}}
 {\includegraphics[width=0.47\textwidth,height=0.21\textheight]{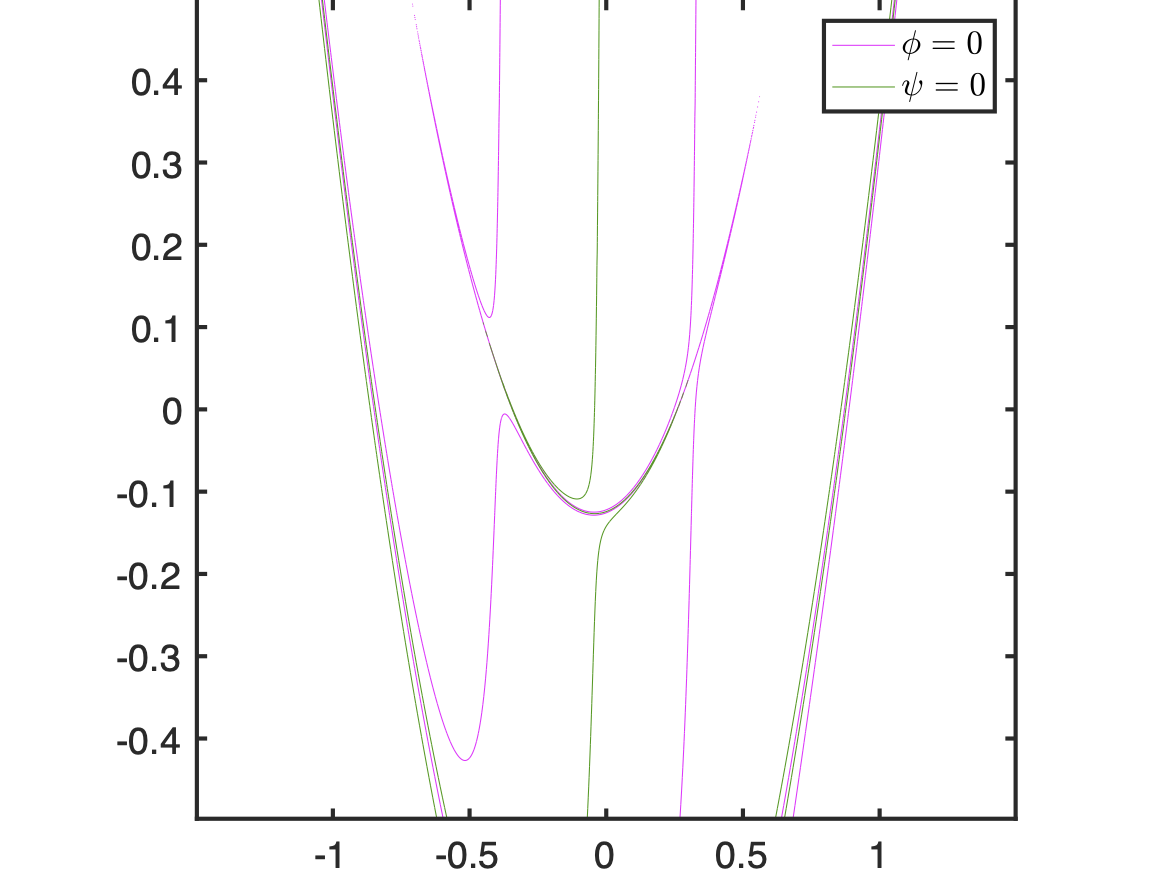}}

\vspace*{0.2cm}
 {\includegraphics[width=0.47\textwidth,height=0.21\textheight]{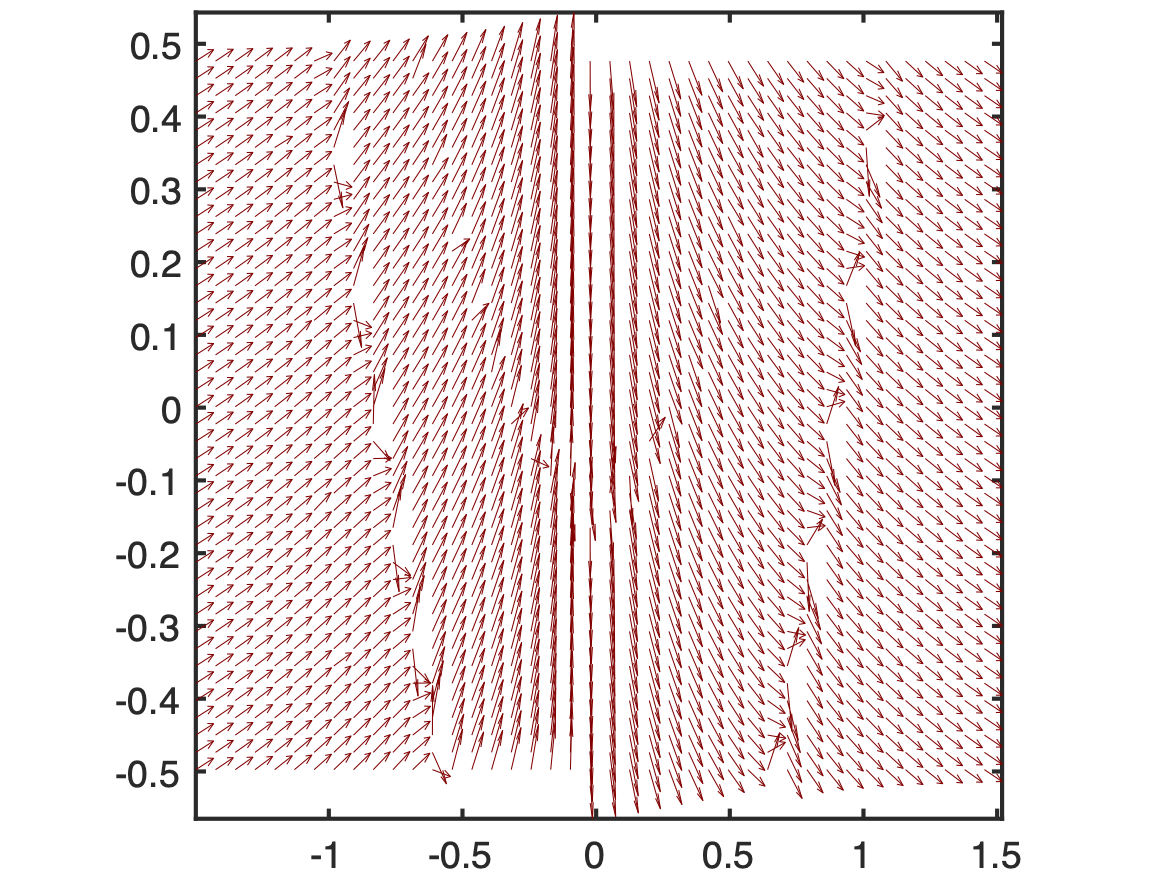}} 
 {\includegraphics[width=0.47\textwidth,height=0.21\textheight]{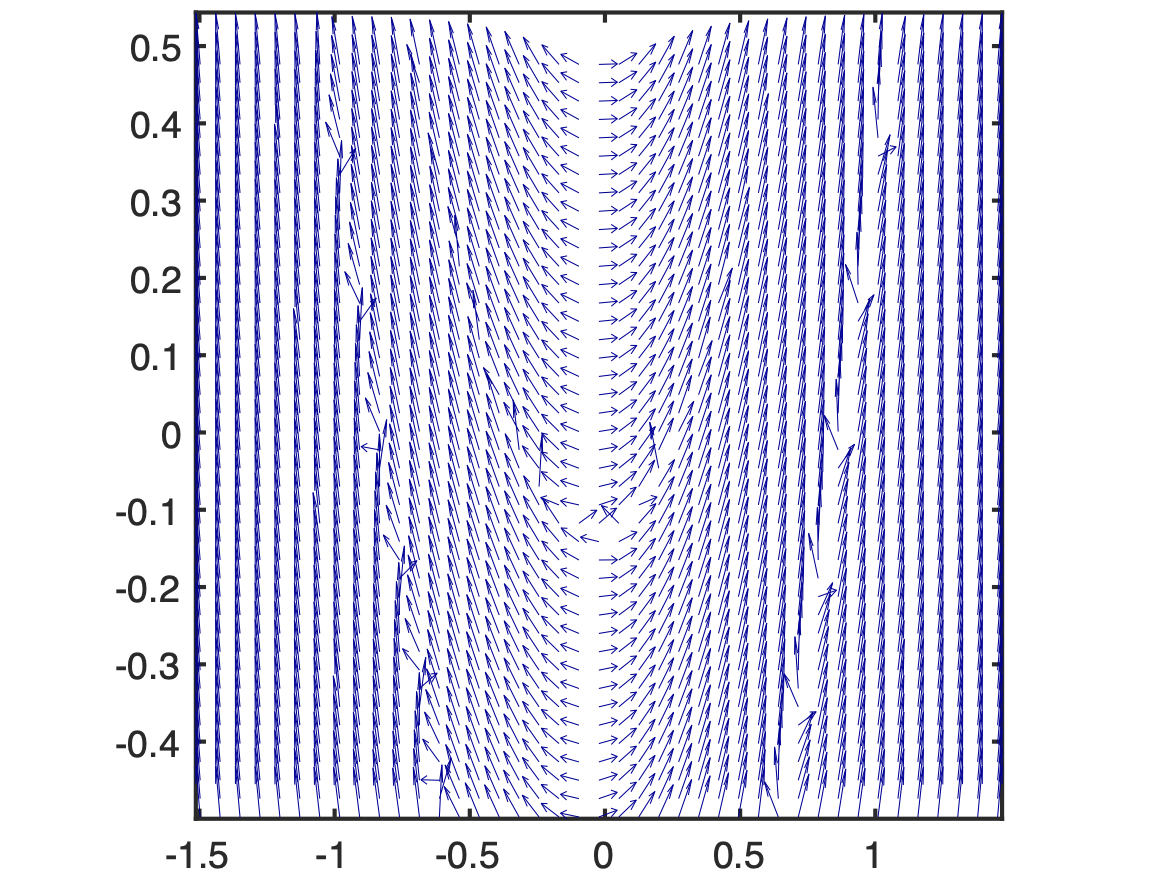}} 

\vspace*{0.2cm}
 {\includegraphics[width=0.47\textwidth,height=0.21\textheight]{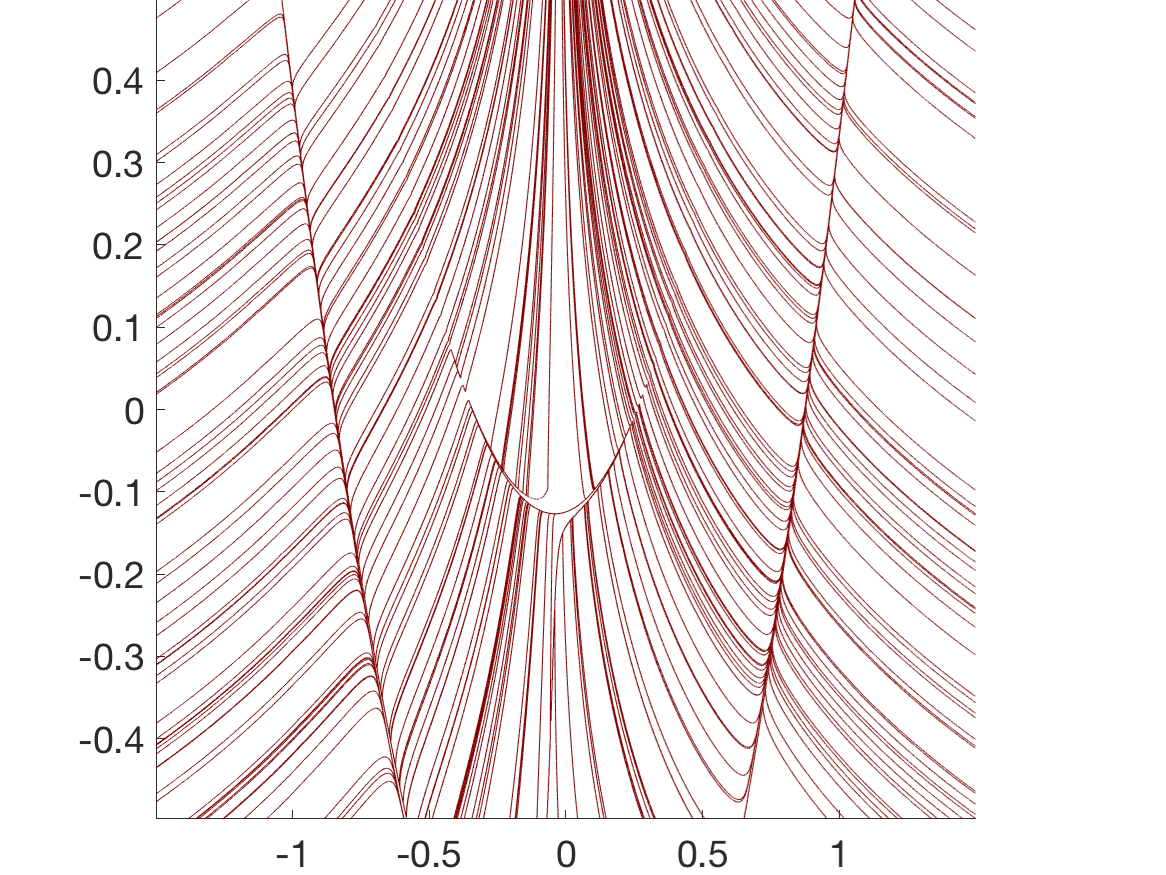}} 
 {\includegraphics[width=0.47\textwidth,height=0.21\textheight]{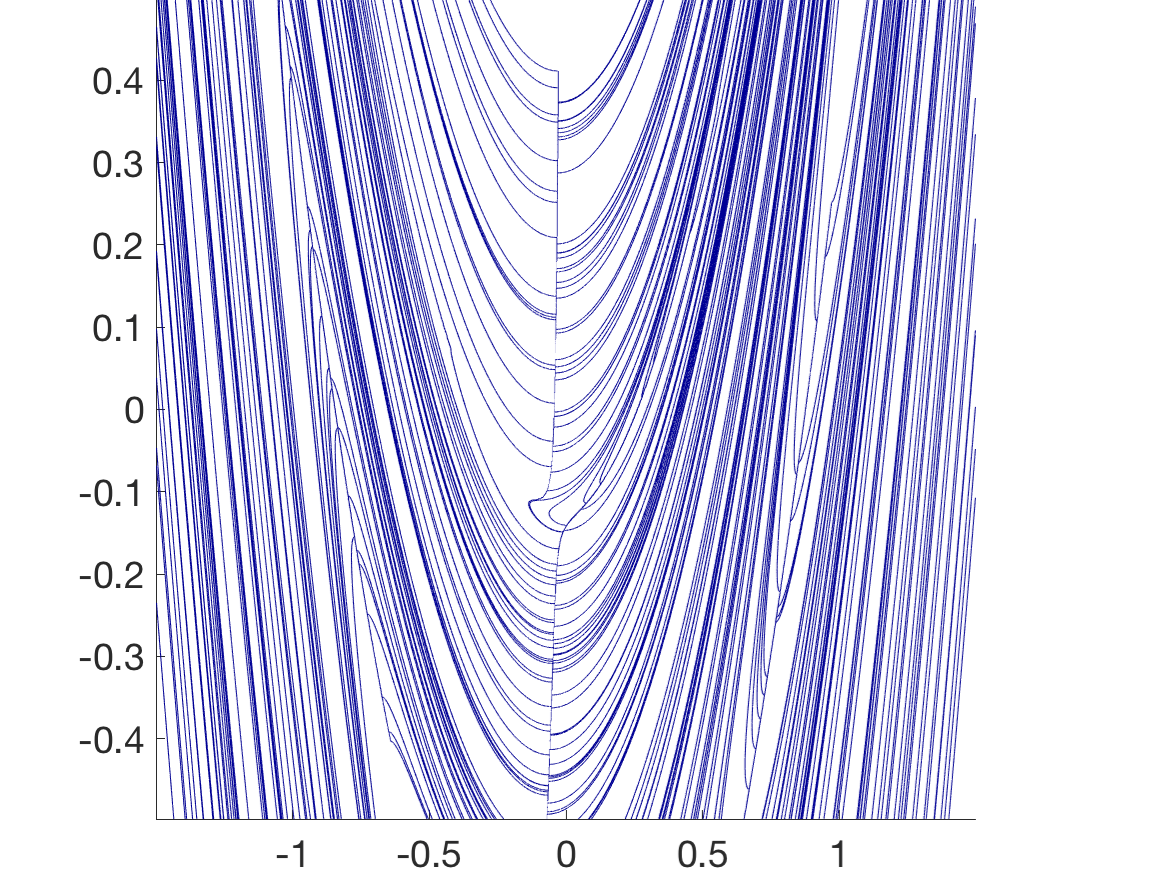}} 

\vspace*{0.2cm}
 {\includegraphics[width=0.47\textwidth,height=0.21\textheight]{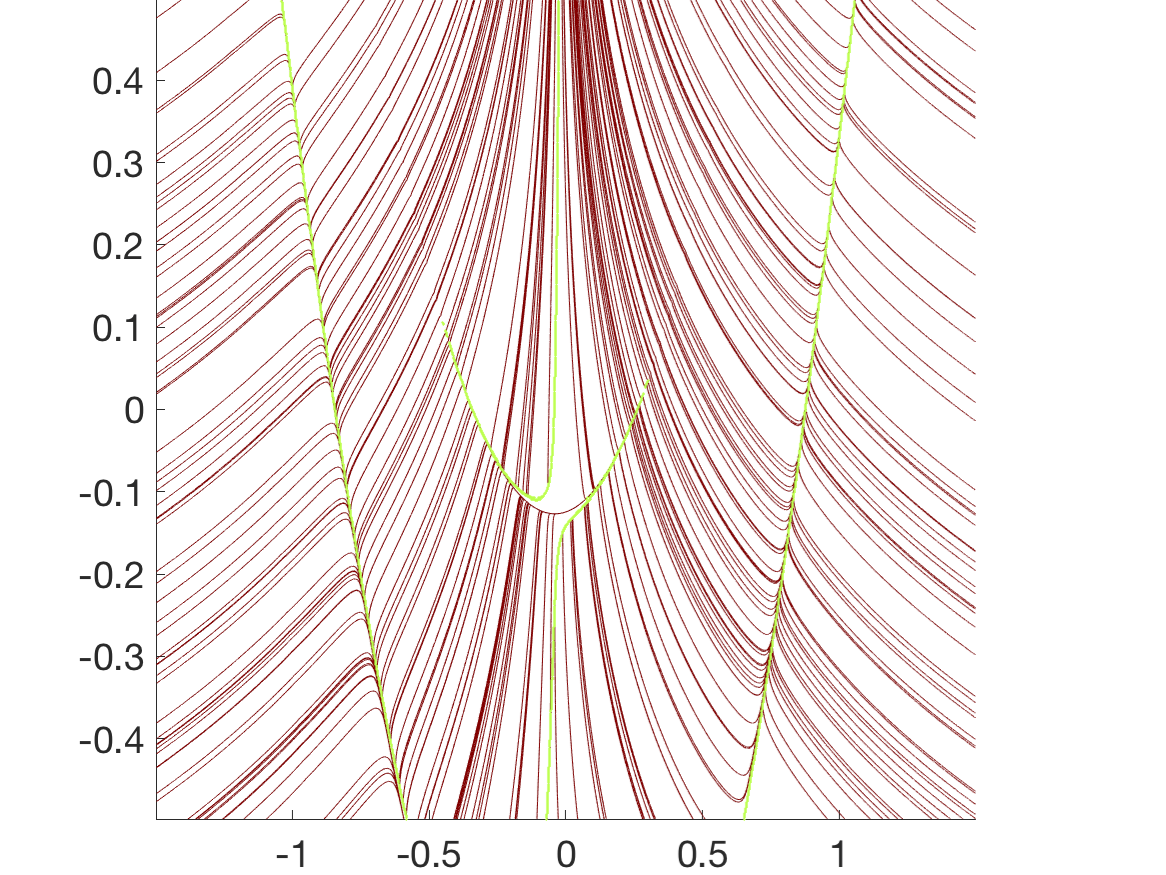}} 
 {\includegraphics[width=0.47\textwidth,height=0.21\textheight]{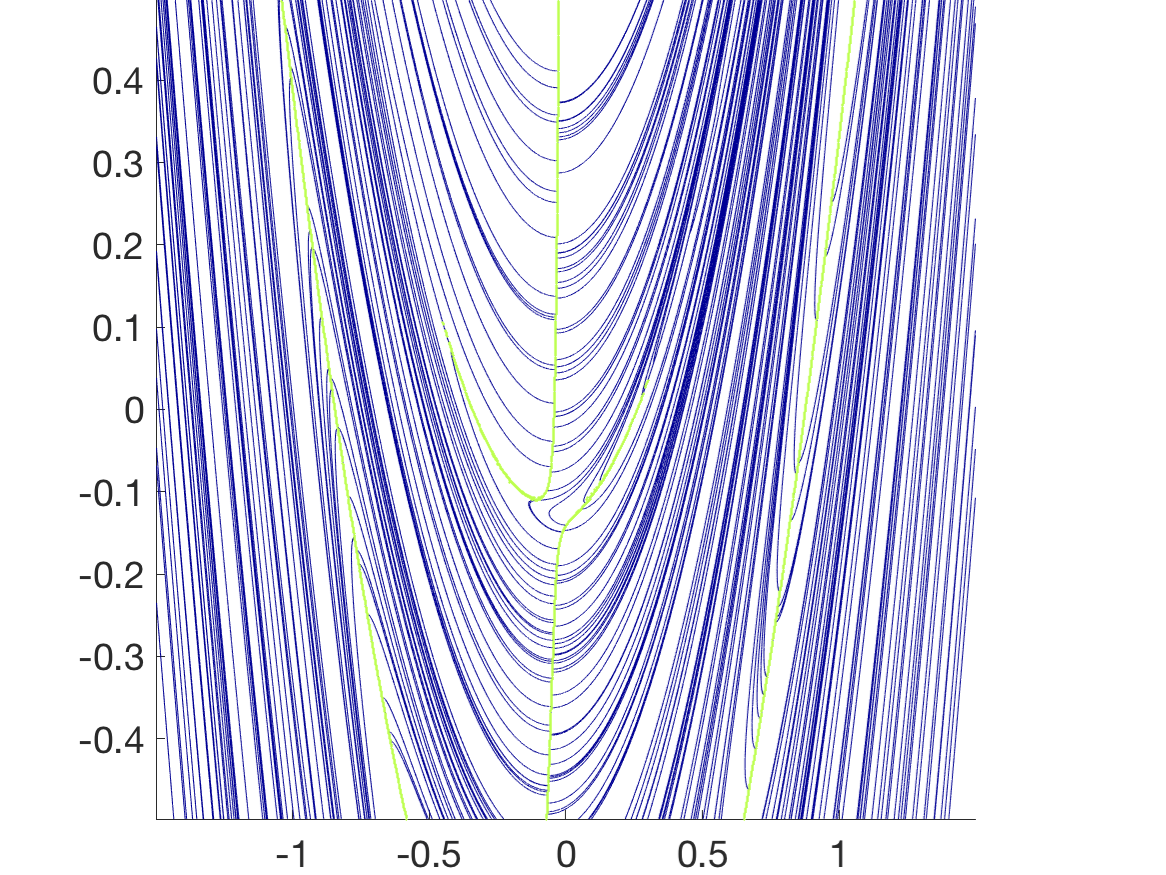}} 

\caption{Optimal foliation computations for $ \vec{F} = \mathfrak{H}^4 $: (a) the logarithm of the maximum
stretching field $ \Lambda_+ $, (b) zero contours of $ \phi $ and $ \psi $, (c)
vector field $ \vec{w}^+  $generated from (\ref{eq:vector_plus}), (d) vector field $ \vec{w}^- $ generated from (\ref{eq:vector_minus}), 
(e) $ \mbox{SORF}_{max}$ by implementing vector field in (c), (f) $ \mbox{SORF}_{min}$ by implementing vector field in (d), (g) 
$ \mbox{SORF}_{max}$ with
branch cut (green), (h) $ \mbox{SORF}_{min}$ with branch cut (green).}
\label{fig:henon}
\end{figure}

As our first example, consider the H\'{e}non map, which is defined by \cite{henon}
\[
\mathfrak{H}(x,y) = \left( \begin{array}{c} 1 - a x^2 + y \\ b x \end{array} \right) 
\]
on $ \Omega = \R^2 $, and where we make the classical parameter choices $ a = 1.4 $ and $ b = 0.3 $.  We choose $ \vec{F} $
to be four iterations of the H\'enon map, i.e., $ \vec{F} = \mathfrak{H}^4 $. Fig.~\ref{fig:henon} demonstrates
the computed foliations and related graphs.  The stretching field $ \Lambda^+ $ is first displayed in Fig.~\ref{fig:henon}(a).  In Fig.~\ref{fig:henon}(b), we show the zero contours of
$ \phi $ and $ \psi $.  In this case, there are no nice transversalities.  Indeed, there are several regions of
almost tangencies, and the fact that several of the zero contours almost coincide in the two outer streaks in the figure, indicate that degenerate foliations are to be expected in their vicinity.  The `squashing together'
that is occurring here is because we are at an intermediate stage in which initial conditions are gradually
collapsing to the H\'enon attractor.  The vector fields $ \vec{w}^\pm $,
computed using (\ref{eq:vector_plus}) and (\ref{eq:vector_minus}) and shown in Figs.~\ref{fig:henon}(c,d)
display discontinuities, which impact the computation of the SORF curves in (e) and (f).  These are 
obtained by seeding 300 initial locations randomly in the domain, and then computing streamlines
generated from (\ref{eq:curvegenerate}) with $ m = 1 $
in forward, as well as backward, $ s $.  Since the $ \phi $ and $ \psi $ fields have large variations at
small spatial scales because of the chaotic nature of the map, finding the branch cut
$ B $ (where where $ \psi = 0 $ and $ \phi < 0 $) as obtained from (\ref{eq:branchcut}) requires care.  We
assess each gridpoint, and color it in
(in green) if it has a different sign of $ \psi $ in comparison to any of its four nearest neighbors, and the
$ \phi $ value at this point is negative.  The lowermost panel overlays the (green) set $ B $ on the SORF curves,
indicating why some of the apparent behavior in (e) and (f) is not representative of the true foliation; the center vertical line in (f), for
example, occurs because of Property~\ref{property:branchcut}(b), while
the $ \mbox{SORF}_{max}$ (resp.\ $ \mbox{SORF}_{min}$) curves stop abruptly on $ B $ if crossing vertically (resp.\ horizontally).

\begin{figure}
\centering

 {\includegraphics[scale=0.2]{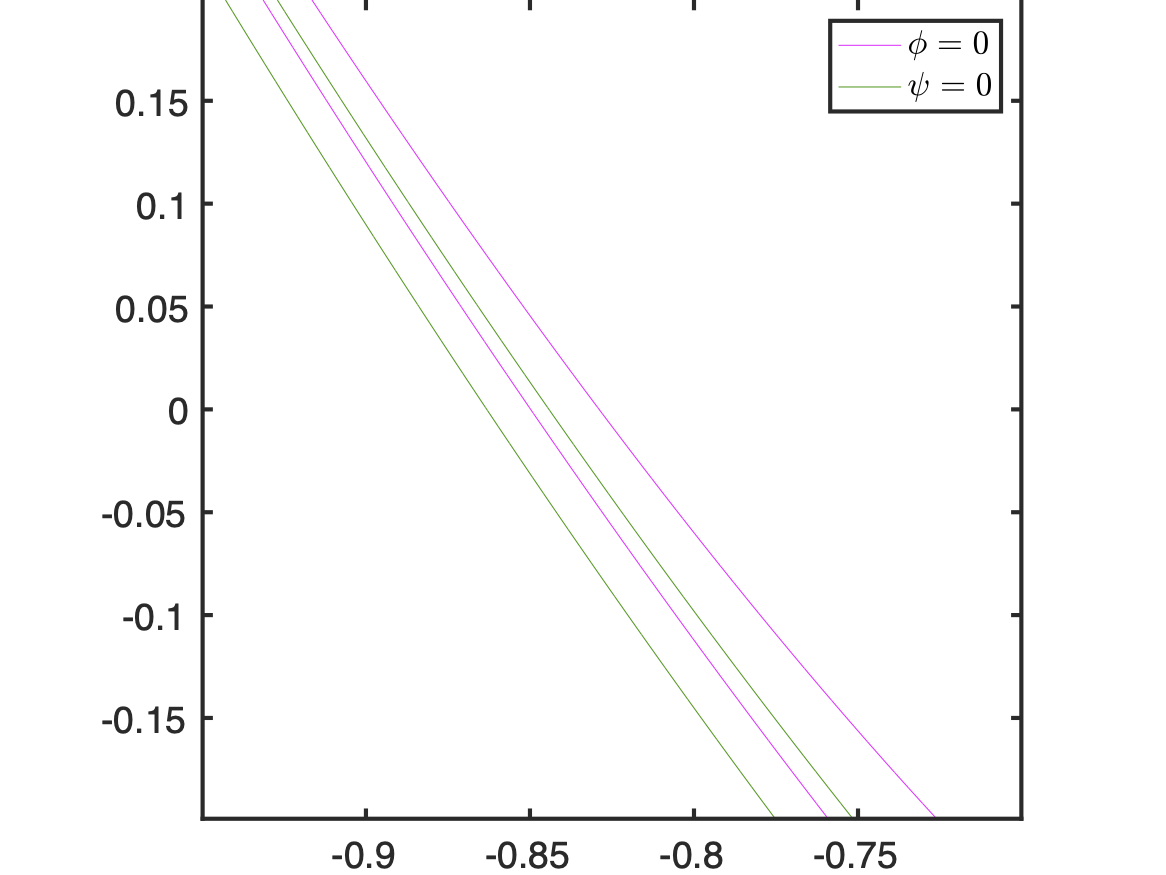}}
 {\includegraphics[scale=0.2]{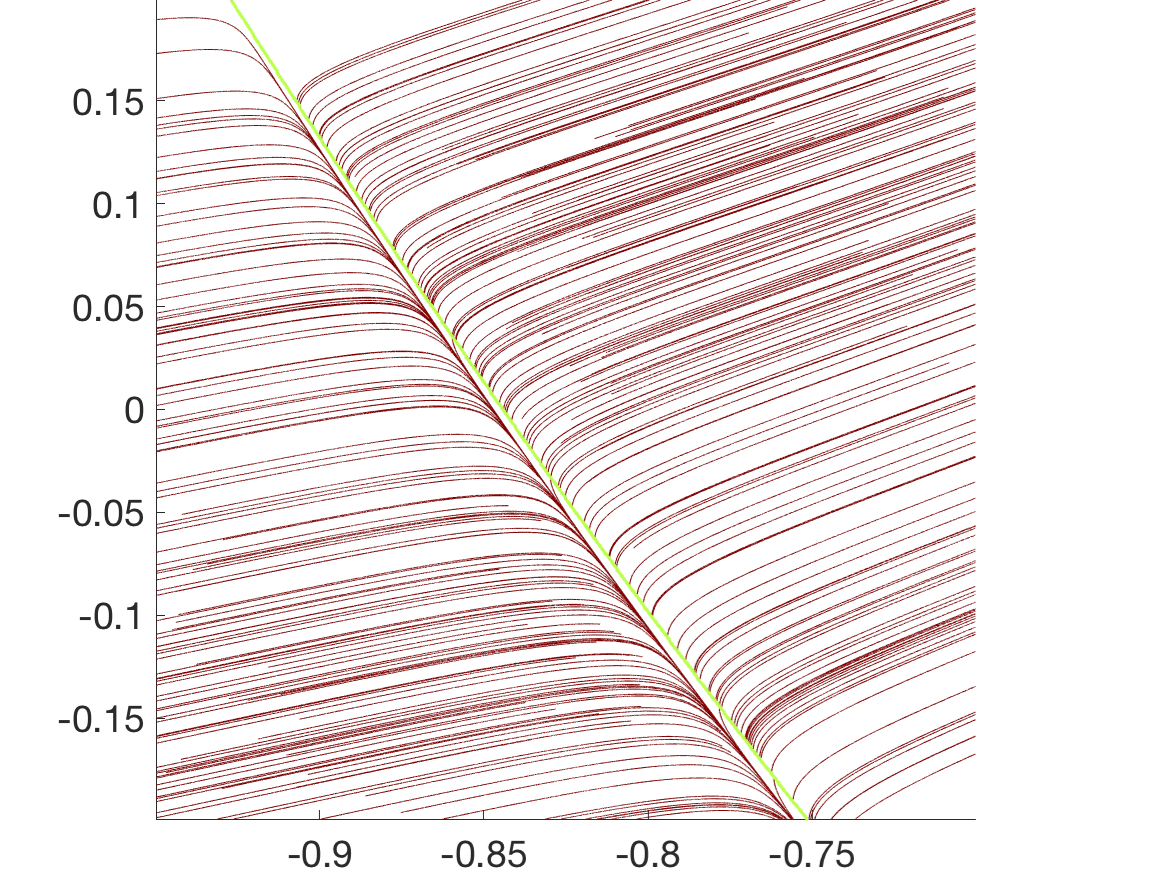}}
 {\includegraphics[scale=0.2]{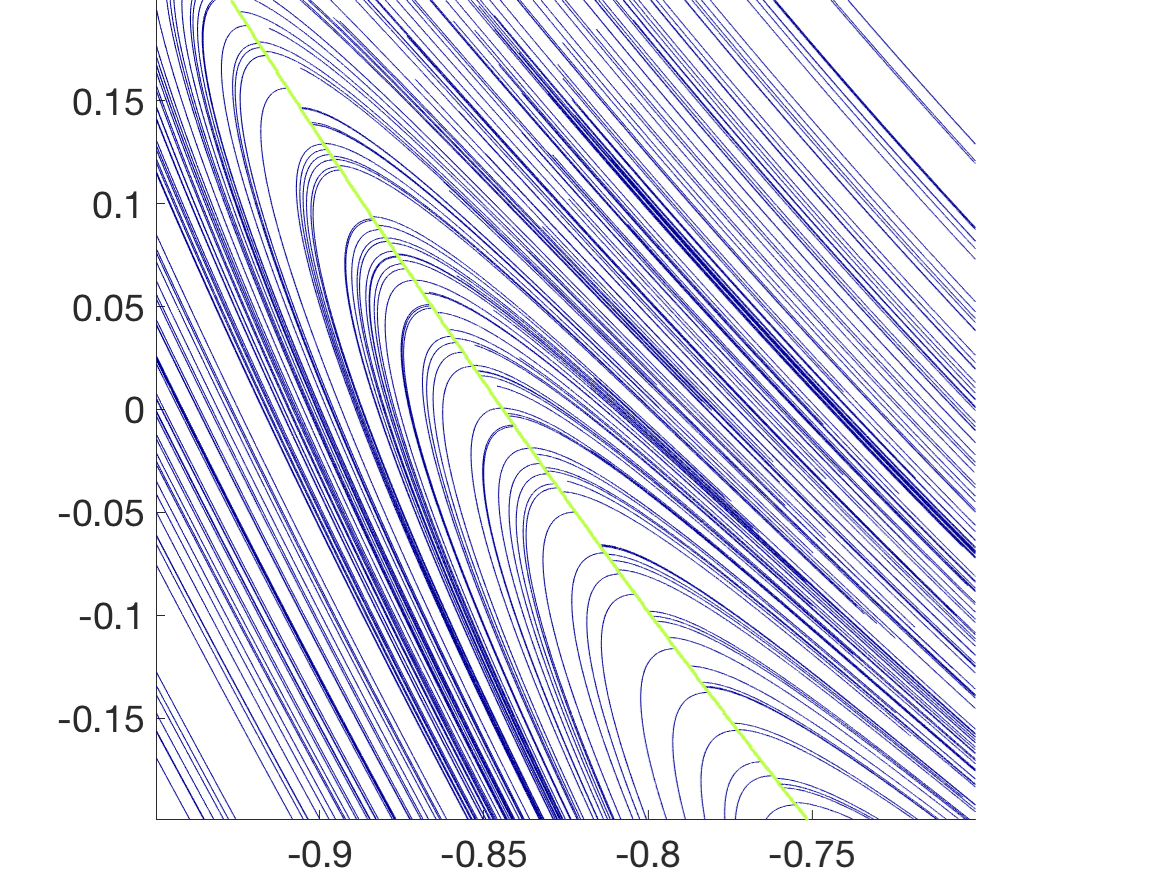}}

\caption{Zooming in to an area associated with the map $ \vec{F} = \mathfrak{H}^4 $ (a) 
the zero contours of $ \phi $ and $ \psi $, (b) the $ \mbox{SORF}_{max}$, and (c) the $ \mbox{SORF}_{min}$.}
\label{fig:henon_zoom}
\end{figure}

On the other hand, Fig.~\ref{fig:henon}(b) indicates that the zero contours of $ \phi $ and $\psi $ almost
coincide on two curves: `outer' and `inner' parabolic shapes.  These are also identified as part of the branch cut set $ B $ because $ \psi \approx 0 $
and $ \phi $ is slightly negative here. These curves are `almost' a curve of $ I $, and we see accumulation of
$\mbox{SORF}_{max}$ curves towards these, indicating---at this level of resolution---potential degeneracy of the foliation.  We zoom in to this in Fig.~\ref{fig:henon_zoom}.  In conjunction with the explanations in Fig.~\ref{fig:fourquadrant},
what occurs here is that the inner green line in Fig.~\ref{fig:henon_zoom}(a) must have a slope field which
is $ - \pi/2 $  (it is in $ \Phi_- = B $ with respect to Fig.~\ref{fig:henon_zoom}), while on the inner pink line it should be $ - \pi/4 $ (corresponding to $ \Psi_- $ in 
Fig.~\ref{fig:fourquadrant}(a)).  The extreme closeness of the contours means that a very sharp change in direction
must be achieved in a tiny region, which then visually appears as a form of degeneracy. 

This example highlights an important computational issue which is very general: even though relevant foliations will exist, in order to
resolve them, one needs a spatial resolution which can resolve the spatial changes in the $ \phi $ and $ \psi $
fields.

%%%%%%%%%%%%%%%%%%
\subsection{Double-gyre flow}
\label{sec:doublegyre}

As an example of when $ \vec{F} $ is generated from a finite-time flow, let us consider the flow map from time
$ t = 0 $ to $ 2 $ generated from the differential equation
\begin{equation}
\renewcommand{\arraystretch}{1.6}
\frac{d}{d t} \left( \begin{array}{c} x \\ y \end{array} \right) = \left( \begin{array}{l}
 - \pi A  \sin \left[ \pi g(x,t) \right] \cos \left[ \pi y \right] \\
 \pi A \cos \left[ \pi g(x,t) \right] \sin \left[ \pi y \right] \frac{\partial g}{\partial x}(x,t)
 \end{array} \right) \, ,
\label{eq:doublegyre}
\end{equation}
in which $ g(x,t) := \eps \sin \left( \omega t \right)  x^2 + \left[ 1 - 2 \eps \sin \left( \omega t \right) \right] x $ 
and $ \Omega = (0,2) \times (0,1) $.   This is the well-studied double-gyre model \cite{shadden}, but we exclude
the boundary of the domain.  We use the parameter values $ A = 1 $, $ \omega = 2 \pi $ and $ \eps =
0.1 $, and the optimal reduced foliations are demonstrate in Fig.~\ref{fig:doublegyre}.

\begin{figure}
\centering
 {\includegraphics[scale=0.32]{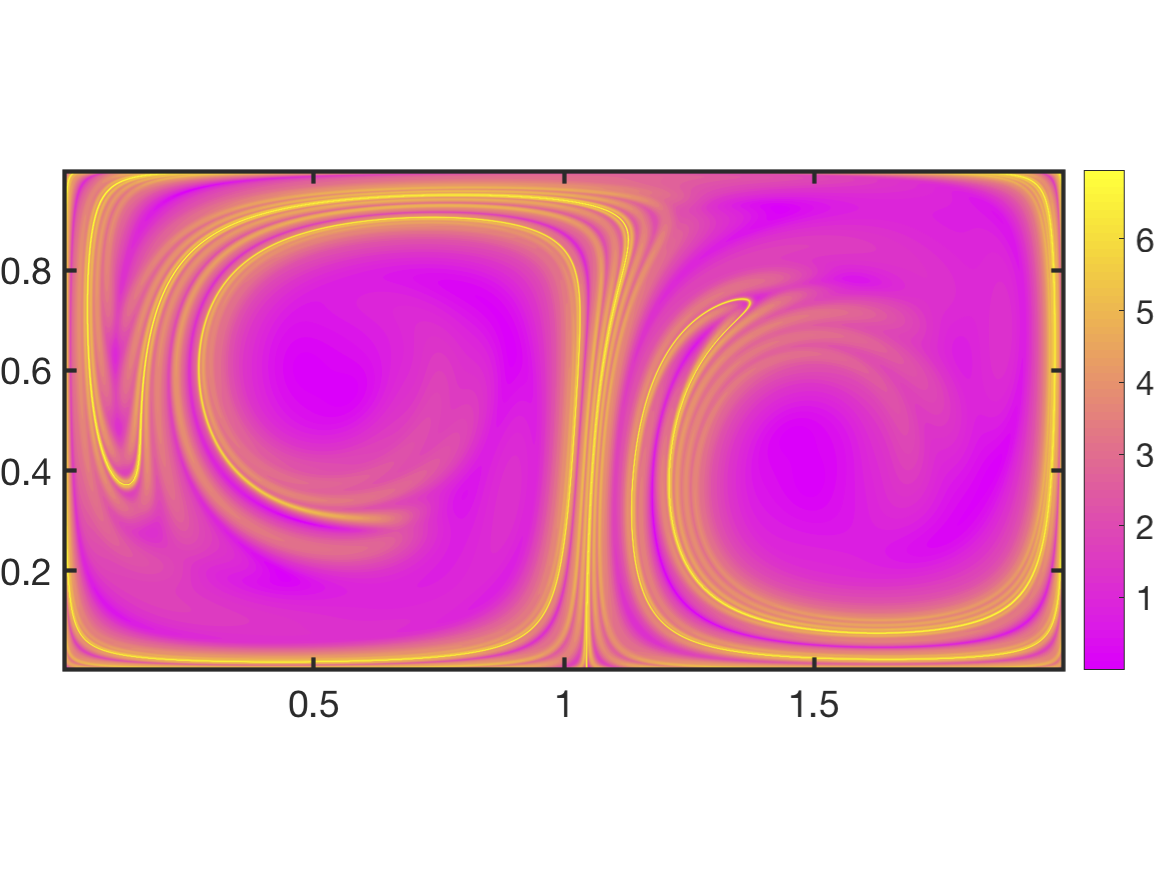}}
 {\includegraphics[scale=0.3]{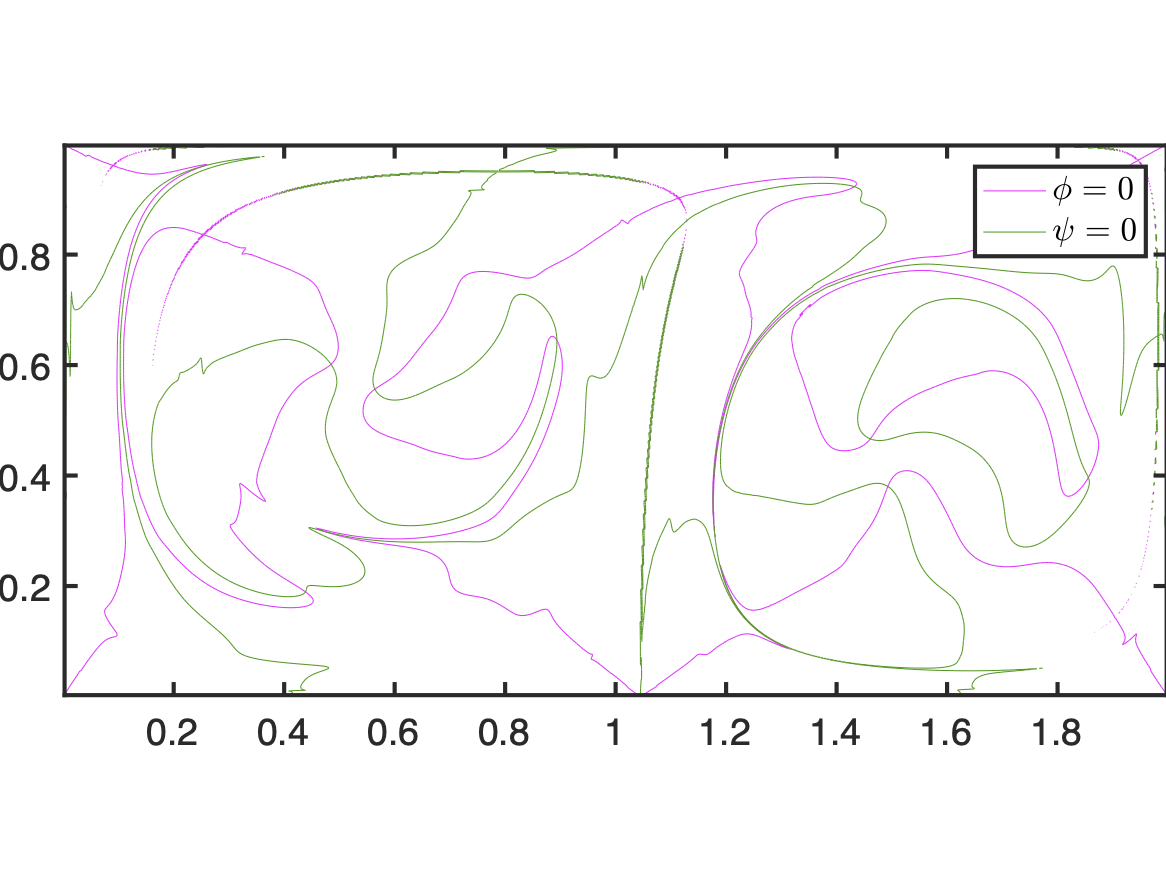}}

 {\includegraphics[scale=0.3]{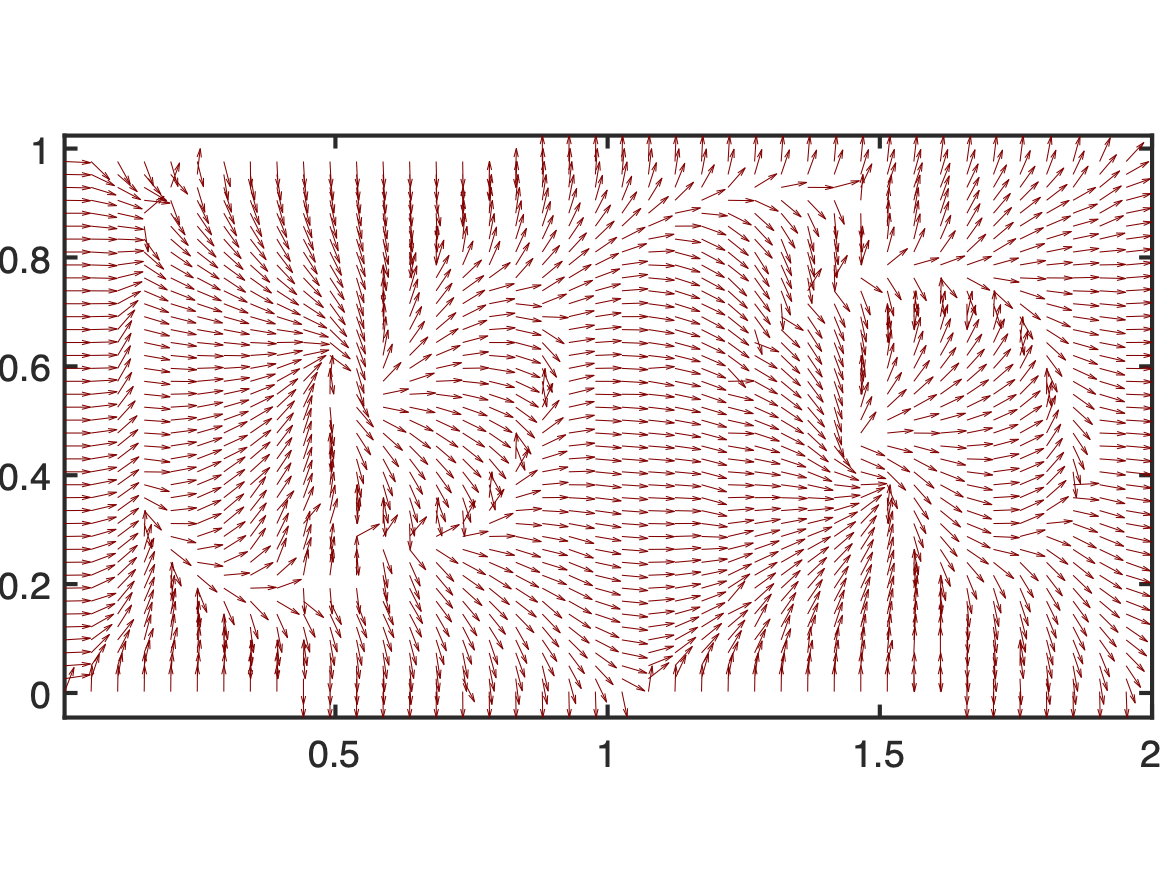}} 
 {\includegraphics[scale=0.3]{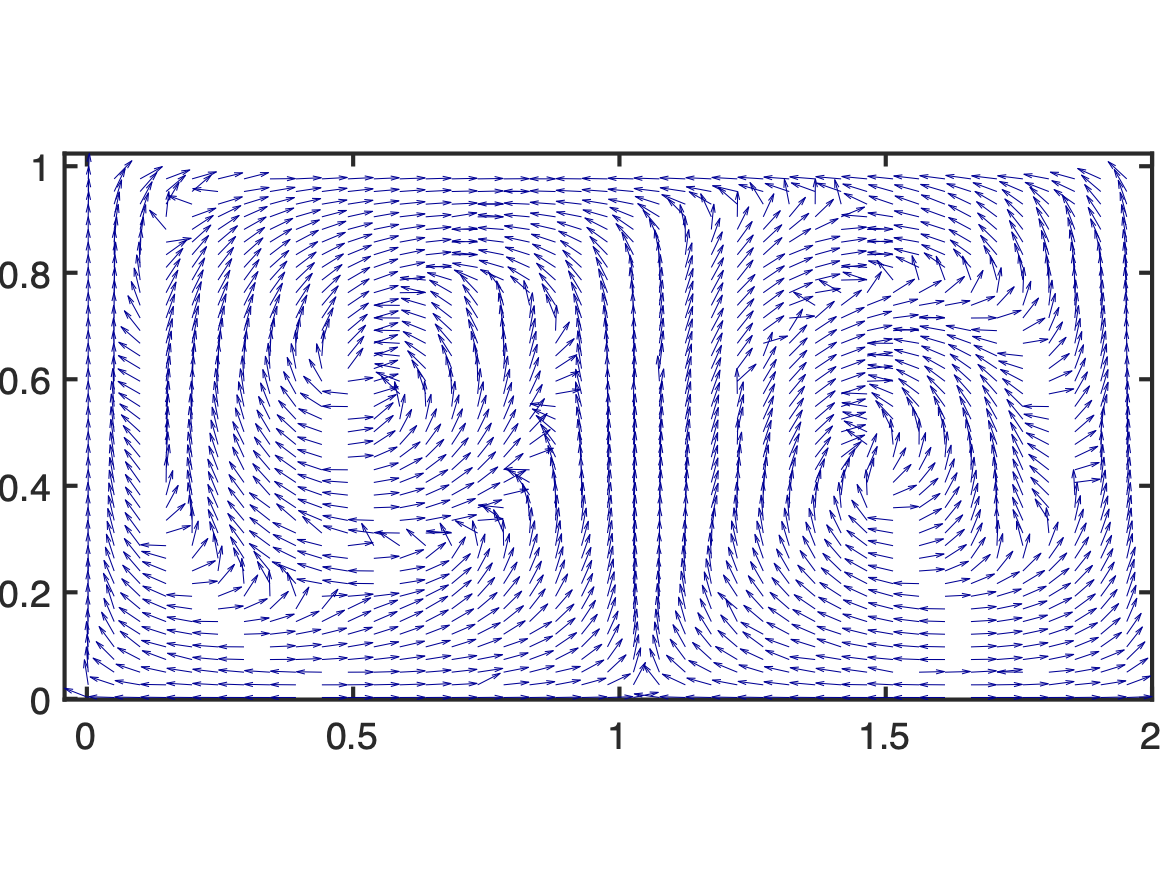}} 

 {\includegraphics[scale=0.3]{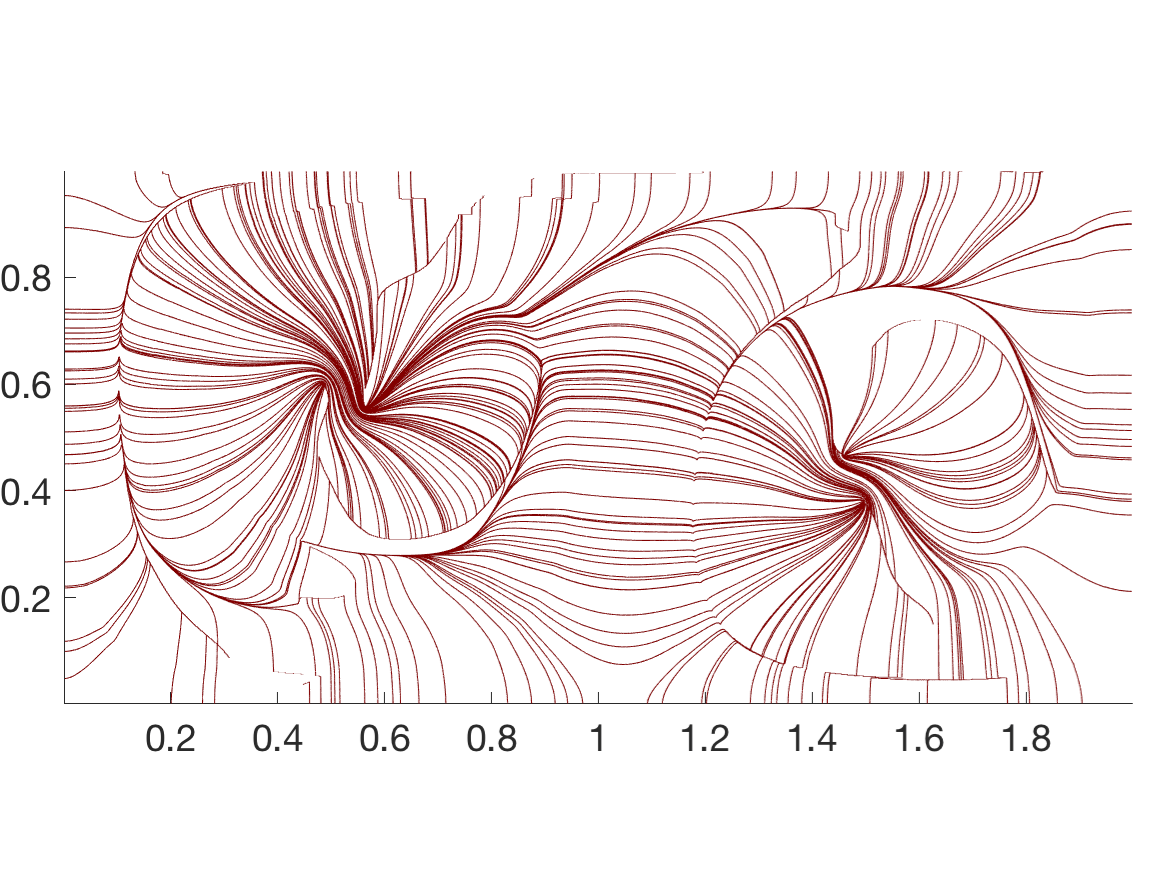}} 
 {\includegraphics[scale=0.3]{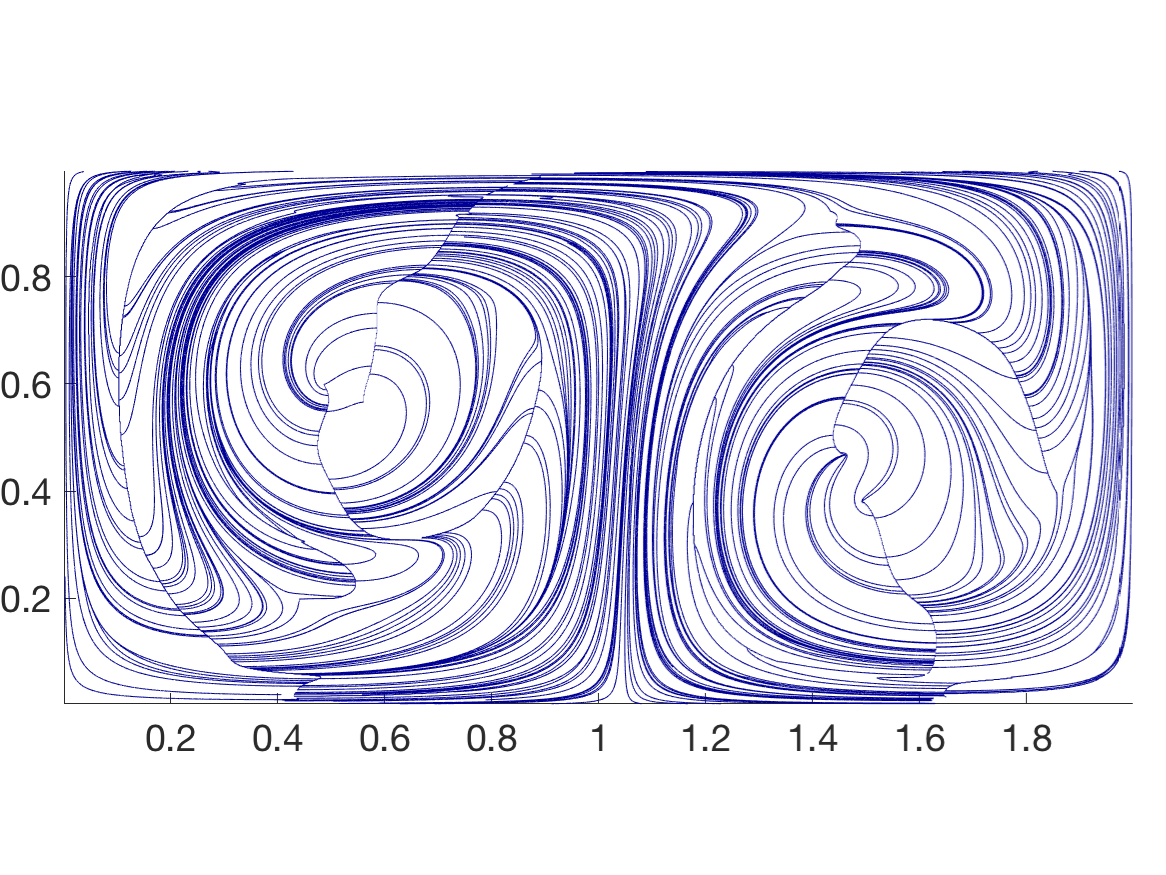}} 

 {\includegraphics[scale=0.3]{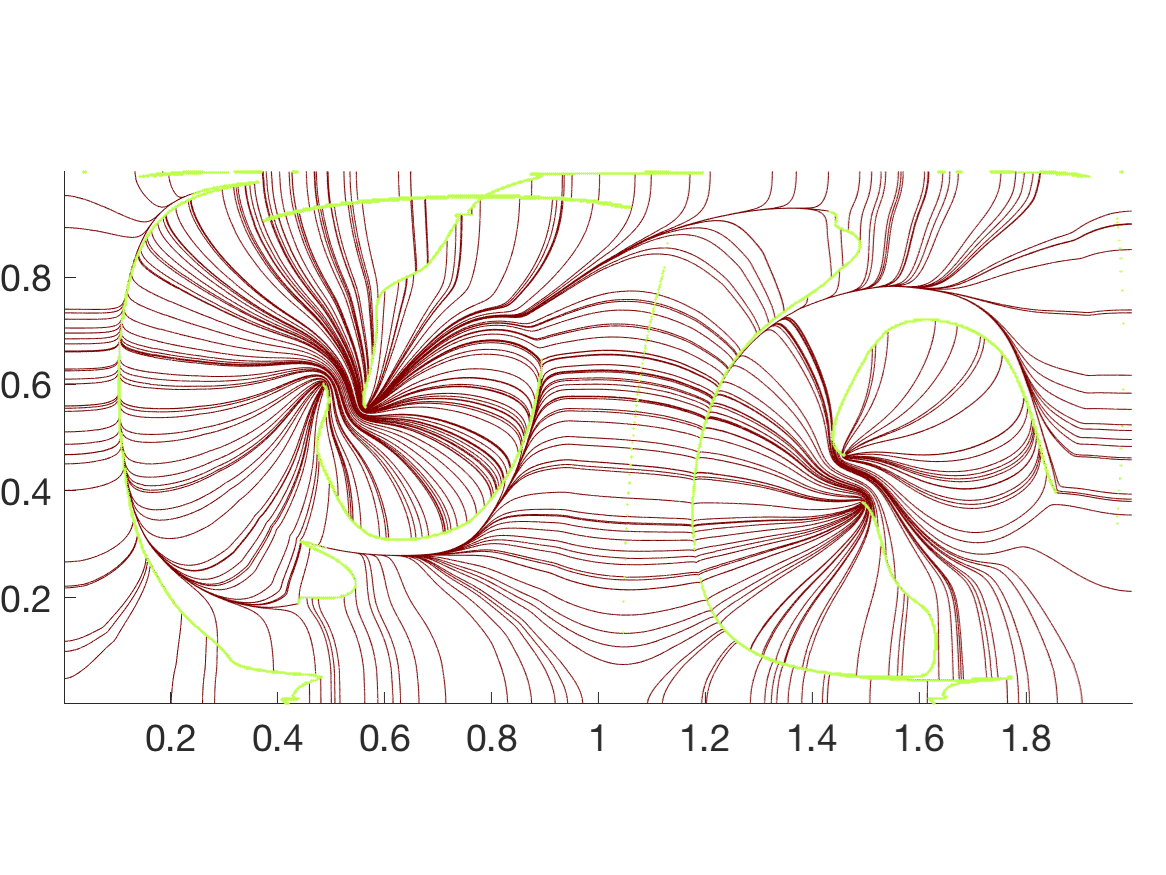}} 
 {\includegraphics[scale=0.3]{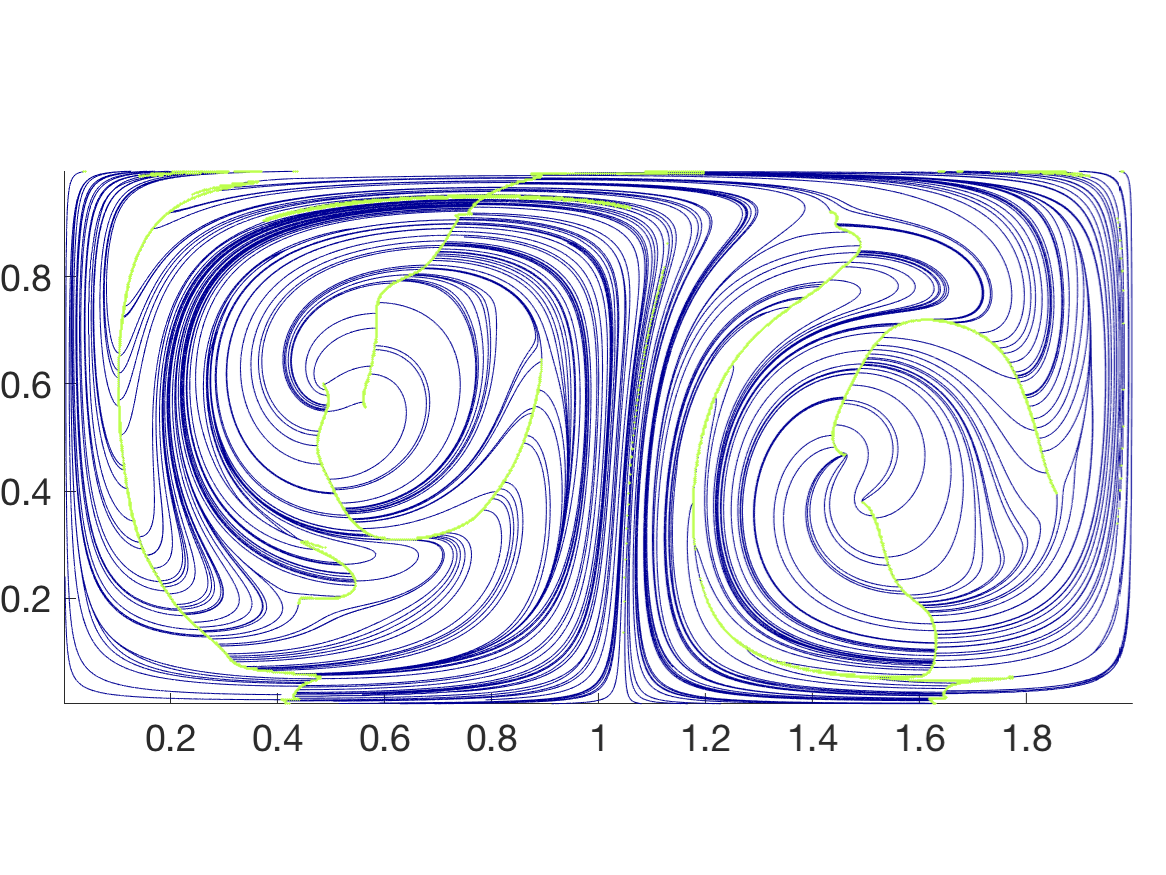}} 

\caption{Optimal foliation computations for the double-gyre flow: (a) The logarithm of the field $ \Lambda^+ $, (b) zero contours of $ \phi $ and $ \psi $, (c)
vector field $ \vec{w}^+  $generated from (\ref{eq:vector_plus}), (d) vector field $ \vec{w}^- $ generated from (\ref{eq:vector_minus}), 
(e) $ \mbox{SORF}_{max}$ by implementing vector field in (c), (f) $ \mbox{SORF}_{min}$ by implementing vector field in (d), (g) $ \mbox{SORF}_{max}$ with
branch cut (green), (g) $ \mbox{SORF}_{min}$ with branch cut.}
\label{fig:doublegyre}
\end{figure}

Fig.~\ref{fig:doublegyre}(a) is a classical figure in this context: the logarithm of the field $ \Lambda^+ $; if divided
by the time-of-flow $ 2 $, this is the finite-time Lyapunov exponent field.  Fig.~\ref{fig:doublegyre}(b)
indicates the $ \phi = 0 $ and $\psi = 0 $ contours, with their intersections defining $ I $.  We use the `standard'
$ \vec{w}^\pm $ unit versions, Eq.~(\ref{eq:vector_plus}), to generate the vector fields in (c) and (d), and the corresponding SORFs
are determined in (e) and (f).  Figs.~\ref{fig:doublegyre}(g) and (h) overlay the branch cuts (green),
which are parts of the green curves in Fig.~\ref{fig:doublegyre}(b) at which $ \phi < 0 $.
As expected, the $ \mbox{SORF}_{max}$ curves fail to cross the branch cut vertically, as do the $ \mbox{SORF}_{min}$ curves
horizontally.  Moreover, foliation curves which do get pushed in towards the branch cuts tend to meander along 
them, giving an impact of spurious accumulations.  We zoom in towards one of these regions in
Fig.~\ref{fig:dg_accumulate}; the $ \mbox{SORF}_{max}$ curves requirements of having slopes $ - \pi/4  $ (resp.\ $ + \pi/2 $)
on $ \Phi_-$ (resp.\ $ \Phi_+ $) result in abrupt curving.  The accumulation is not exactly to $ \Psi_- $, but rather to
a curve which is very close, as seen in Fig.~\ref{fig:dg_accumulate}(b).  Thus, it is {\em not} true that there is a
one-dimensional part of the isotropic set $ I $ along here. The geometric insights of the previous sections allows
us to understand and interpret these issues, while appreciating how resolution may give misleading visual cues.

\begin{figure}
\centering
 {\includegraphics[scale=0.3]{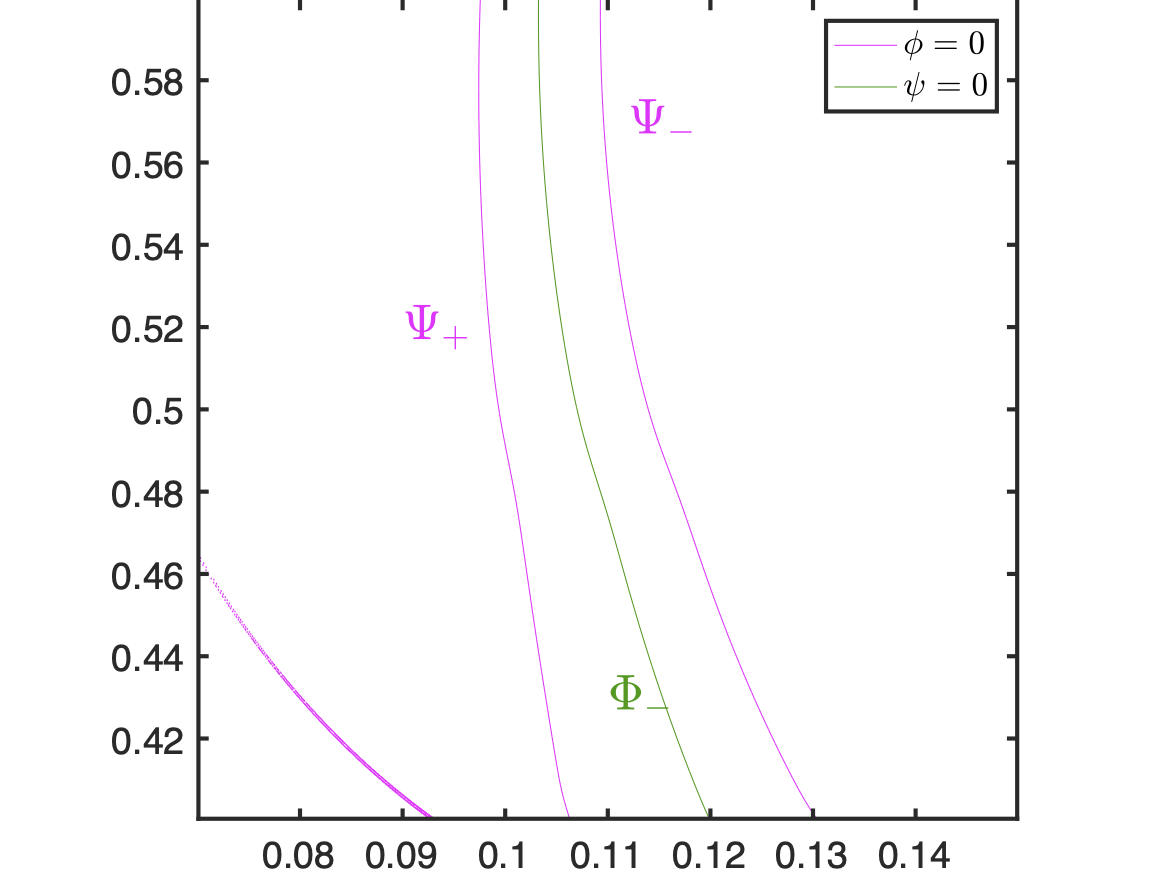}} 
 {\includegraphics[scale=0.3]{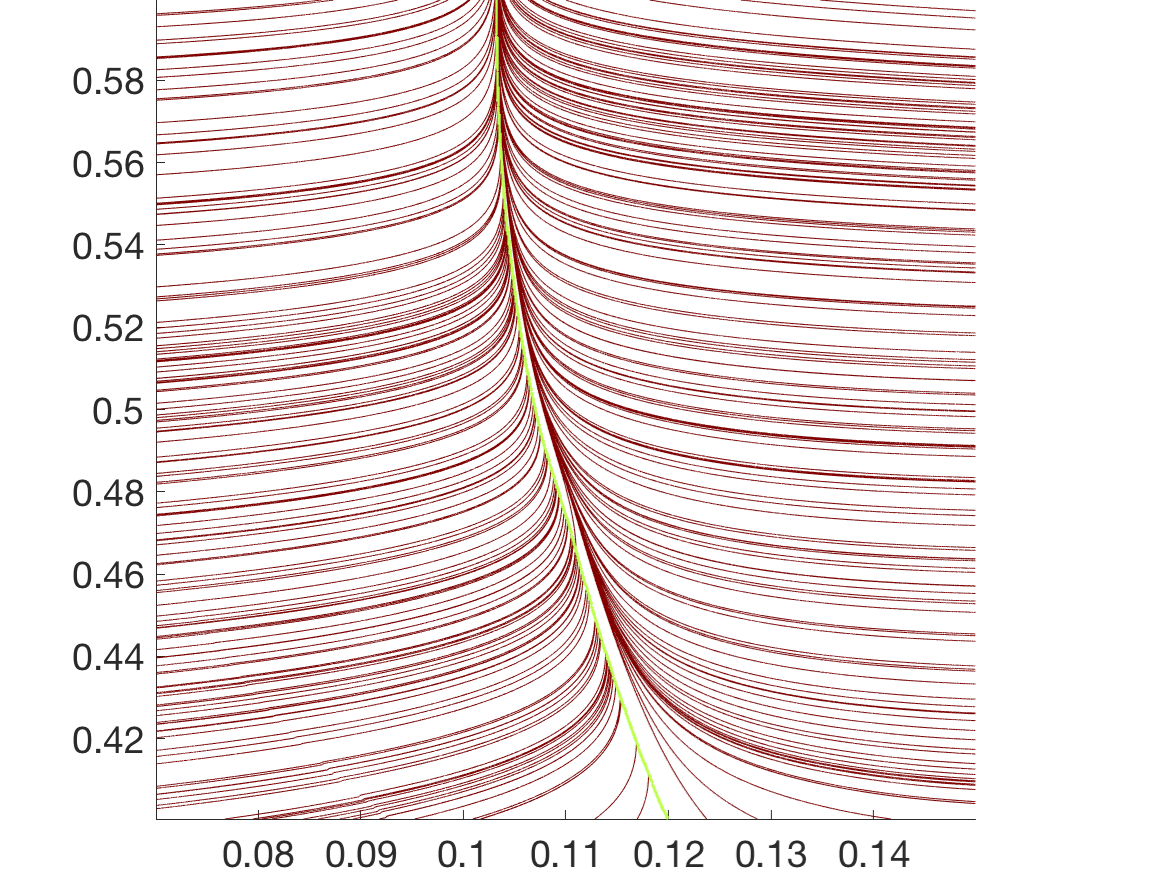}} 
\caption{Zooming in to near an `accumulating' $ \mbox{SORF}_{max}$ from Fig.~\ref{fig:doublegyre}: (a) the relevant zero contours
of $ \phi $ and $ \psi $, and (b) the $ \mbox{SORF}_{max}$.}
\label{fig:dg_accumulate}
\end{figure}

\begin{figure}
\centering
 {\includegraphics[scale=0.3]{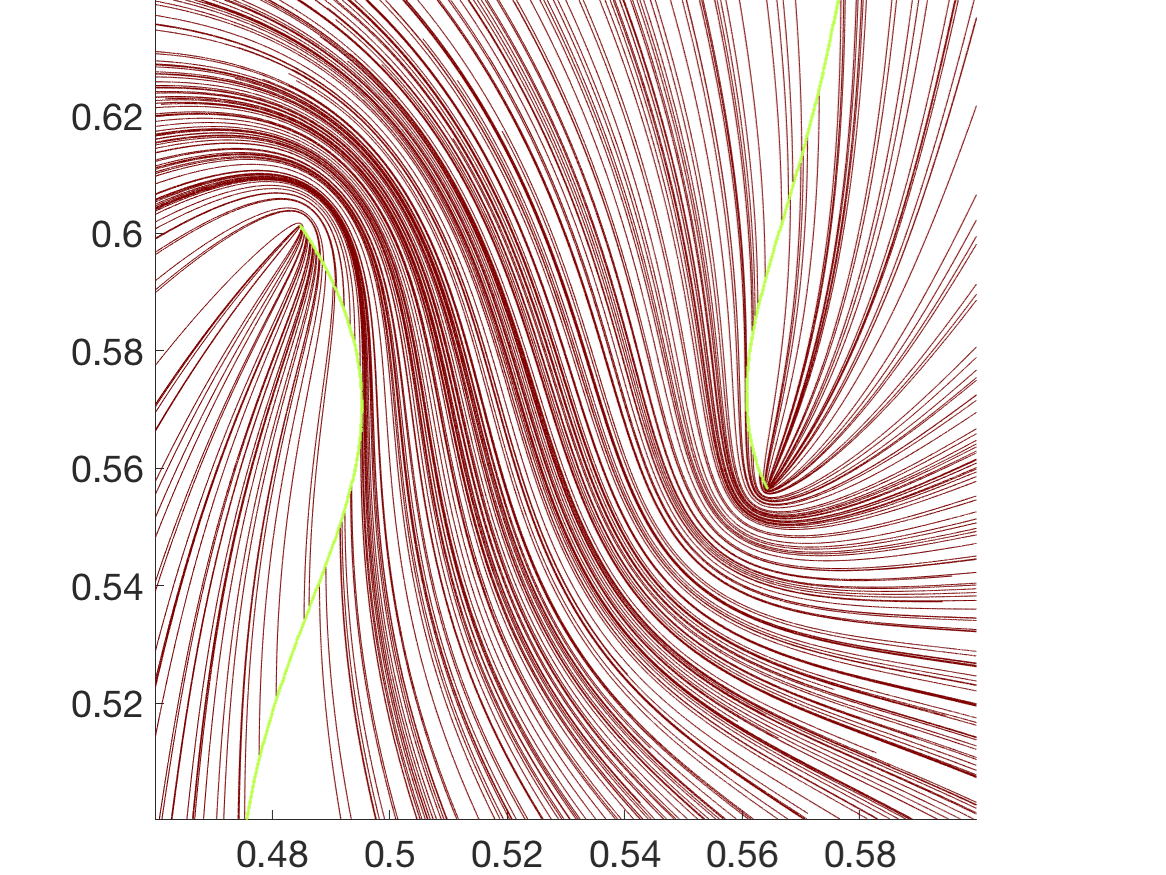}} 
 {\includegraphics[scale=0.3]{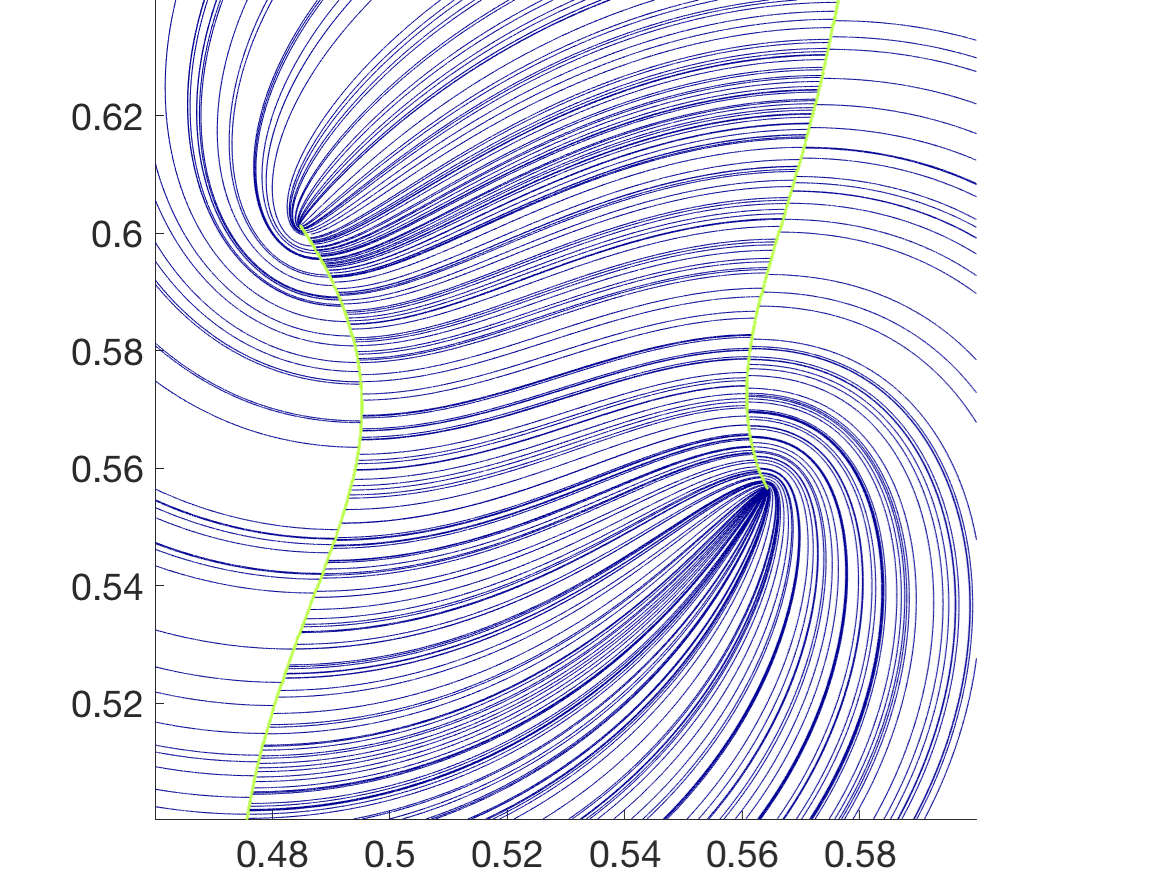}} 

\vspace*{0.2cm}
 {\includegraphics[scale=0.3]{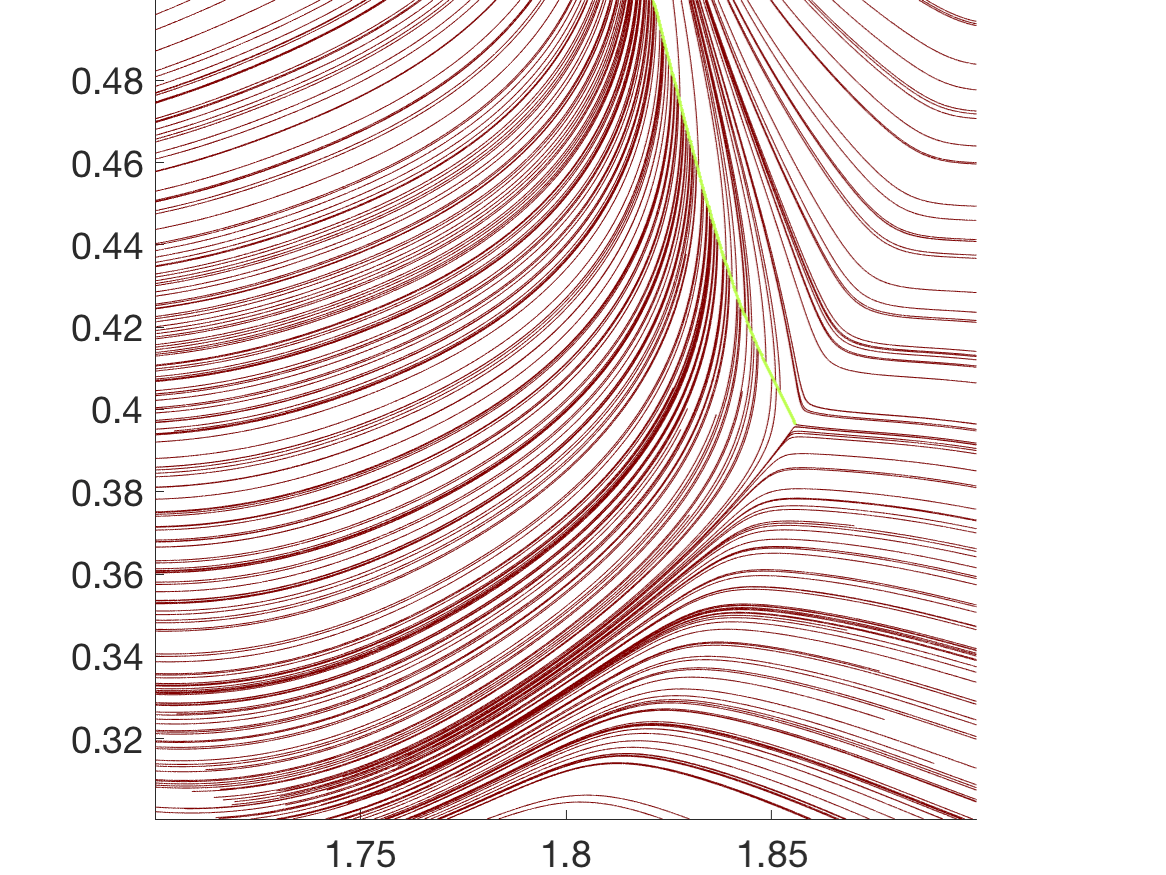}} 
 {\includegraphics[scale=0.3]{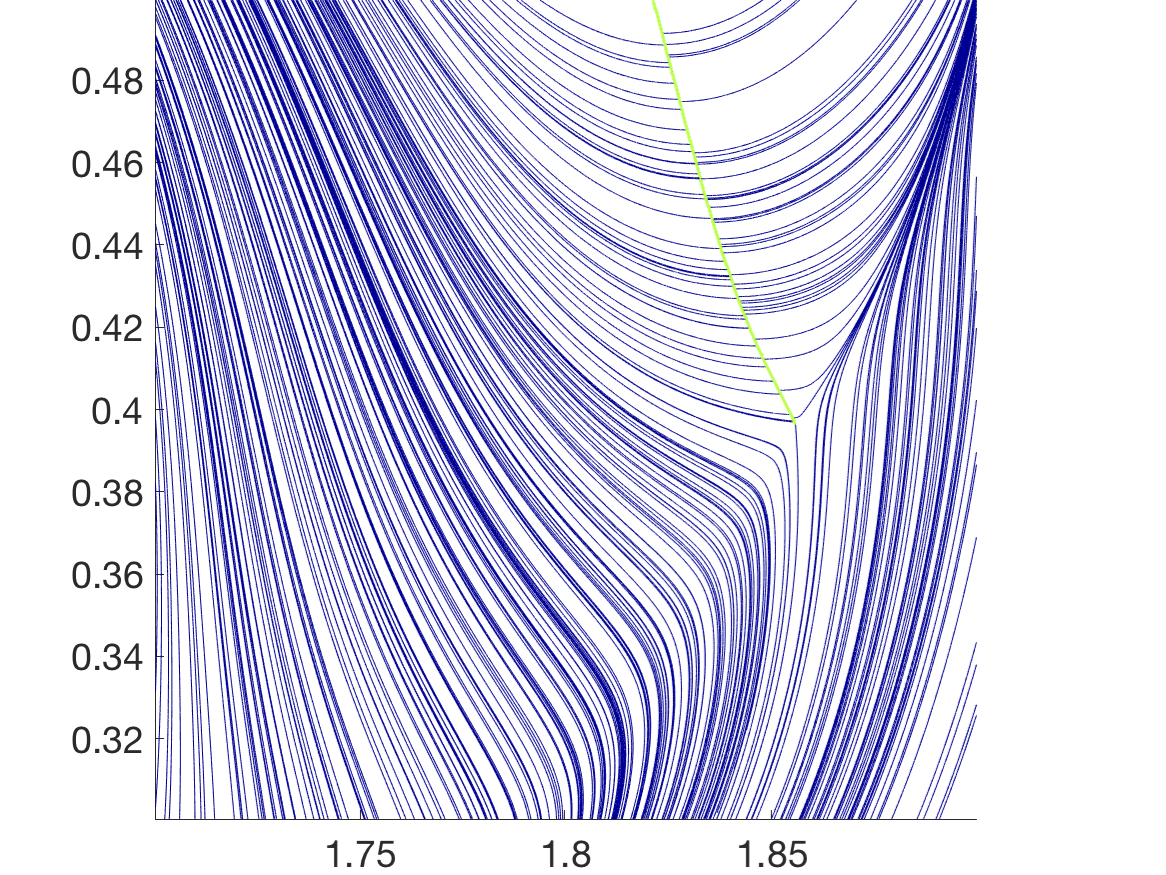}} 

\caption{Zooming in to the $ \mbox{SORF}_{max}$ (left) and $ \mbox{SORF}_{min}$ (right) in the double-gyre.  The top and bottom
panels correspond to different locations, respectively near two adjacent intruding ($ 1 $-pronged) points, and a separating  ($ 3 $-pronged) point.  The branch cut is shown in green.
 Compare to Fig.~\ref{fig:classify} and  Property \ref{property:singularities}.}
\label{fig:doublegyre_zoom}
\end{figure}

In Fig.~\ref{fig:doublegyre_zoom}, we zoom in to two difference locations, chosen by zeroeing in to two
different intersection points of the zero $ \phi $ and $ \psi $-contours.   The top panels illustrate the
$ \mbox{SORF}_{max}$ (left) and the $ \mbox{SORF}_{min}$ (right) curves at the same location.  The theory related to $ 1 $-pronged intruding points is well-demonstrated,
with there being two such points adjacent to each other.  The two orthogonal families
`reverse' the locations of the singularities for the maximizing and minimizing foliations, and the 
branch cut (green) forms vertical/horizontal barriers as appropriate.
In contrast, the bottom figures are of a $ 3 $-pronged separating point; again, the numerics validate the theory.  

%%%%%%%%%%%%
\subsection{Chirikov map}
\label{sec:chirikov}

\begin{figure}
\centering
 {\includegraphics[scale=0.3]{chirikov_k2_T4_stretch.png}}
 {\includegraphics[scale=0.3]{chirikov_k2_T4_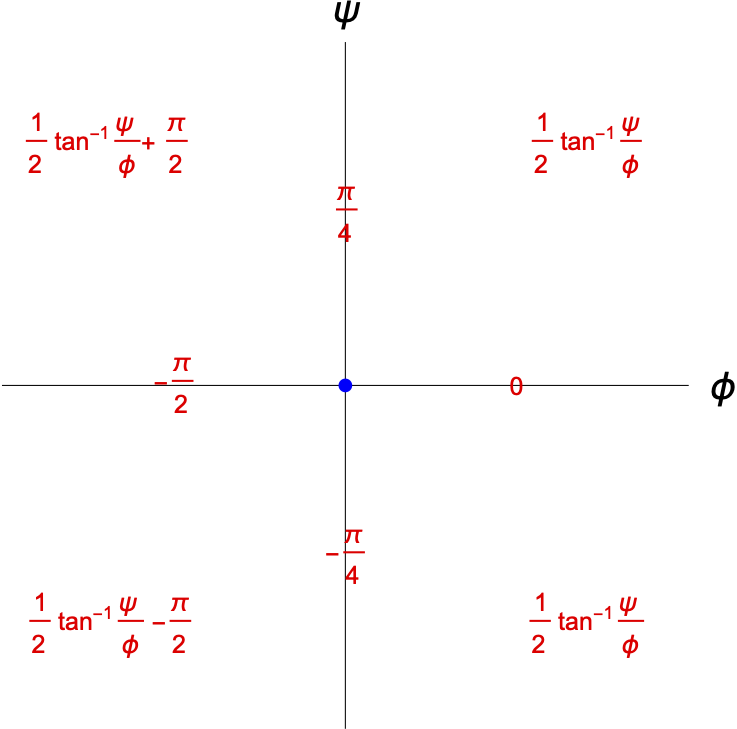}} 

\vspace*{0.2cm}
 {\includegraphics[scale=0.3]{chirikov_k2_T4_fogs.png}} 
 {\includegraphics[scale=0.3]{chirikov_k2_T4_fols.png}} 

\caption{Optimal foliation computations for the Chirikov map $ \vec{F} = \mathfrak{C}_2^4 $: (a) the
logarithm of the field $ \Lambda^+ $, (b) zero contours of $ \phi $ and $ \psi $, 
(c) $ mbox{SORF}_{max}$ with branch cut (green), (d) $ \mbox{SORF}_{min}$ with branch cut (green).}
\label{fig:chirikov}
\end{figure}

The Chirikov (also called `standard') map is defined on the doubly-periodic domain $ \Omega = [0,2\pi) \times [0,2\pi) $ by
\cite{chirikov}
\[
\mathfrak{C}_k(x,y) = \left( \begin{array}{c} x + y + k \sin x ~~~({\mathrm{mod}} \, \, 2 \pi)  \\
y  + k \sin x ~~~({\mathrm{mod}} \, \,  2 \pi) 
\end{array} \right) \, .
\]
We choose $ \vec{F} = \mathfrak{C}_k^n $, that is, $ n $ iterations of the Chirikov map for a given value of the
parameter $ k $.  Increasing $ k $ increases the disorder of the map, as does having $ n $ large. (The map is a
classical example of chaos, with $ \Omega $ consisting of quasiperiodic islands in a chaotic sea, where `chaos/chaotic' must be understood
in the limit $ n \rightarrow \infty $.)  In more disorderly situations,
increasingly fine resolution is required to reveal the structures that we have defined.

Relevant computations for $ k = 2 $ and $ n = 4 $ are shown in Fig.~\ref{fig:chirikov}.  There are significant regions
where the behavior is quite orderly.  There is `greater disorder' in
the region foliated with large values of $ \Lambda^+ $ in (a)---indeed, this region is associated with the `chaotic
sea' when the map is iterated many more times---with the outer parts of low $ \Lambda^+ $ being associated with
quasiperiodic islands and hence order.  All features mentioned in previous examples are reiterated in the pictures.
Moreover, the $ \mbox{SORF}_{min}$ foliation somewhat mirrors the structure expected from classical
Poincar\'e section numerics.  

If we instead consider $ k = 1 $ and $ n = 2 $, an interesting degenerate singularity (corresponding to the $ \psi = 0 $
contour crossing exactly a saddle point
of $ \phi $) is displayed in Fig.~\ref{fig:chirikov_triple}.  The singularity in the $S \mbox{SORF}_{max}$ foliation (b) appears like a degenerate form of a separating point, if thinking
in terms of curves coming from above.  However, if viewed in terms of curves coming in from below, it appears as an intruding point with a sharp (triangular) end.  The $ \mbox{SORF}_{min}$ conforms to this, having elements
of a separating point, and an intruding point, as well. (The numerical issue of $ \mbox{SORF}_{min}$ not crossing 
$ B $ horizontally is displayed in Fig.~\ref{fig:chirikov_triple}(c); in reality, the $ \mbox{SORF}_{min}$ curves should connect smoothly
across.)

\begin{figure}
\centering

 {\includegraphics[scale=0.2]{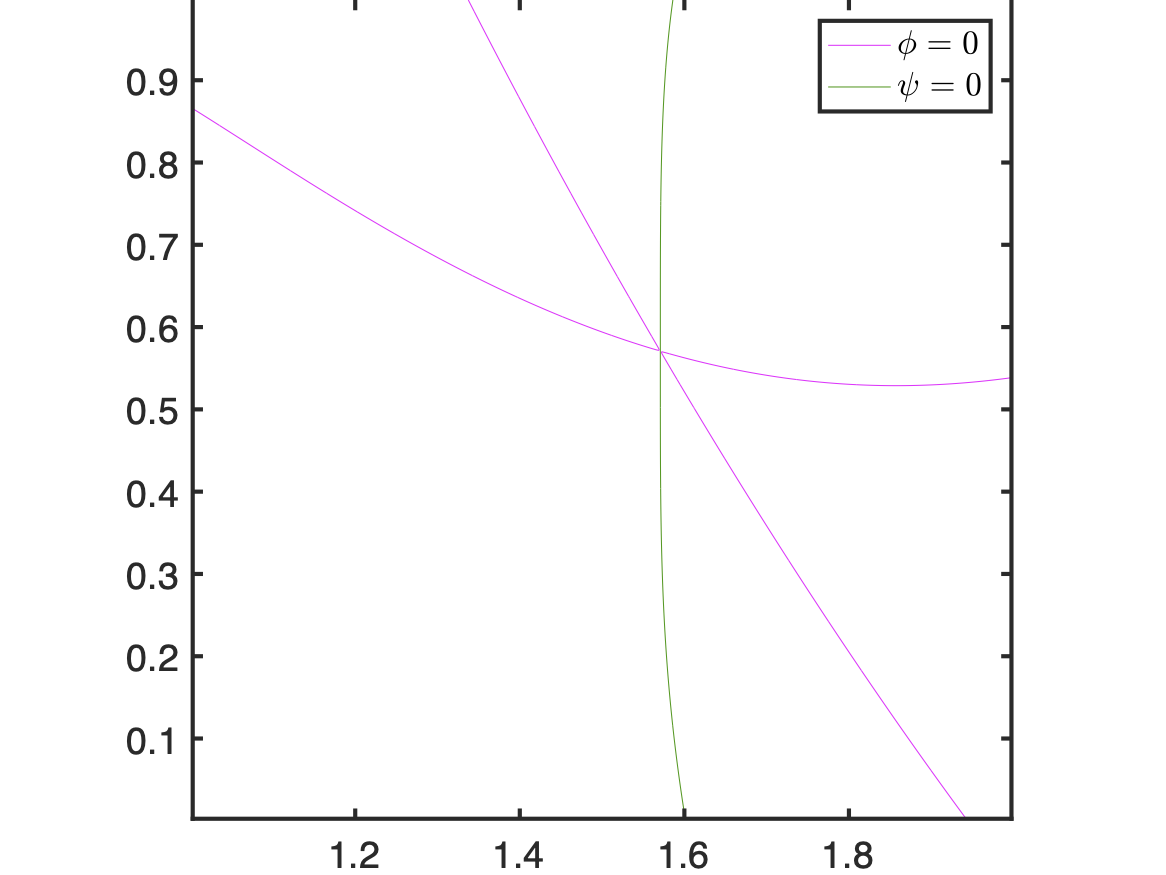}}
 {\includegraphics[scale=0.2]{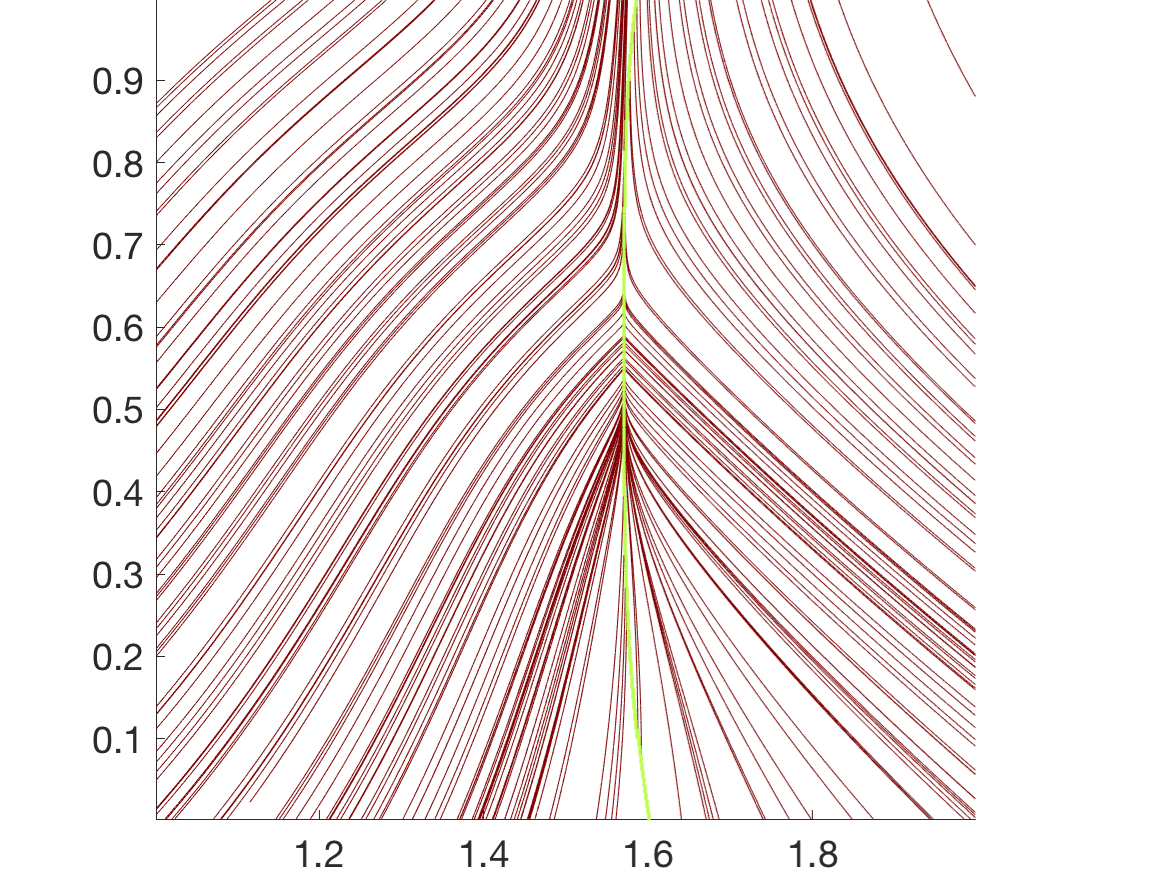}}
 {\includegraphics[scale=0.2]{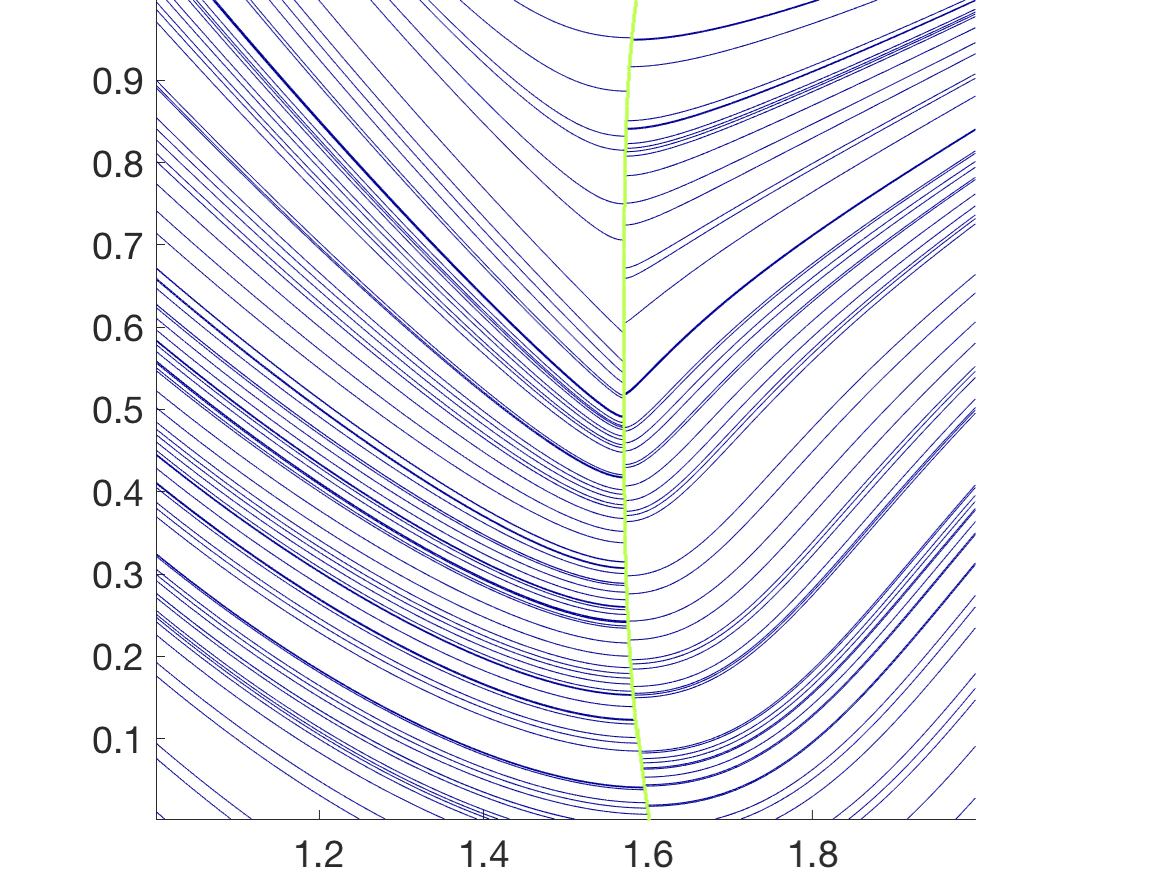}}

\caption{A degenerate singularity of the map $ \vec{F} = \mathfrak{C}_1^2 $, shown zoomed-in: (a) 
the zero contours of $ \phi $ and $ \psi $, (b) $ \mbox{SORF}_{max}$, and (c) $ \mbox{SORF}_{min}$.}
\label{fig:chirikov_triple}
\end{figure}

Next, we demonstrate in Fig.~\ref{fig:chirikov2_k2}, using $ \vec{F} = \mathfrak{C}_2^2 $, the efficacy of using the integral-curve forms (\ref{eq:cogs}) and (\ref{eq:cols}) of the foliations,
rather than using a vector field.  The $ \ln \Lambda^+ $ field in Fig.~\ref{fig:chirikov2_k2}(a) has
several sharp ridges; these are well captured by locations where the
$ \phi $ and $ \psi $ zero-contours in Fig.~\ref{fig:chirikov2_k2}(b) coincide.  The $ \mbox{SORF}_{max/min} $ foliations
in (b) and (c) are computed respectively using the vector fields $ \vec{w}^\pm $ as in previous
situations, and exhibit the usual issues when crossing $ B $.    In contrast, the lower row is generated by using the 
integral-curve forms (\ref{eq:cogs}) and (\ref{eq:cols}), where we have once again started from
$ 300 $ random initial conditions.  For each initial condition $ (x_1,y_1) $, we define the next point $ (x_2,y_2) $ on
a $ \mbox{SORF}_{max}$ curve by $ x_2 = x_1 + \hdown^+(x_1,y_1)  \delta y $ where $ \delta y > 0 $ is the spatial resolution in the $ y $-direction,
and  $dx/dy $ is based on (\ref{eq:cogs}).  Similarly, $ y_2 = y_1 + h^+(x_1,y_1) \delta x $ using (\ref{eq:cogs}), and
where $ \delta x > 0 $ is the resolution chosen in $ x $-direction.  This
initializes the process.  Next, we check the value of $ h_+(x_2,y_2) $, thereby deciding which of the equations in 
(\ref{eq:cogs}) to implement.  If the $ dy/dx $ equation, we take $ x_3 = x_2 + {\mathrm{sign}} \left( x_2 - x_1 \right) \delta x $,
and thus find $ y_3 $ using the ODE solver.  Having now obtained $ (x_3,y_3) $, we again use the last two points to
make decisions on which of the two equations to use, and continue in this fashion for a predetermined number of steps.
Next, we go back to $ (x_1,y_1) $ and now set $ x_2 = x_1 -  \hdown^+(x_1,y_1) \delta y $ and $ y_2 = y_1 -  h^+(x_1,y_1) \delta y $, thereby going in the opposite direction. Having initiated this process, we can then continue this curve using
the same continuation scheme.  The $ \mbox{SORF}_{min}$ are obtained similarly, using the two equations in (\ref{eq:cols}).
There
is sensitivity in the process to locations where $ \phi $ and $ \psi $ change rapidly (they are each of the order 
$ 10^5 $ in this situation), and in particular where zeros are near.  The resolution scales $ \delta x $ and $
\delta y $ need to be reduced sufficiently to not capture spurious effects.  Notice that there are no branch-cut
problems in the resulting foliations obtained using the integral-curve approach, since we do not have to worry about a discontinuity in a vector field.  Neither are there any abrupt stopping of curves.

%\note{**beyond the scope here - but between us - what happens in the infinite time limit?  The stable and unstable manifolds of henon for example are part of the foliations?  How about the SORFs - do the SORF-3 type become very narrow forks with increasing n?  How about when there are tangencies of stable and unstable manifolds - what do those look like in the foliations?** SB: I have no idea, Erik!  Would be fascinating if globally optimizing foliations had relationshipsto this.  The $ \mbox{SORF}_{min} $ in general seems to have some indications of partitioning into coherent regions such as a chaotic sea, etc.  I'm not sure in what way this connects to invariant manifolds, which are specialized curves whereasthe foliation gives a full set of curves across the domain...}
%\note{EB2: I love this part of the discussion - but we need to remove it from this paper - so between us - ...  1)  what happens as n goes to infinity to these structures?2) does SORFmin partition into coherent regions and chaotic sea regions?  So cut this all out or make a throw away sentence in the conclusion?**}

\begin{figure}
\centering
 {\includegraphics[scale=0.3]{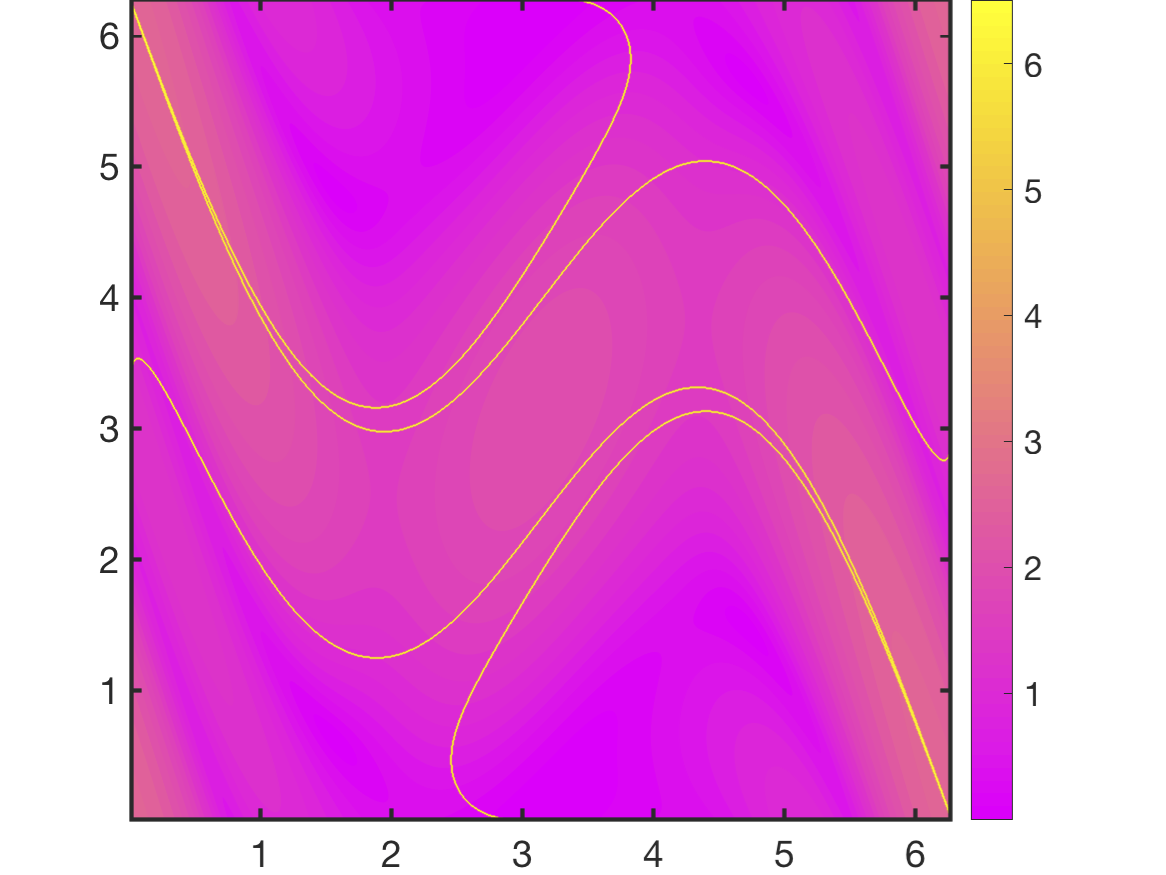}} 
 {\includegraphics[scale=0.3]{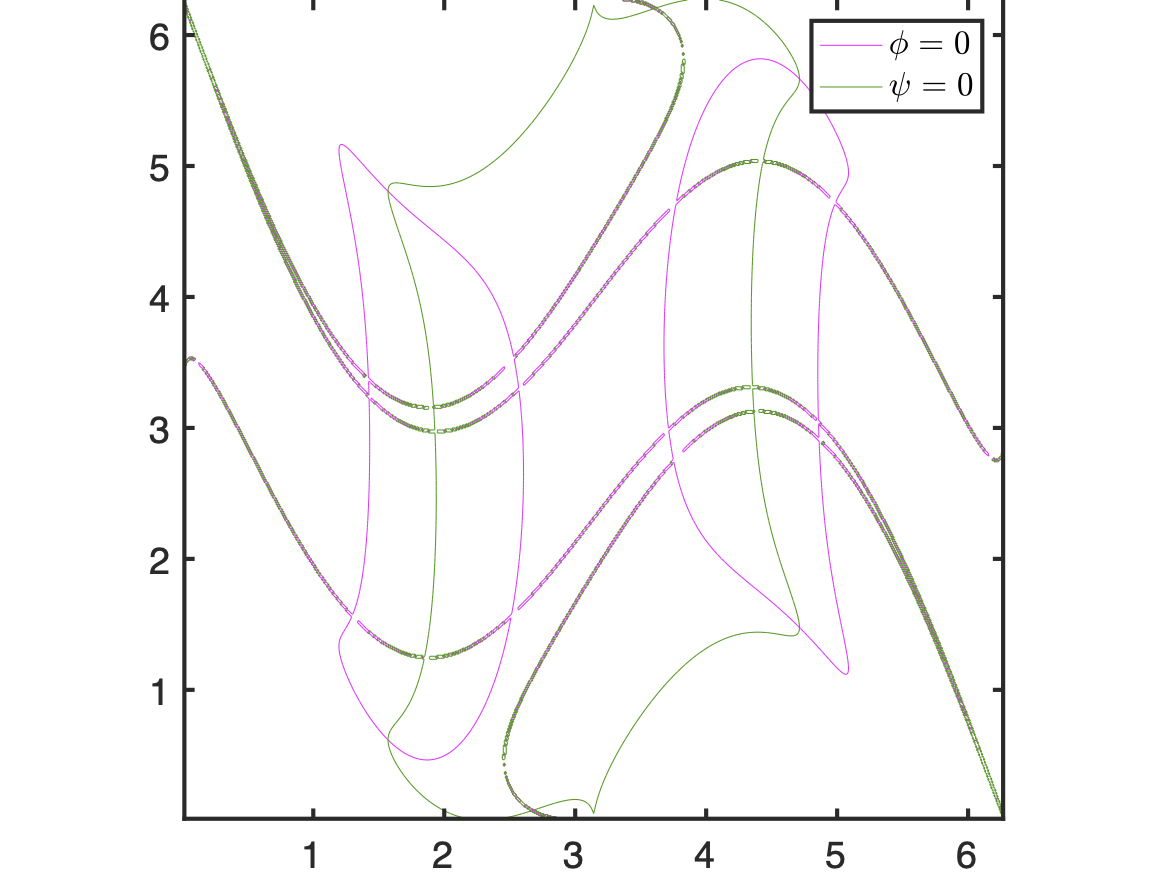}} 

\vspace*{0.2cm}
 {\includegraphics[scale=0.3]{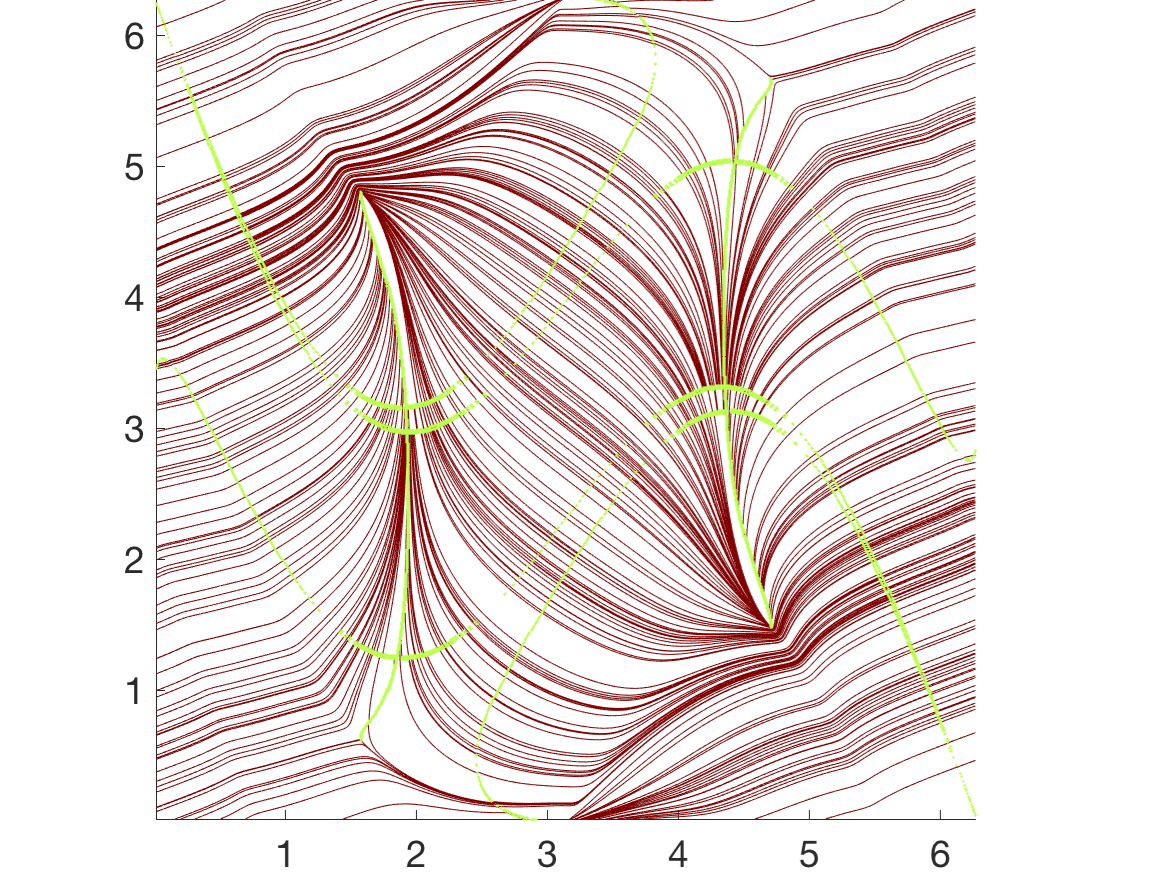}} 
 {\includegraphics[scale=0.3]{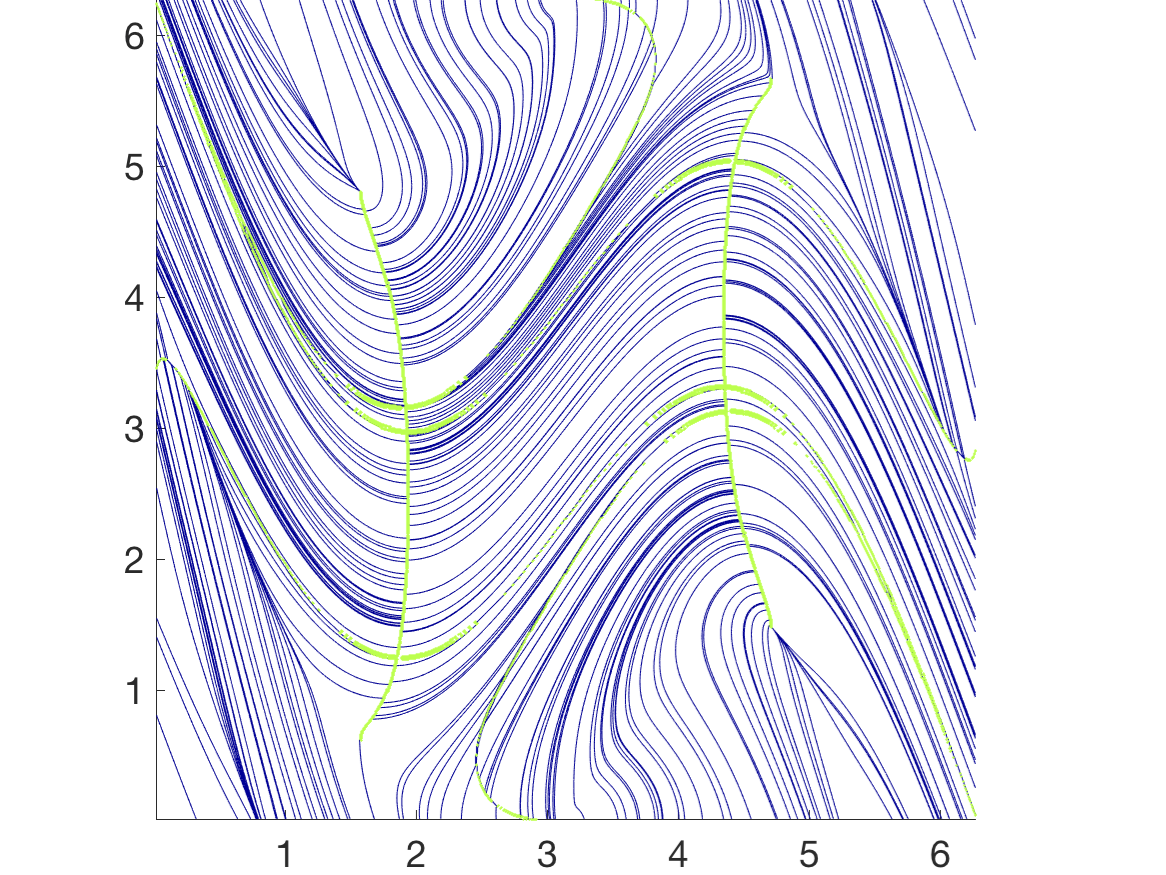}} 

\vspace*{0.2cm}
 {\includegraphics[scale=0.3]{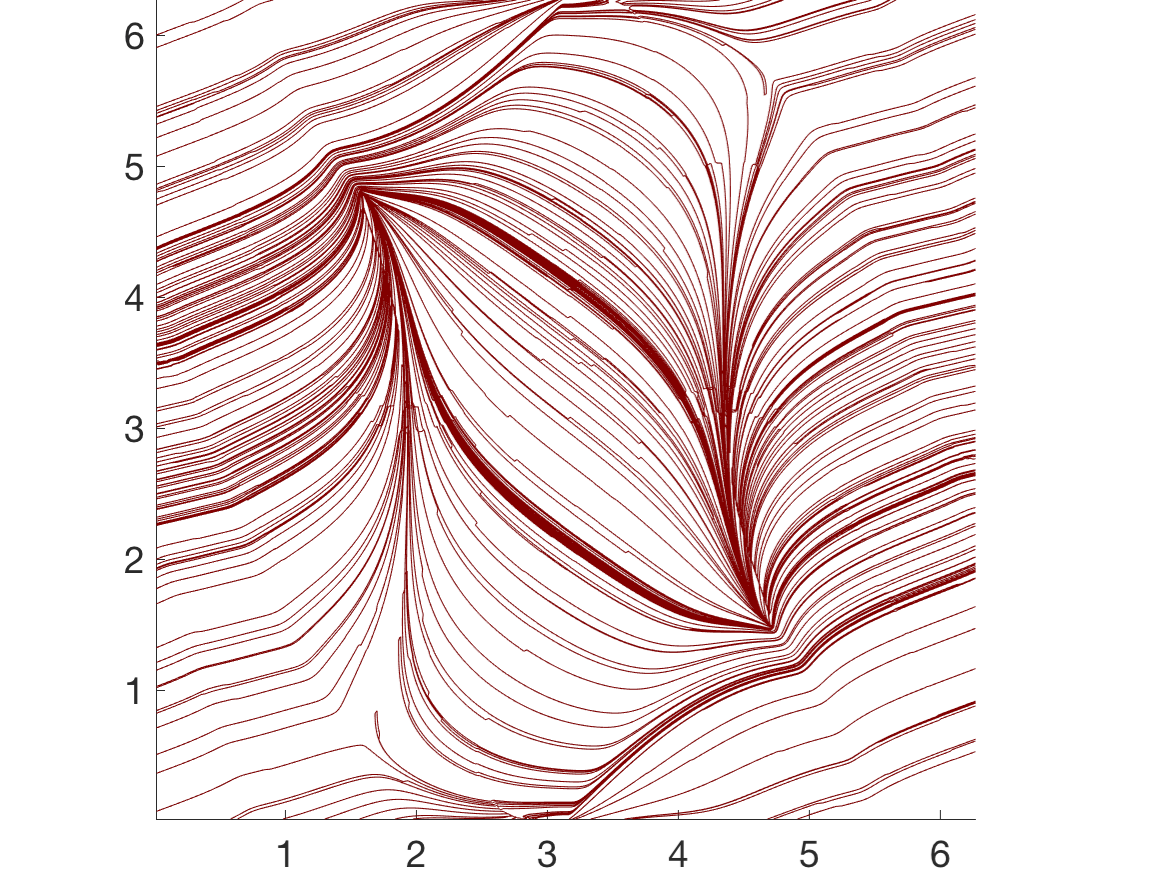}} 
 {\includegraphics[scale=0.3]{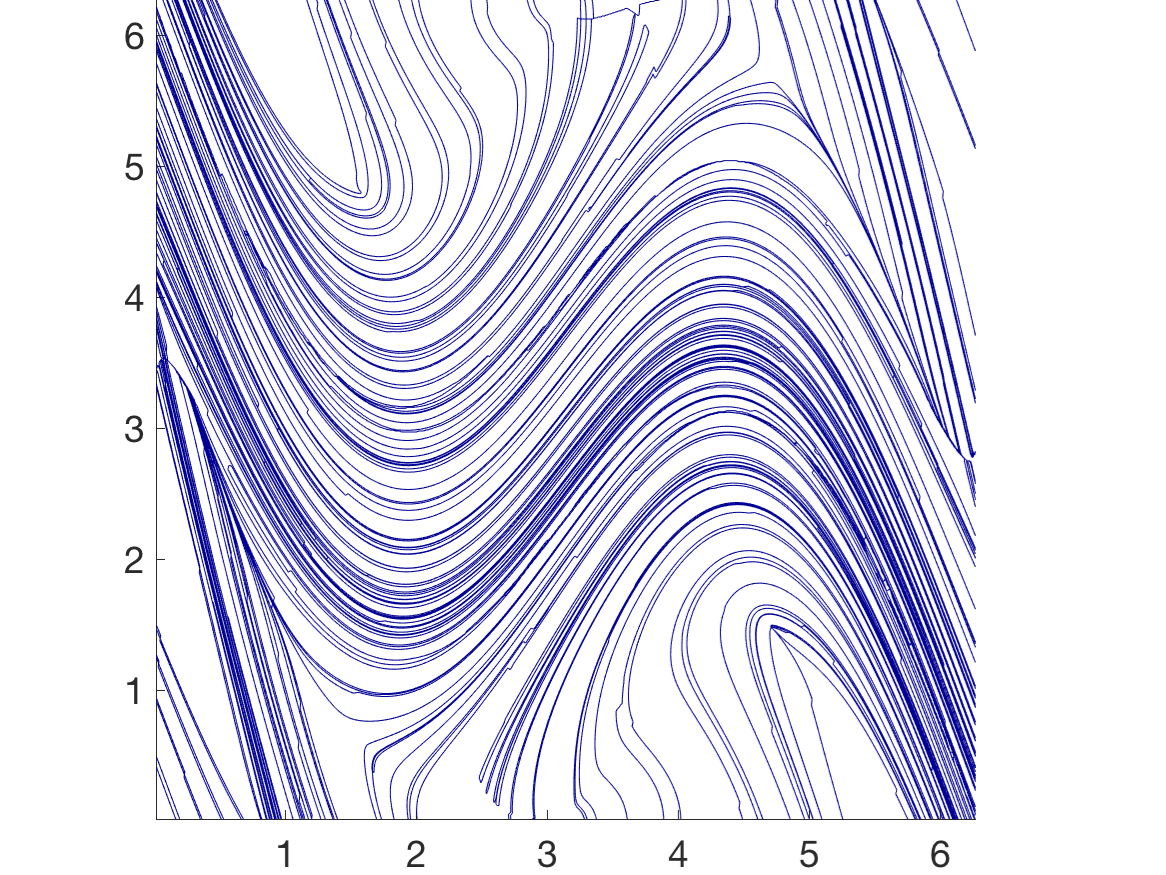}} 

\caption{Comparison between using the integral-curve forms (\ref{eq:cogs}) and (\ref{eq:cols}) and the
vector field forms for $ \vec{F} = \mathfrak{C}_2^2 $: (a) $ \ln \Lambda^+ $ field, (b) zero contours of
$ \phi $ and $ \psi $, (c) $ \mbox{SORF}_{max}$ using the vector field (\ref{eq:vector_plus}), (d) $
\mbox{SORF}_{min}$ using the vector field (\ref{eq:vector_minus}), 
(e) $ \mbox{SORF}_{max}$ using the integral curve form (\ref{eq:cogs}), and (f) $ \mbox{SORF}_{min}$ using the form (\ref{eq:cols}).}
\label{fig:chirikov2_k2}
\end{figure}

%%%%%%%%%%%%%%%%%%
\section{Concluding remarks}

In this paper, we have examined the issue of determining foliations which {\em globally} maximize and minimize
stretching associated with a two-dimensional map, where the map can be defined in terms of a finite sequence
of discrete maps, or a finite-time flow of a differential equation.  Our formulation establishes a connection to the
well-known {\em local} optimizing issue, and provides new insights into the resulting foliations and their singularities.
In particular, an easy criterion for classifying the nature of generic singularities is expressed.  Some numerical 
artefacts arising when computing these foliations in standard ways are characterized in terms of a `branch cut' phenomenon, and 
a methodology of avoiding these is developed.  We have expressed connections with a range of related and highly
studied concepts (Cauchy--Green tensor, Lyapunov vectors, singularities of vector fields), and demonstrated
computations in both discretely- and continuously-derived maps.  

We expect these results to help researchers interpret, and improve, numerical calculations in related situations.
In particular, misinterpretations of numerics can be mitigated via the understandings presented here.  Regions
of high sensitivity towards spatial resolutions are also identifiable in terms of the near-zero sets of the $ \phi $ and $ \psi $
functions.  

We wish to highlight from our numerical results the role of $ \mbox{SORF}_{min}$ restricted foliations as being effective demarcators of complication flow regimes.  These curves---observable for example in blue in Figs.~\ref{fig:henon}, 
\ref{fig:doublegyre}, \ref{fig:chirikov} and \ref{fig:chirikov2_k2}---indicate curves along which there is minimal stretching.
Consequently, there is maximal stretching in the orthogonal direction to these curves.  This indicates that the $ \mbox{SORF}_{min}$
curves are barriers in some senses: disks of initial conditions positioned on such a curve experience sharp stretching
orthogonal to them.  That is, initial conditions on one side of such a curve get separated quickly from those on the other
side, with the curve positioned optimally to maximize the separation.  Our methodology enables
this intuitive idea to be put into a global optimizing foliation framework.   Looking at this another way, the dense regions of the $ \mbox{SORF}_{min}$ (blue) foliations in Figs.~\ref{fig:henon}, \ref{fig:doublegyre}, \ref{fig:chirikov} and  \ref{fig:chirikov2_k2} are reminiscent of separation curves which attempt to demarcate chaotic from regular regions.
We emphasize, though, that `chaotic' has no proper meaning in the finite-time context since it must be
understood in terms of infinite-time limits; in this case, the separation one may try to obtain is between more `disorderly'
and `orderly' regions.  The ambiguity of defining these is reflected in the Figures, in which the $ \mbox{SORF}_{min}$ foliation 
nonetheless identifies coherence-related topological structures in $ \Omega $ which are strongly influenced by the
nature of the singularities in the foliation.

%\note{**EB: Here's the riff text I suggest:
Note that the interaction of $\phi=0$ and $\psi=0$ level sets as seen in Fig.~\ref{fig:henon}(b) bear a striking resemblance to Figures regarding zero angle between stable and unstable foliations of Lyapunov vectors such as in Fig.~1 for the H\'{e}non map from \cite{jaeger1997structure} that was part of a search for primary heteroclinic tangencies when developing symbolic dynamic generating partitions of the Henon map, \cite{grassberger1989symbolic, bollt2001symbolic,bollt2000validity,christiansen1997guidelines}.  Indeed this analysis likely bears a relationship, in that in a infinite time limit, the Lyapunov vectors suggested come to the same point as those much earlier stories underlying the topological dynamics of smooth dynamical systems.  What is clear in the finite time discussion here is that when we see a coincidence between the stretching and folding, that in successively longer time windows, these properties repeat in progressively smaller regions.  As suggested by Fig.~\ref{fig:henon}, e.g.\ (h), any point of tangency would in turn be infinitely repeated in the long time limit.  The perspective of this current work may further understanding of what has always been the intricate topic of why and how hyperbolicity is lost in nonuniformly hyperbolic systems wherein seemingly paradoxically, errors can grow along the directions related to stable manifolds, such as highlighted by Fig.~5 in
\cite{jaeger1997homoclinic}.
%}

{\bf Acknowledgements:} SB acknowledges with thanks partial support from the Australian Research Council via
grant DP200101764.  EB acknowledges with thanks the Army Research Office (N68164-EG) and also DARPA.

%%%%%%%%%%%%%%%%%
%%%%%%%%%%%%%%%
\appendix

%%%%%%%%
\section{Proof of Lemma~\ref{lemma:puncture}}
\label{sec:puncture}

Given a general point $ (x,y) \in \Omega_0 $, let $ \theta
 \in [-\pi/2,\pi/2) $.
The local stretching (\ref{eq:stretching}) associated with this point and direction is 
\[
\Lambda \left( x,y,\theta \right) = \sqrt{ \left( u_x \cos \theta + u_y \sin \theta \right)^2 + \left( v_x \cos \theta +
v_y \sin \theta \right)^2} \, .
\]
where the $ (x,y) $-dependence on $ u_x $, $ u_y $, $ v_x $ and $ v_y $ has been omitted 
from the right-hand side for brevity. Hence, 
\[
\Lambda^2 = \frac{u_x^2 + v_x^2 - u_y^2 - v_y^2}{2} \cos 2 \theta + \left( u_x u_y + v_x v_y \right) \sin 2 \theta
+ \frac{u_y^2 + v_y^2 + u_x^2 + v_x^2}{2} \, .
\]
Using the definitions for the functions $ \phi $ and $ \psi $ from (\ref{eq:phi}) and (\ref{eq:psi}),
\begin{equation}
\Lambda^2 = \phi \cos 2 \theta + \psi \sin 2 \theta + \frac{\left| \vecnabla u \right|^2 + \left| \vecnabla v \right|^2}{2} \, .
\label{eq:Lambdasquared}
\end{equation}
Given the linear independence of the sine and cosine functions, the value of $ \Lambda^2 $ at $ (x,y) $ is
 independent of $ \theta $ if and only if $ \phi $ and $ \psi $ are both zero.  Thus, the isotropic set is
characterized as the intersection of the zero sets of the functions $ \phi $ and $ \psi $.

%%%%%%%%%
\section{Proof of Lemma~\ref{lemma:theta}}
\label{sec:theta}

\begin{figure}
 {\includegraphics[width=0.47\textwidth,height=0.21\textheight]{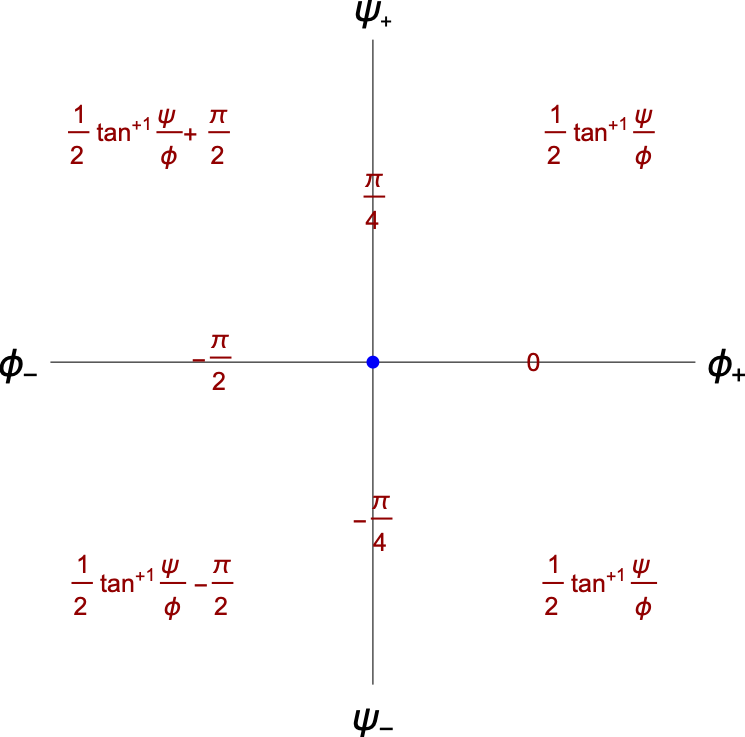}} 
 {\includegraphics[width=0.47\textwidth,height=0.21\textheight]{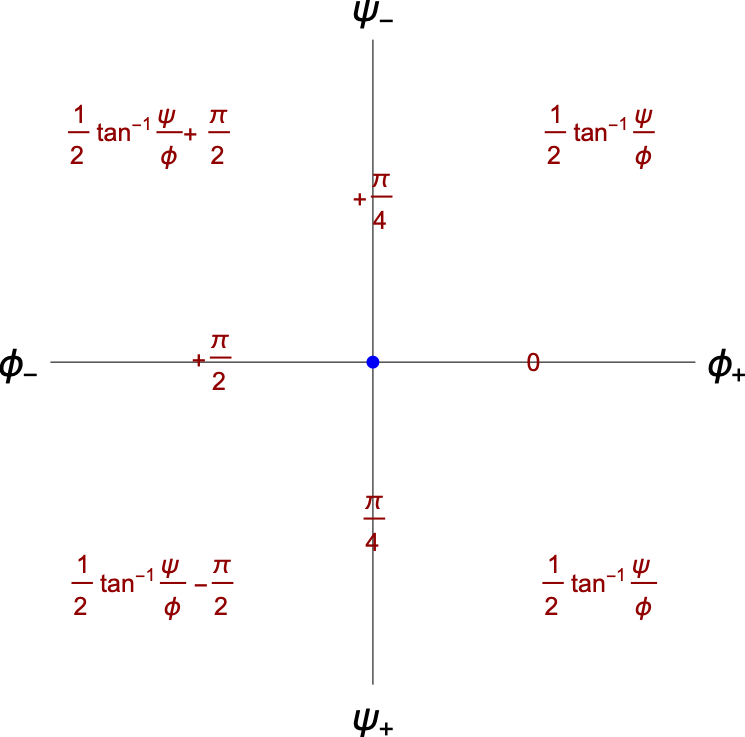}} 

\vspace*{0.2cm}
 {\includegraphics[width=0.47\textwidth,height=0.17\textheight]{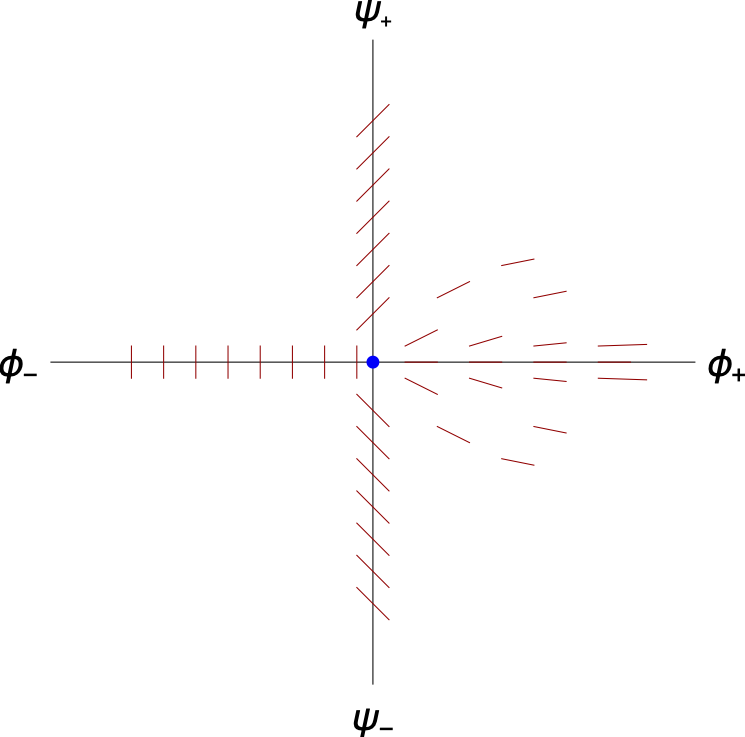}} 
 {\includegraphics[width=0.47\textwidth,height=0.17\textheight]{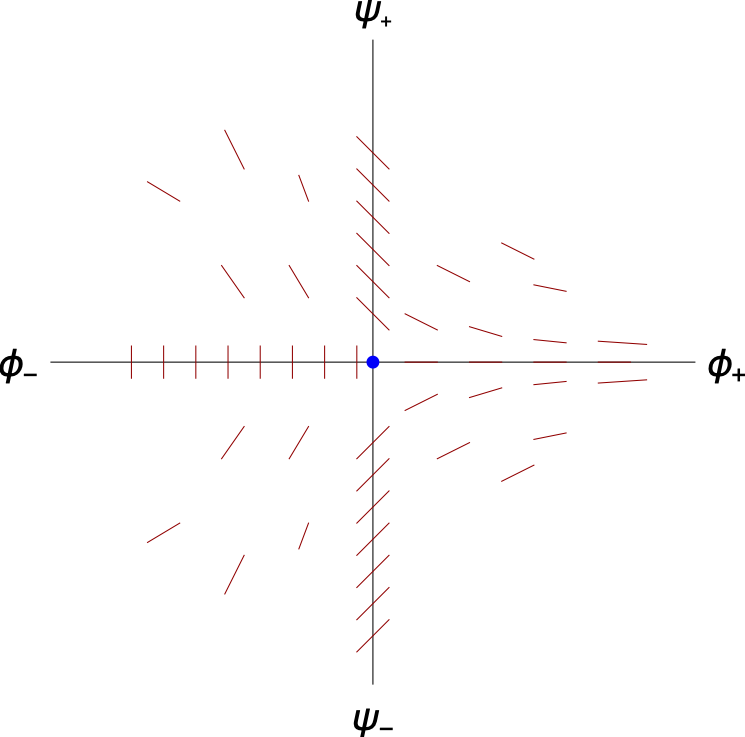}} 

\vspace*{0.2cm}
 {\includegraphics[width=0.47\textwidth,height=0.17\textheight]{intruding.png}} 
 {\includegraphics[width=0.47\textwidth,height=0.17\textheight]{separating.png}} 

\vspace*{0.2cm}
 {\includegraphics[width=0.47\textwidth,height=0.17\textheight]{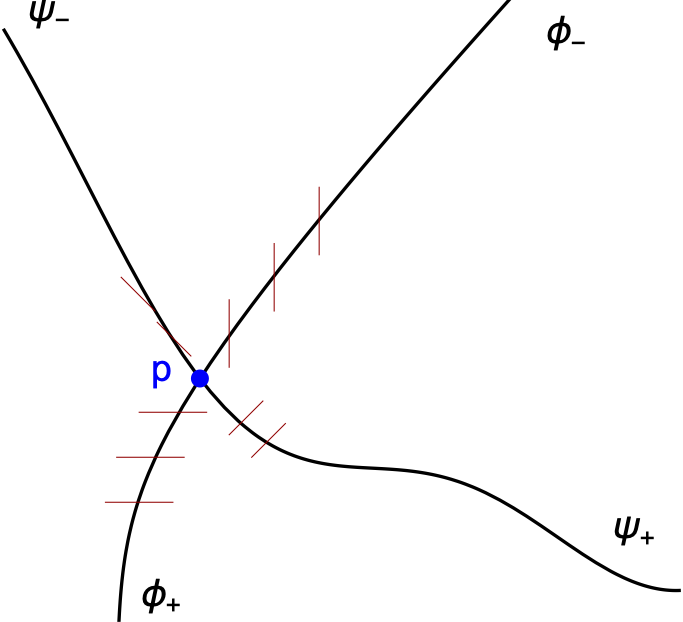}} 
 {\includegraphics[width=0.47\textwidth,height=0.17\textheight]{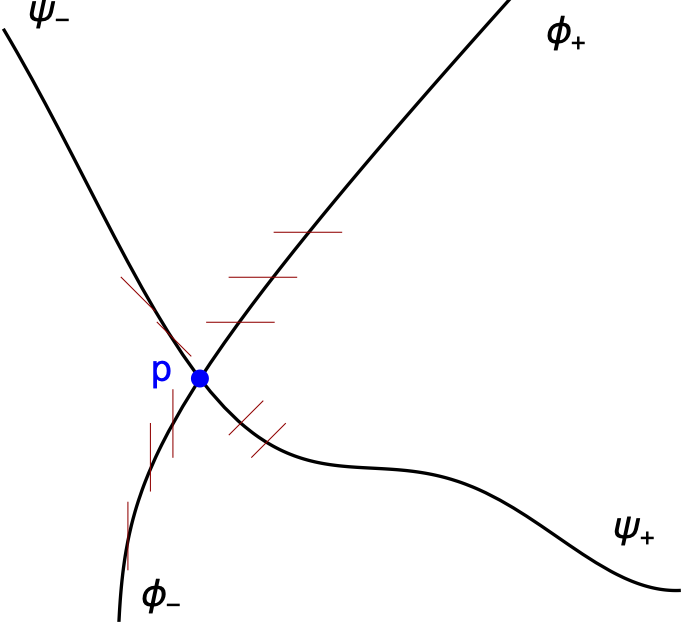}}
\caption{$ \mbox{SORF}_{max}$ near a nondegenerate singularity: (a) Value of $ \theta^+ \in [-\pi/2, \pi/2) $ in $ (\phi,\psi) $-space
using (\ref{eq:thetaplus}),  (b) as in (a), but shown in a left-hand system, (c) and (d) qualitative slope fields for (a) and (b); (e) $ 1 $-pronged `intruding
point' associated with the structure (c); (f) $ 3 $-pronged `separating' point associated with the structure (d); (g) 
intruding point
when axes are tilted; (h) separating point when axes are tilted. Compare to Fig.~\ref{fig:classify} and  Property \ref{property:singularities}.}
\label{fig:fourquadrant}
\end{figure}

We begin with (\ref{eq:thetaplus}), and obtain (\ref{eq:thetaplus2}).   Assuming for now that both $ \phi $ and $ \psi $ are not zero, we use  the double-angle formula to obtain 
\[
\frac{2 \tan \theta^+ }{ 1 -
\tan^2 \theta^+ }  = \tan 2 \theta^+ = \frac{\psi}{\phi}  \, . 
\]
Solving the quadratic for $ \tan \theta^+  $, we see that
\begin{equation}
\tan \theta^+ = \frac{-1 \pm \sqrt{ (\psi/\phi)^2 + 1}}{\psi/\phi} = 
\frac{ - \phi \pm \sqrt{\phi^2 + \psi^2}}{\psi}
\label{eq:tanthetaLambda}
\end{equation}
We now need to choose the sign in this expression, bearing in mind the usage of the four-quadrant inverse 
tangent as used in (\ref{eq:thetaplus}).  The four quadrants here are in the $ (\phi,\psi) $-space, 
which is indicated in Fig.~\ref{fig:fourquadrant}(a).  If $ \phi > 0 $ and $ \psi > 0 $, this implies that $ 2 \theta^+ $
is in the first quadrant, and thus so is $ \theta^+ $.  This means that $ \tan \theta^+> 0 $, and
consequently the positive sign must be chosen.  If $ \phi > 0 $ and $ \psi < 0 $, $ 2 \theta^+ $ is in fourth
quadrant, or $ 2 \theta^+ \in (-\pi/2,0) $.  Thus, $ \tan \theta^+  < 0 $, and so the positive sign must be chosen
in  (\ref{eq:tanthetaLambda}) to ensure that the division by $ \psi < 0 $ leads to an eventual negative sign.  Next,
if $ \phi < 0 $ and $ \psi > 0 $, $ 2 \theta^+  \in (\pi/2,\pi) $, and $ \theta^+ \in (\pi/4,\pi/2) $, leading
to $ \tan \theta^+ > 0 $ and the necessity of choosing the positive sign in (\ref{eq:tanthetaLambda}).  Finally, if
$ \phi < 0 $ and $ \psi<0 $, $ 2 \theta^+ \in (-\pi, -\pi/2) $ and $ \theta^+ \in (-\pi/2,-\pi/4) $, and thus $ 
\tan \theta^+ < 0 $ and the positive sign in the numerator of (\ref{eq:tanthetaLambda}) must be chosen.  
Thus, all cases lead to a positive sign, and so
\[
\tan \theta^+ = \frac{-\phi+ \sqrt{\phi^2+\psi^2}}{\psi} \, , 
\]
whence (\ref{eq:thetaplus2}) when neither $ \phi $ nor $ \psi $ is zero.  

Next, we rationalize the fact that (\ref{eq:thetaplus2}) arises from (\ref{eq:thetaplus}) even if one or the other of $ \phi $ or $ \psi $ is zero.  The arguments to follow are equivalent to considering the four emanating axes in Fig.~\ref{fig:fourquadrant}(a).  If $ \phi = 0 $ and $ \psi \ne 0 $,
(\ref{eq:thetaplus}) tells us that $ 2 \theta^+ = \, (\pi/2) \,  {\mathrm{sign}} \left( \psi \right) $ and thus $ 
\tan \theta^+ = \tan (\pi/4) \, {\mathrm{sign}} \left( \psi \right) = {\mathrm{sign}} \left( \psi \right) $.  This is consistent 
with what
(\ref{eq:thetaplus2}) gives when $ \phi = 0 $ is inserted. 
If $ \psi = 0 $ and $ \phi \ne 0 $, (\ref{eq:thetaplus}), which tells us that $ 2 \theta^+ = - \pi $ if $ \phi < 0 $,
or $  2 \theta^+ = 0 $ if $ \phi > 0 $.  Thus if $ \psi = 0 $,  $ \theta^+ = - \pi/2 $ if $ \phi < 0 $, and $ \theta^+ = 0 $ 
if $ \phi > 0 $.  This verifies that (\ref{eq:thetaplus2}) is equivalent to (\ref{eq:thetaplus}) in 
$ \Omega_0 $.

Now, $ \theta^- $ in (\ref{eq:thetaminus}) is defined specifically to be orthogonal to $ \theta^+ $.  There
is only one angle in $ [-\pi/2,\pi/2) $ which obeys this condition.  It is straightforward to verify from (\ref{eq:thetaplus2}) and (\ref{eq:thetaminus2}) that 
\[
 \left( \tan \theta^+ 
\right) \left( \tan \theta^- \right) = - 1 
\]
in $ \Omega_0 $.  Thus, $ \theta^- $ as defined in (\ref{eq:thetaminus2}) is at right-angles to $ \theta^+ $
as defined in (\ref{eq:thetaplus2}), which has been established to be equivalent to (\ref{eq:thetaplus}).

%%%%%%%%
\section{Proofs of Theorems~\ref{theorem:max} and \ref{theorem:min}}
\label{sec:proof}

First, we tackle Theorem~\ref{theorem:max}, related to maximizing the global stretching.
Let $ f $ be a restricted foliation on $ \Omega $, and $ \theta_f $ be the unique angle field in $ \Omega_0 $ associated with it.  From
(\ref{eq:Lambdasquared}) from the proof of Lemma~\ref{lemma:puncture}, we have that the local stretching $ \Lambda $ at a point $ (x,y) \in \Omega_0 $ 
related to the angle $ \theta_f $ obeys
\begin{eqnarray}
\Lambda^2
& = & \sqrt{\phi^2 + \psi^2} \left[ \frac{\phi}{\sqrt{\phi^2 + \psi^2}} \cos 2 \theta_f + \frac{\psi}{\sqrt{\phi^2+\psi^2}}
\sin 2 \theta_f \right] +  \frac{\left| \vecnabla u \right|^2 + \left| \vecnabla v \right|^2}{2} \nonumber \\
& = & \sqrt{\phi^2 + \psi^2} \left[ \cos 2 \theta^+ \cos 2 \theta + \sin 2 \theta^+ \sin 2 \theta_f 
\right] +  \frac{\left| \vecnabla u \right|^2 + \left| \vecnabla v \right|^2}{2} \nonumber \\
& = & \sqrt{\phi^2 + \psi^2} \cos \left[ 2 \left( \theta^+ - \theta_f \right) \right] + 
 \frac{\left| \vecnabla u \right|^2 + \left| \vecnabla v \right|^2}{2}
\label{eq:stretching2}
\end{eqnarray}
in which $ \theta^+ = \theta^+(x,y) $ satisfies
\begin{equation}
\cos 2 \theta^+ = \frac{\phi}{\sqrt{\phi^2 + \psi^2}} \quad {\mathrm{and}} \quad
\sin 2 \theta^+ = \frac{\psi}{\sqrt{\phi^2 + \psi^2}} \, .
\label{eq:sincostheta}
\end{equation}
Thus, $ \tan 2 \theta^+ = \psi/\phi $.  If applying the inverse tangent to determine $ 2 \theta^+ $ from
this, we need to take the two equations (\ref{eq:sincostheta}) into account in choosing the correct branch.  
This clearly depends on the signs of $ \phi $ and $ \psi $, which is automatically dealt with if the four-quadrant
inverse tangent is used.  Consequently, (\ref{eq:sincostheta}) implies that
\[
\theta^+(x,y) = \frac{1}{2} \, \tilde{\tan}^{-1} \left( \psi(x,y), \phi(x,y) \right) \, , 
\]
which is chosen modulo $ \pi $ because of the premultiplier of $ 1/2 $ (the four-quandrant inverse tangent is
modulo $ 2 \pi $).  Thus, $ \theta^+ $ as defined here is identical to that given in (\ref{eq:thetaplus}), which
by Lemma~\ref{lemma:theta} is equivalent to (\ref{eq:thetaplus2}).

Next, given that the cosine function is always between $ - 1 $ and $ 1 $, we see that the local stretching must obey
\[
 \left[    - \sqrt{\phi^2 + \psi^2} + \frac{\left| \vecnabla u \right|^2 + \left| \vecnabla v \right|^2}{2}\right]^{1/2} 
 \le \Lambda \le
  \left[    \sqrt{\phi^2 + \psi^2} + \frac{\left| \vecnabla u \right|^2 + \left| \vecnabla v \right|^2}{2}\right]^{1/2} \, , 
  \]
and consequently the global stretching (\ref{eq:globalstretching})  satisfies
\begin{eqnarray}
\Sigma_f & \ge &  \int \! \! \! \! \int_\Omega \left[    - \sqrt{\phi^2 + \psi^2} + \frac{\left| \vecnabla u \right|^2 + \left| \vecnabla v \right|^2}{2}\right]^{1/2} \, \d x \, \d y \quad {\mathrm{and}} 
\label{eq:minpossible} \\
\Sigma_f & \le &  \int \! \! \! \! \int_\Omega \left[    \sqrt{\phi^2 + \psi^2} + \frac{\left| \vecnabla u \right|^2 + \left| \vecnabla v \right|^2}{2}\right]^{1/2} \, \d x \, \d y \, .
\label{eq:maxpossible}
\end{eqnarray}
for any choice of foliation.

Let $ f^+ $ be the foliation identified with the angle field $ \theta^+(x,y) $ at every location in $ \Omega_0 $.  Inserting this into (\ref{eq:stretching2}) renders the cosine term
$ 1 $, and thus the right-hand side of (\ref{eq:maxpossible}) is achieved for this foliation.  There can be no foliation
with gives a larger value of $ \Sigma_f $.  This foliation is equivalent to pointwise maximizing $ \Lambda $ in 
$ \Omega_0 $. 

Can there be a different acceptable foliation, $ \tilde{f} $, which also attains this maximum value for $ \Sigma_f $
(i.e., that $ \Sigma_{\tilde{f}} = \Sigma_{f^+} $)?  If so, there must
be a point $ \left( \tilde{x}, \tilde{y} \right) \in \Omega_0 $ where the induced slopes $ \theta_{\tilde{f}} $ 
and $ \theta^+ $ of the two
different foliations are different.  Given that foliations must be smooth, this implies the presence of an open
neighborhood $ N_\eps $ (with positive measure) around this point such that $ \cos 2 \left( \theta_{\tilde{f}} - \theta^+ 
\right) < 1 -  \eps $, for any given 
$ \eps > 0 $.  Thus the integrated local stretching in $ N_\eps $ for $ \tilde{f} $ is strictly less than that of $ f^+ $.
Since it is not possible to obtain a greater integrated stretching outside of $ N_\eps $ (because $ f^+ $, by forcing
the cosine term to take its maximum possible value, cannot be bettered), this would imply that the integrated stretching
of $ \tilde{f} $ over $ \Omega_0 $ is strictly less than that of $ f^+ $.  Given that the contribution to
the integral in $ I $ is independent of the foliation, this provides a contradiction.  Therefore, the foliation $ f^+ $,
corresponding to the choice of angle field $ \theta^+ $ as given in (\ref{eq:thetaplus}), maximizes $ \Sigma_f $, 
and is uniquely defined in $ \Omega_0 $.

The proof of Theorem~\ref{theorem:min} related to minimizing the global stretching is similar.  We use (\ref{eq:minpossible}), which corresponds to choosing $ \theta_f $ such that the term $ \cos 2\left( \theta^+ - \theta_f
\right) $ is always $ - 1 $. This tells us that $ \theta_f $ must be chosen perpendicular to $ \theta^+ $.  Thiis is
exactly the characterization used to determine $ \theta^- $ in (\ref{eq:thetaminus}), and the equivalence
to (\ref{eq:thetaminus2}) has been established in Lemma~\ref{lemma:theta}.

%%%%%%%%
\section{Local stretching connections related to Remark~\ref{remark:local}}
\label{sec:localstretching}

Given a location $ (x,y) $, suppose we wanted to determine the direction (encoded by an angle $ \theta $)
to place an infinitesimal line segment such that it stretches the most under $ \vec{F} $.  From (\ref{eq:stretching}),
we need to solve
\[
\sup_{\theta} \left\| \vecnabla \vec{F}(x,y) \left( \begin{array}{c} \cos \theta \\ \sin \theta \end{array} \right) \right\| :=
\left\| \vecnabla \vec{F} \right\| \,, 
\]
where the right-hand side is the operator norm of $ \vecnabla \vec{F} $.  This is computable
by the
square-root of the larger eigenvalue of $ \left[ \vecnabla \vec{F} \right]^\top \vecnabla \vec{F} $, i.e., of the Cauchy--Green
tensor $ C $ as defined in (\ref{eq:cauchygreen}).  
Given the map (\ref{eq:map}), since
\[
\vecnabla \vec{F} = \left( \begin{array}{cc} u_x & u_y \\ v_x & v_y \end{array} \right) \, , 
\]
it is clear that the Cauchy--Green strain tensor (as defined in (\ref{eq:cauchygreen})) is 
\[
\vec{C} := \left[ \vecnabla \vec{F} \right]^\top \, \vecnabla \vec{F} = \left( \begin{array}{cc}
u_x^2 + v_x^2 & u_x u_y + v_x v_y \\
u_x u_y + v_x v_y & u_y^2 + v_y^2 \end{array} \right) \, .
\]
Accordingly, the eigenvectors $ \lambda $ of the Cauchy--Green tensor (i.e., the singular values of $ \vecnabla \vec{F} $)
obey
%\[
%\left[ \left( u_x^2 + v_x^2 \right) - \lambda \right] \left[ u_y^2 + v_y^2 - \lambda \right] - \left[ u_x u_y + v_x v_y \right]^2 = 0 \, .
%\]
%Rearranging, we write this as
\[
\lambda^2 - \left( \left| \vecnabla u \right|^2 + \left| \vecnabla v \right|^2 \right) \lambda + \left[
(u_x^2 + v_x^2)(u_y^2 + v_y^2) - \left( u_x u_y + v_x v_y \right)^2 \right] = 0 \, , 
\]
and thus
\begin{eqnarray}
\lambda 
%& = & \frac{\left( \left| \vecnabla u \right|^2 + \left| \vecnabla v \right|^2 \right)  \pm \sqrt{ \left( u_x^2 + u_y^2 + v_x^2 + v_y^2 \right)^2 - 4 \left[ (u_x^2 + v_x^2)(u_y^2 + v_y^2) -  \left( u_x u_y + v_x v_y \right)^2 \right]}}{2} \
%\nonumber \\
& = & \frac{\left| \vecnabla u \right|^2 + \left| \vecnabla v \right|^2 }{2} \pm
\sqrt{ \phi^2 + \psi^2}
\label{eq:cauchygreenev}
\end{eqnarray}
by using the definitions for $ \phi $ and $ \psi $ in (\ref{eq:phi}) and (\ref{eq:psi}).  

We assume that $ \phi $ and $ \psi $ are not simultaneously $ 0 $ (in our framework, that we are not in 
$ I $).  Clearly, the larger value of $ \lambda $ is obtained by taking
the positive sign, and the square-root of this is the matrix norm of $ \vec{C} $.  This gives exactly the pointwise
maximized local
stretching of $ \Lambda^2 $ as defined in (\ref{eq:stretching2}), which satisfies
\[
\left( \Lambda^+\right)^2 = \frac{\left| \vecnabla u \right|^2 + \left| \vecnabla v \right|^2 }{2} + 
\sqrt{ \phi^2 + \psi^2} \, .
\]
The quantity $ \Lambda^+ $ defined above (and also given in the main text as (\ref{eq:ftle})), is related to 
the  {\em finite-time Lyapunov exponent} or simply the {\em Lyapunov exponent}. We note that for defining $ \theta^+ $ (for optimizing stretching) we required that $ (x,y) \ne I $, but $ \Lambda^+ $
can be thought of as a field on all of $ \Omega $.

{\em Obtaining} the eigenvector of the Cauchy--Green tensor $ C $ corresponding to the $ \lambda $ associated with (\ref{eq:ftle}) is somewhat unpleasant.  However, our
equation for $ \theta^+ $ in (\ref{eq:thetaplus2}) indicates that eigenvector---modulo a nonzero scaling---can be written as
\[
\tilde{\vec{w}}^+ = \left( \begin{array}{c} \psi \\
- \phi + \sqrt{\phi^2 + \psi^2} \end{array} \right) \, ,
\]
as long as this value is not zero (which is when $ \psi = 0 $ and $ \phi > 0 $, in which case $ \vec{w}^+ = \left( 1 \, \, 
0 \right)^\top $).  Tedious calculations reveal that 
\begin{eqnarray*}
C \, \tilde{\vec{w}}^+ & = & \left( \begin{array}{cc} u_x^2 + v_x^2 & \psi \\ \psi & u_y^2 + v_y^2 \end{array} \right) \, 
\left( \begin{array}{c} \psi \\
- \phi + \sqrt{\phi^2 + \psi^2} \end{array} \right) \\
& =& \ldots = \left( \Lambda^+\right) ^2 \, \tilde{\vec{w}}^+ \, , 
\end{eqnarray*}
verifying that our expression does indeed give the relevant eigenvector.  
The situation of $ \psi = 0 $ and $ \phi > 0 $ is easy to check as well.  Using $ \tilde{\vec{w}}^+ = \left( 1 \, \, \, 0 \right)^\top $,
we once again get
\[
C \tilde{\vec{w}}^+  = \left( \phi + \frac{\left| \vecnabla u \right|^2 + 
\left| \vecnabla v \right|^2}{2} \right)  \tilde{\vec{w}}^+ = \left( \Lambda^+\right)^2 \tilde{\vec{w}}^+ \, .
\]
The eigenvector field $ \tilde{\vec{w}}^+ $ of $ C $ (or a scalar multiple of it) is only defined on $ \Omega_0 $.  In the literature, this is variously
referred to as the {\em Lyapunov} \cite{wolfesamelson,ramasubramaniansriram} or {\em Oseledec} \cite{oseledec} vector field, related to the local direction (in the domain of 
$ \vec{F} $) in which the stretching due to the application of $ \vec{F} $ will be the most.  If $ \vec{F} $ were a
flow map derived from a flow over a finite-time, then these would depend both on the initial time $ t_0 $ and a
time $ t $ at the end. In other words, $ \vec{F} $ would be the flow map from time $ t_0 $ to $ t $.  In this situation,
the variation of the vector field with respect to both $ t_0 $ and $ t $ is to be noted.

The smaller eigenvalue of the Cauchy--Green tensor is obtained by taking the negative sign in (\ref{eq:cauchygreenev}),
which gives 
\[
\left( \Lambda^-\right)^2 = \frac{\left| \vecnabla u \right|^2 + \left| \vecnabla v \right|^2 }{2} - 
\sqrt{ \phi^2 + \psi^2} \, .
\]
This is clearly the local stretching minimizing choice, corresponding to choosing $ \theta = \theta^- $ (i.e., making
the cosine term equal to $ - 1 $).  The corresponding eigenvector $ \tilde{\vec{w}}^- $ can be verified (as above) to be in the direction
specified by $ \theta^- $.  However,  given that $ \sqrt{\phi^2 + \psi^2} \ne 0 $, we have distinct eigenvalues for the
symmetric matrix $ C $, and thus the two eigenvectors must be orthogonal by standard spectral theory.  Hence we can
easily conclude that 
$ \theta^- $ corresponds to $ \tilde{\vec{w}}^- $, the eigenvector of $ C $ corresponding to the smaller eigenvalue.

The situation in which the eigenvalues of $ \vec{C} $ coincide corresponds to `singularities,' in particular because 
this means that an orthogonal eigenbasis may not exist.  This can only occur when the eigenvalues are
repeated, and from (\ref{eq:cauchygreenev}) this occurs only when $ \phi^2 + \psi^2 = 0 $.  Thus, {\em both} $ \phi $ and $ \psi $ must be
zero.  Thus corresponds exactly to the isotropic set $ I $, in Definition \ref{def:isotropic} and Lemma \ref{lemma:puncture}.

We note that Haller \cite{haller} uses streamlines of the eigenvector fields from the Cauchy--Green tensor
in his theories of variational Lagrangian coherent structures, looking for example for curves to which there is extremal
attraction or repulsion due to a flow over a given time period.  Our foliations obtained here,
corresponding to globally maximizing and minimizing stretching, are generated from fibers of the {\em same} fields.  Therefore, our insights
into singularities and branch-cut discontinuities are therefore relevant to these 
approaches as well.

%%%%%%%%%%%%%%%%%%
\section{Singularity classification}
\label{sec:fourquadrant}

This section provides explanations for the nondegenerate singularity classification of Property~\ref{property:singularities}.
  Given
the transverse intersection of the $ \phi = 0 $ and $ \psi = 0 $ contours at a singularity $ \vec{p} $, we examine nearby contours
not in standard $ (x,y) $-space, but in $ (\phi,\psi) $-space, in which $ \vec{p} $ is at the origin.  The angle fields
$ \theta^\pm $ are the defining characteristics of the foliation, and thus we show in Fig.~\ref{fig:fourquadrant}(a) 
a schematic of the maximizing angle field $ \theta^+ $.  A nonstandard labelling of the $ \phi $ 
and $ \psi $ axes is used here because the relative orientations of the positive axes $ \phi_+ $ and $ \psi_+ $
(the directions in which $ \phi>0 $ and $ \psi>0 $ resp.) and negative axes $ \phi_- $ and $ \psi_- $ is
related to whether $ \vec{p} $ is right- or left-handed.  Thus, Fig.~\ref{fig:fourquadrant}(a) corresponds to $ \vec{p} $ being right-handed. The slope fields and expressions indicated are based on the four-quadrant inverse tangent
 (\ref{eq:thetaplus}), expressed 
 in terms of the {\em regular} inverse tangent in each quadrant.   We also express the
values of $ \theta^+ $ on each of the axes in Figs.~\ref{fig:fourquadrant}(a), along which $ \theta^+ $ is seen to be constant.  

In Figs.~\ref{fig:fourquadrant}(c), just below, we indicate the angle field $ \theta^+ $ by drawing tiny lines which
have the relevant slope.  
What happens when we `connect these lines' to form a foliation is shown underneath in
Figs.~\ref{fig:fourquadrant}(e).  The foliation bends around the origin (shown as the blue point $ \vec{p} $), 
effectively rotating around it by $ \pi $.   However, it must be cautioned that while Fig.~\ref{fig:fourquadrant}(e) seems to indicate that the fracture ray lies
along $ \phi_+ $, this is in general not the case.  The angle fields shown in Figs.~\ref{fig:fourquadrant}(c) and (e) 
display directions in {\em physical} ($ \Omega $) space, in which the $ \phi = 0 $ and $ \psi = 0 $ contours
intersect in some slanted way.  We show one possibility in Fig.~\ref{fig:fourquadrant}(g), in which the fracture ray
will be approximately from the northwest.  We identify
$ \vec{p} $ in this case an {\em intruding point} or a {\em $ 1 $-pronged singularity}. The
nearby $ \mbox{SORF}_{max}$ curves rotate by $ \pi $ around it.

\begin{figure}
 {\includegraphics[width=0.3\textwidth,height=0.21\textheight]{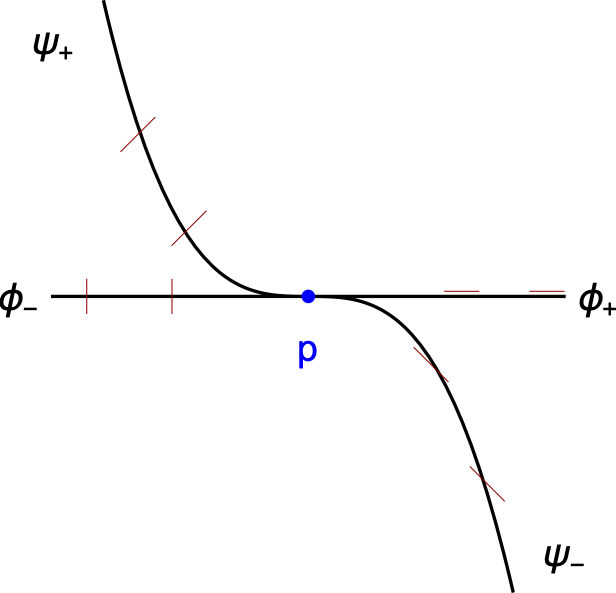}} 
 {\includegraphics[width=0.3\textwidth,height=0.21\textheight]{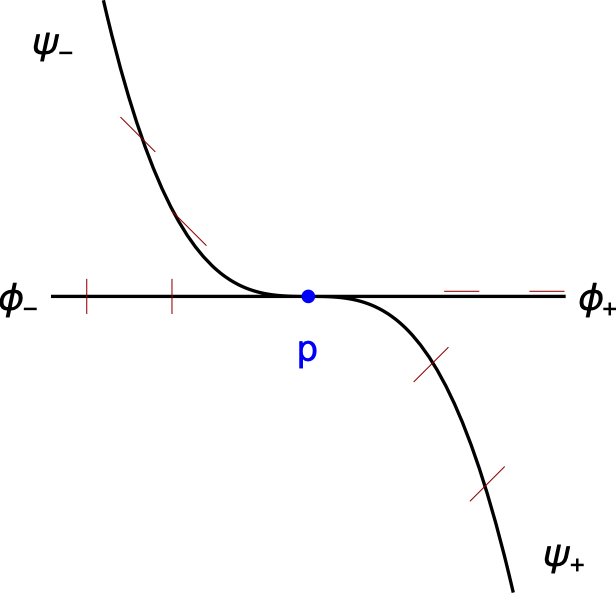}} 
 {\includegraphics[width=0.3\textwidth,height=0.15\textheight]{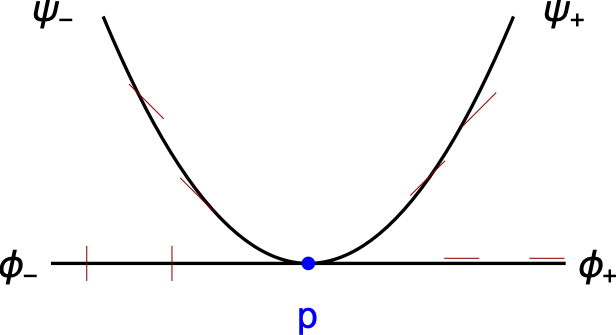}} 
\caption{$ \mbox{SORF}_{max}$ near $ \vec{p} $ when transversality is relaxed: (a), (b) and (c)  show different possibilities
for axes to intersect, and the corresponding $ \mbox{SORF}_{max}$ topologies are illustrated in Fig.~\ref{fig:degenerate}.}
\label{fig:nontransverse}
\end{figure}

In the right-hand panels of Fig.~\ref{fig:fourquadrant} we examine the other possibility of $ \vec{p} $ being left-handed.
This is achieved
in Fig.~\ref{fig:fourquadrant}(b) by simply flipping the $ \psi_- $ and $ \psi_+ $ axes, and retaining the information
that we have already determined in Fig.~\ref{fig:fourquadrant}(a).  The corresponding slope field is displayed
in Fig.~\ref{fig:fourquadrant}(d).  The fracture ray (also along the $ \phi_+ $-axis in this case) now {\em separates
out} curves coming from the right, rather than causing them to turn around the origin.  
Fig.~\ref{fig:fourquadrant}(f) demonstrates this
behavior, obtained by connecting the angle fields into curves.  There are two other fracture rays generated by this
process of separation, because curves in the $ \phi_- $ region are forced to rotate away from the origin without
approaching it.  Fig.~\ref{fig:fourquadrant}(h) is an orientation-preserving rotation of the axes in Fig.~\ref{fig:fourquadrant}(f), which highlights that the directions of the three fracture rays are based on the orientations of the axes in
physical space.  Based on the topology of the foliation, when $ \vec{p} $ is left-handed, we thus have
a {\em separating point} or {\em $ 3 $-pronged singularity}. 

Suppose next that the nondegeneracy of $ \vec{p} $ is relaxed mildly by allowing
the $ \phi = 0 $
and $ \psi = 0 $ contours (both still considered to be one-dimensional) to intersect {\em tangentially} at $ \vec{p} $.  
To achieve this, imagine bending the $ \psi $-axis
in Figs.~\ref{fig:fourquadrant}(a) and (c) so that it becomes tangential to the $ \phi $-axis, but the axes still
cross each other.  This degenerate situation is shown in Fig.\ref{fig:nontransverse}(a), and we note that
the orientation of the axes remains right-handed despite the tangency.  Connecting the angle field lines 
gives the relevant topological structure of Fig.~\ref{fig:degenerate}(a). The topology is very close to the nondegenerate intruding point, but there
is an accumulation of curves towards the fracture ray from one side.  
It is easy to verify (not shown) that there is no change in this topology if the tangentiality
shown in Fig.~\ref{fig:nontransverse}(a) goes in the other direction, with $ \psi_+ $ becoming tangential to $ \phi_+ $
and $ \psi_- $ to $ \phi_- $.  Fig.~\ref{fig:nontransverse}(b) examines the impact on the degenerate left-handed situation; Fig.~\ref{fig:degenerate}(b) indicates that the fracture ray acquires a similar one-sided accumulation effect, while
the remainder of the portrait remains essentially as it was.  So this is a degenerate separation point.  Finally, 
in Fig.~\ref{fig:nontransverse}(c) we consider the case where the tangentiality is such that the $ \phi $- and $ \psi $-axes
do not cross one another.  In this case, drawing connecting curves reveals that the topology is a combination of degenerate intruding and separating points,
and is illustrated in Fig.~\ref{fig:degenerate}(c).
Testing the other possibilities (interchanging the $ \psi_- $ and $ \psi_+ $ axes locations, and doing the same
analysis with them below the $ \phi $-axis) yields no new topologies.  One way to 
rationalize this is that the relative (degenerate) orientation between the negative axes and that between the positive axes is in
this case exactly opposite; one is as if there is a right-handed orientation, while the other is left-handed.  

%%%%%%%%%%%
\section{Proof of Theorem~\ref{theorem:nonsmooth}}
\label{sec:nonsmooth}

We have established via Fig.~\ref{fig:branchcut} that if there exists a nondegenerate singularity $ \vec{p} $, 
then $ \vec{w}^+ $ is not continuous across the branch cut $ B $.  This vector field is `the' Lyapunov
vector field, generated from the eigenvector field corresponding to the larger eigenvalue 
of the Cauchy--Green tensor field, where this is well-defined (i.e., in $ \Omega_0 $).  However,
a vector field associated with the angle field  $ \theta^+ $ is not unique, as is reflected in the 
presence of the arbitrary function $ m $ in 
(\ref{eq:curvegenerate}).  The nonuniqueness is equivalent to the potential of 
scaling Lyapunov vectors in a nonuniform way in $ \Omega
\setminus I $, by multiplying by a nonzero scalar.  The question is: is it possible to remove the discontinuity that
$ \vec{w}^+ $ has across $ B $ by choosing a scaling function $ m $? 

From Fig.~\ref{fig:branchcut}, we argue that the answer is no.  Imagine going around the black dashed curve,
$ C $, 
and attempting to have $ \vec{w}^+ $ be continuous while doing so.  Since $ \vec{w}^+ $ has a jump
discontinuity across $ B $, it will therefore be necessary to choose $ m $ to have the opposite jump
discontinuity for $ m \vec{w}^+ $ to be smooth.  So $ m $ must jump from $ + 1 $ to $ - 1 $
in a certain direction of crossing.  However, since $ \vec{w}^+ $ is continuous
on $ C \setminus B $, to retain this continuity $ m $ must also remain continuous along $ C \setminus B $.
This implies that $ m $ must cross zero at some point in $ C \setminus B $.  Doing so would render the
Lyapunov vector $ \vec{w}^+ $ invalid.  We have therefore established Theorem~\ref{theorem:nonsmooth}
using elementary
geometric means.  We remark that this theorem is analogous to the classical ``hairy ball'' theorem due to
Poincar\'{e} \cite{poincare}.

%%%%%%%%%%%%
\section{Branch cut effects on computations}
\label{sec:branchcut}

If $ \vec{p} $ is a nondegenerate singularity, then the vector field of (\ref{eq:curvegenerate}) with $ m = 1 $ and
the choice of the positive sign ($ \mbox{SORF}_{max}$) will locally have the behavior as shown in Fig.~\ref{fig:branchcut}.  
Now, in general, in finding a $ \mbox{SORF}_{max}$ which passes through $ (x_0,y_0) $, we can implement (\ref{eq:curvegenerate})
for the choice of $ m = 1 $, in  {\em both} directions (increasing and decreasing $ s $), thereby obtaining the
curve which crosses the point.  An equivalent viewpoint is that we implement (\ref{eq:curvegenerate}) with $ m = 1 $,
and $ s > 0 $, and then implement it with $ m = - 1 $ while retaining $ s > 0 $. 

If using (\ref{eq:curvegenerate}) with $ m = +1 $ (globally) and $ \vec{w}^+ $ to generate a $ \mbox{SORF}_{max}$ curve, the vector field in Fig.~\ref{fig:branchcut}(a) must be followed.  
However, it is clear that anything approaching the branch cut $ B $ gets pushed away in the vertical direction.
Thus, $ \mbox{SORF}_{max}$ curves near $ B $ will in general be difficult to find.

The solution appears to be to set $ m = - 1$, which reverses the vector field.  However, this is essentially the diagram in Fig.~\ref{fig:branchcut}(b), corresponding to a left-handed $ \vec{p} $.  
This is of course equivalent to implementing (\ref{eq:curvegenerate}) with $ m = + 1 $ but in the $ s < 0 $ direction.
Curves coming in to $ B $ now get stopped abruptly, because the vector field on the other side of $ B $ directly
opposes the vertical motion.  Thus, curves will not cross $ B $ vertically.  However, since any incoming curve
will in general have a vector field component tangential to $ B $, this will cause a veering along the curve  $ B $.
The curve will continue along $ B $, because the vector field pushes in on to $ B $ vertically, preventing departure from it.
Thus when numerically finding $ \mbox{SORF}_{max}$ curves, curves which appear to tangentially approach the branch cut
$ B $ will be seen.  These curves are {\em not} real $ \mbox{SORF}_{max}$ curves because, as is clear from Fig.~\ref{fig:branchcut}, the actual vector field
is {\em not} necessarily tangential to $ B $.  That is, the branch cut is not necessarily a streamline of the direction
field $ \theta^+ $.

A similar analysis (not shown) indicates that if using $ \vec{w}^- $ (as suggested via Theorem~\ref{theorem:min}) to generate $ \mbox{SORF}_{min}$ curves, then these curves will not cross $ B $ {\em horizontally}, and
also have the potential for tangentially approaching $ B $ in a spurious way.  Notice moreover that, while we
have discussed the branch cut locally near $ \vec{p} $, these objects extend through $ \Omega_0 $,
potentially connecting with several singularities.

Finally, suppose there are parts of  $ B $ that are two-dimensional regions.  In such regions, 
Fig.~\ref{fig:fourquadrant}(a) indicates that the angle field $ \theta^+ $ is
vertical; alternatively, see (\ref{eq:thetaplus_zero}).  Consequently, $ \theta^- $ is horizontal everywhere.  
However, numerical issues as above will occur when crossing the one-dimensional boundary $ \bar{B} \setminus B $,
due to the inevitable issue of the reversal of the vector field along at least one part of this boundary.

%%%%%%%%%%%%
\bibliographystyle{elsarticle-num} 
 \bibliography{foliation}

\end{document}